\newtheorem{theorem}{Theorem}[section]
\newtheorem{proposition}[theorem]{Proposition}
\newtheorem{lemma}[theorem]{Lemma}
\newtheorem{corollary}[theorem]{Corollary}
\theoremstyle{definition}
\newtheorem{definition}[theorem]{Definition}
\newtheorem{remark}[theorem]{Remark}
\numberwithin{equation}{section}
\date{}
\newcommand{\D}{\mathsf{D}}
\DeclareMathOperator{\gh}{\mathit{gh}}
\DeclareMathOperator{\pa}{\mathit{p}}
\DeclareMathOperator{\ad}{ad}
\DeclareMathOperator{\pr}{pr}
\DeclareMathOperator{\Tot}{Tot}
\DeclareMathOperator{\TW}{TW}
\DeclareMathOperator{\MC}{MC}
\DeclareMathOperator{\Sym}{Sym}
\DeclareMathOperator{\End}{End}
\newcommand{\A}{\mathcal{A}}
\newcommand{\B}{\mathcal{B}}
\newcommand{\F}{\mathcal{F}}
\newcommand{\CC}{\mathcal{C}}
\newcommand{\Z}{\mathbb{Z}}
\newcommand{\N}{\mathbb{N}}
\newcommand{\R}{\mathbb{R}}
\newcommand{\C}{\mathbb{C}}
\newcommand{\CO}{\mathcal{O}}
\newcommand{\p}{\partial}
\newcommand{\half}{\tfrac{1}{2}}
\newcommand{\U}{\mathcal{U}}
\newcommand{\V}{\mathcal{V}}
\newcommand{\W}{\mathcal{W}}
\newcommand{\X}{\mathsf{X}}
\newcommand{\h}{\mathsf{H}}
\renewcommand{\[}{(\!(}
\renewcommand{\]}{)\!)}
\renewcommand{\S}{\mathsf{S}}
\renewcommand{\SS}{\mathbb{S}}
\newcommand{\eps}{\varepsilon}
\newcommand{\om}{\omega}
\newcommand{\Om}{\Omega}
\newcommand{\tint}{{\textstyle\int}}
\renewcommand{\o}{\otimes}
\newcommand{\Cech}{\v{C}ech}
\newcommand{\bull}{\bullet}
\newcommand{\Linf}{$L_\infty$}
\renewcommand{\d}{\mathsf{d}}
\renewcommand{\SS}{\mathbb{S}}
\renewcommand{\ss}{\mathbf{s}}
\renewcommand{\t}{\mathsf{t}}
\newcommand{\s}{\mathsf{s}}
\newcommand{\f}{\mathsf{f}}
\newcommand{\g}{\mathsf{g}}
\renewcommand{\AA}{\mathbf{A}}
\newcommand{\<}{\langle}
\renewcommand{\>}{\rangle}
\renewcommand{\L}{\mathcal{L}}
\newcommand{\OM}{\mathsf{w}}
\title{Covariance in the Batalin--Vilkovisky formalism and the
  Maurer--Cartan equation for curved Lie algebras}
\author{Ezra Getzler}
\date{}
\address{Department of Mathematics, Northwestern University,
    Evanston, Illinois, USA}
\email{getzler@northwestern.edu}
\begin{document}

\begin{abstract}
  \setstretch{1.05}
  We express covariance of the Batalin--Vilkovisky formalism in
  classical mechanics by means of the Maurer--Cartan equation in a
  curved Lie superalgebra, defined using the formal variational
  calculus and Sullivan's Thom--Whitney construction. We use this
  framework to construct a Batalin--Vilkovisky canonical
  transformation identifying the Batalin--Vilkovisky formulation of
  the spinning particle with an AKSZ field theory.
\end{abstract}

\maketitle

\section{Introduction}

In the Batalin--Vilkovisky formalism, a classical field theory is
specified by a solution of the classical master equation
\begin{equation}
  \label{master}
  \half \tint ( S , S ) = 0 .
\end{equation}
Alexandrov et al.\ \cite{AKSZ} have studied a particularly important
class of solutions of this equation, known as AKSZ field theories. An
AKSZ field theory in dimension $d$ is a non-linear sigma-model in
which the target is a graded supermanifold $M$ with a shifted
symplectic form of ghost number $d-1$. There is a function $W$ on $M$
of ghost number $d$ satisfying the Maurer--Cartan equation
\begin{equation*}
  \half \{ W , W \} = 0 .
\end{equation*}
Thus, $W$ determines a Hamiltonian vector field $Q(f)=\{W,f\}$ on $M$
of ghost number $1$ and odd parity, which is cohomological:
\begin{equation*}
  Q^2 = \half [Q,Q] = 0 .
\end{equation*}
Important examples of AKSZ field theories are Chern--Simons theory
(Axelrod and Singer \cite{AS}) and the Poisson sigma-model (Cattaneo
and Felder \cites{CF1,CF2}).

In this paper, we restrict attention to field theories with $d=1$, in
other words, classical mechanics. Our main constructions should have
analogues in all dimensions, but our application, showing that the
particle and spinning particle possess hidden AKSZ field theories,
only requires the formalism in $d=1$, and we will focus our attention
on that case.

In Sections 2 and 3, we recall some needed background results on
curved Lie algebras and the formal variational calculus.

In Section 4, we show that an AKSZ field theory with $d=1$, associated
to a graded supermanifold $M$ and an exact symplectic form $\om=d\nu$,
gives rise to a Maurer--Cartan element for a certain curved Lie
superalgebra: we call such Maurer--Cartan elements (classical)
covariant field theories.

To incorporate covariant field theories with topological terms, where
the symplectic form $\om$ is no longer exact, we introduce the
Thom--Whitney totalization for cosimplicial curved Lie superalgebras
in Section 5. The Thom--Whitney totalization replaces the rather rigid
homotopies of piecewise linear topology with the more flexible
homotopies of de~Rham theory. In this setting, we associate a
covariant field theory to a graded supermanifold $M$ together with the
following data: a symplectic form $\om\in\Om^2(M)$, a cover
$\U=\{U_\alpha\}$ of $M$, and one-forms $\nu_\alpha\in\Om^1(U_\alpha)$
such that $d\nu_\alpha=\om$.

Topological terms of this type do not occur in AKSZ models when $d>1$,
since the target symplectic form has ghost number $d-1$, and hence is
exact. In fact, our motivation for introducing the Thom--Whitney
totalization in Section 4 is the hope that using it, the superstring
may be understood as a generalized AKSZ model, in the sense that it
extends to a covariant field theory. In \cite{superparticle}, we study
this problem in the setting of the toy model of the superparticle
(though admittedly still with $d=1$).

After a Batalin--Vilkovisky canonical transformation, the covariant
field theory for a particle moving in a curved spacetime with the AKSZ
model introduced in \cite{curved}. In the introduction, we explain
this in the special case of a particle moving in a flat
background. The Lagrangian of this theory is as follows:
\begin{equation}
  \label{S0}
  S_0 = p_\mu \p x^\mu - \half \eta^{\mu\nu} e p_\mu p_\nu .
\end{equation}
The fields $( x^\mu , p_\mu )_{1\le\mu\le n}$ are the coordinates of
the flat space in which the particle moves, and their conjugate
momenta. The remaining (non-propagating) field of the theory is the
graviton, a nowhere-vanishing one-form $e$ on the world-line.

The solution to the Batalin--Vilkovisky master equation for the
particle, extending the Lagrangian \eqref{S0}, incorporates an
additional field, the ghost $c$. This is a fermionic field of ghost
number one, transforming as a world-line vector field, and is
associated to the covariance of the theory under diffeomorphism of the
world-line. The corresponding antifield $c^+$ is a bosonic field of
ghost number $-2$ transforming as a world-line quadratic
differential. Introduce the expression
\begin{equation*}
  \D = x^+_\mu \p x^\mu + p^{+\mu} \p p_\mu - e \p e^+ + c^+ \p c ,
\end{equation*}
of ghost number $-1$. Consider the Lagrangian $S=S_0+S_1$, where
$S_1=c\D$. It is straightforward to check that the action $\int S\,dt$
satisfies the classical master equation \eqref{master}. Form the
graded Lie superalgebra of polynomials in a variable $u$ of degree $2$
with coefficients in the Batalin--Vilkovisky graded Lie
superalgebra. The element $u\tint\D$ lies in the centre of this graded
Lie algebra, and we may form the curved Lie superalgebra with the same
underlying graded Lie algebra but with nonzero curvature
$u\tint\D$. Then
\begin{equation}
  \label{mc-particle}
  \tint \S = \tint ( S + uc^+ )
\end{equation}
is a Maurer--Cartan element in this curved Lie algebra, in other
words, a covariant field theory. This means that it satisfies the
perturbation of the classical master equation \eqref{master}
\begin{equation*}
  \half \tint ( S , S ) = - u\tint\D .
\end{equation*}

This field theory bears some resemblance to a Chern--Simons field
theory. Recall that the Batalin--Vilkovisky
extension of the Lagrangian for Chern-Simons theory (Axelrod and
Singer \cite{AS}) may be expressed in terms of a composite field
\begin{equation*}
  \AA = c + A + A^+ + c^+ .
\end{equation*}
Here, $A$ is the Chern-Simons field, a connection form on the
3-manifold $M$ for the Lie algebra $\g$, $c\in\Om^0(M,\g)$ is the
ghost field for local gauge transformations, and $A^+\in\Om^2(M,\g)$
and $c^+\in\Om^3(M,\g)$ are their respective antifields. The top
degree component of the differential form
\begin{equation*}
  \half \< \AA , d\AA \> + \tfrac16 \< \AA , [\AA,\AA] \> ,
\end{equation*}
is the Batalin--Vilkovisky Lagrangian $S=S_0+S_1$ of the Chern-Simons
theory:
\begin{align*}
  S_0 &= \half \< A , dA \> + \tfrac16 \< A , [A,A] \>
  &
  S_1 &= \< A^+ , dc +  [A,c] \> + \half \< c^+ , [c,c] \> .
\end{align*}

In order to see that the complete action $\int S$ of the particle has
a hidden AKSZ structure, we apply to it a sequence of canonical
transformations.  Consider the flow $\Phi_\tau$ associated to the
Hamiltonian $c x^+_\mu p^{+\mu}$: this is the solution to the ordinary
differential equation
\begin{equation*}
  \frac{d(\Phi_\tau^*f)}{d\tau} = ( cx^+_\mu p^{+\mu} , f ) ,
\end{equation*}
and is given by the explicit formula
\begin{align*}
  \Phi_\tau^*(x^\mu)
  &= x^\mu + \tau c p^{+\mu}
  &
  \Phi_\tau^*(x^+_\mu)
  &= x^+_\mu \\
  \Phi_\tau^*(p_\mu)
  &= p_\mu - \tau c x^+_\mu
  &
 \Phi_\tau^*(p^{+\mu})
  &= p^{+\mu} \\
  \Phi_\tau^*(e)
  &= e
  &
  \Phi_\tau^*(e^+)
  &= e^+ \\
  \Phi_\tau^*(c)
  &= c
  &
    \Phi_\tau^*(c^+)
  &= c^+ + \tau x^+_\mu p^{+\mu}
\end{align*}
Under this flow, the densities $S_0$ and $S_1$ transform as follows:
\begin{align*}
  \Phi_\tau^*\S_0
  &= S_0 - \tau c ( x_\mu^+ \p x^\mu + p^{+\mu} \p p_\mu ) + \tau \p(c
    p^{+\mu}p_\mu) + \tau^2 c\p c x_\mu^+p^{+\mu} +
    \tau\eta^{\mu\nu} e c p_\mu x^+_\nu  \\
  \Phi_\tau^*\S_1
  &= S_1 - \tau c\p c x^+_\mu p^{+\mu} .
\end{align*}
Let $\Phi$ be the canonical transformation obtained by evaluating the
flow $\Phi_\tau$ at $\tau=1$: we see that
\begin{equation*}
  \Phi^*S = S_0 + ec ( \eta^{\mu\nu} p_\mu x^+_\nu - \p e^+ ) +
  c^+c\p c + \p(c p_\mu p^{+\mu}) .
\end{equation*}

Next, consider the canonical transformation $\Psi$ which leaves the
fields $x^\mu$ and $p_\mu$ and their antifields fixed, and acts on the
remaining fields by the formulas
\begin{align*}
  \Psi^*e &= e & \Psi^*e^+ &= e^+ + e^{-1} c^+c &
  \Psi^*c &= e^{-1}c & \Psi^*c^+ &= ec^+ .
\end{align*}
Formally, this is the value of the flow $\Psi_t$ generated by the
Hamiltonian $\log(e)c^+c$ at $\tau=1$. The canonical transformation
$\Xi=\Phi\circ\Psi$ obtained by composing $\Phi$ and $\Psi$ transforms
the complete Lagrangian $S$ as follows:
\begin{equation*}
  \Xi^*S = \Psi^*\Phi^*S = S_0 + c \bigl( \eta^{\mu\nu} p_\mu
  x^+_\nu - \p e^+ \bigr) + \p \bigl( c ( p_\mu p^{+\mu} + ee^+ )
  \bigr) .
\end{equation*}
After this transformation, the Maurer--Cartan element
\eqref{mc-particle} becomes
\begin{equation*}
  \tint \Xi^*\S = \tint \bigl( p_\mu \p x^\mu - \half \eta^{\mu\nu} e
  p_\mu p_\nu + c \bigl( \eta^{\mu\nu} p_\mu x^+_\nu - \p e^+ \bigr) +
  u ( x^+_\mu p^{+\mu} + e c^+ ) \bigr) .
\end{equation*}
In terms of the composite fields
\begin{align*}
  \mathbf{x}^\mu
  &= x^\mu+dt\,p^{+\mu}
  &
    \mathbf{p}_\mu
  &= p_\mu - dt\,x^+_\mu \\
  \mathbf{c}
  &= c-dt\,e
  &
    \mathbf{b}
  &= e^++dt\,c^+
\end{align*}
we see that $\Xi^*S$ equals the coefficient of $dt$ in the
differential form
\begin{equation*}
  \mathbf{p}_\mu d\mathbf{x}^\mu +
  \mathbf{c} d\mathbf{b} + \half \eta^{\mu\nu} \mathbf{c}
  \mathbf{p}_\mu \mathbf{p}_\nu ,
\end{equation*}
modulo total derivatives. In summary, the particle embeds, by an
explicit canonical transformation, in an AKSZ field theory with fields
$\{\mathbf{x}^\mu,\mathbf{p}_\mu,\mathbf{c},\mathbf{b}\}$. A similar
transformation, for three-dimensional gravity, has been studied by
Cattaneo, Schiavina and Selliah \cite{CSS}.

In passing, we note that the canonical transformation $\Xi$ is the
value at $\tau=1$ of the flow $\Xi_\tau$ associated to the
Batalin--Vilkovisky Hamiltonian
\begin{equation}
  \label{Xi}
  \frac{\log(e)}{e-1} \, c \bigl( ( x^+_\mu p^{+\mu} + ec^+ ) - c^+
  \bigr) .
\end{equation}

In Section~6, we will show that the above remarks may be generalized
to a general covariant field theory coupled to the gravity multiplet
$(e,c)$. In particular, this includes the case of a particle in a
curved spacetime with a background electromagnetic field. 

In Section~7, we turn to the spinning particle, which we have
previously studied in the Batalin--Vilkovisky formalism
\cite{curved}. The spinning particle is a toy model for a
supersymmetric sigma-model coupled to supergravity, in which the
worldline (or spacetime) is reduced from two to one dimensions. (The
corresponding quantum system has Hamiltonian the square of the Dirac
operator.) The fields of this model, in addition to
$\{x^\mu,p_\mu,e,c\}$, comprise fermionic fields $\psi^\mu$ and $\chi$
and the bosonic ghost $\gamma$, supersymmetric partners to $x^\mu$,
$e$ and $c$ respectively. In a flat background, the Lagrangian of the
spinning particle equals
\begin{equation*}
  S_0 = p\p x + \half \psi\p\psi - \half e p^2 + \chi p\psi ,
\end{equation*}
and the associated solution to the classical master equation is
\begin{align}
  \label{Spsi}
  S &= S_0 + c(x^+\p x+p^+\p p+\psi^+\p\psi -e\p e^++c^+\p c
      - \chi\p\chi^++\gamma^+\p\gamma) \notag \\
    &\quad - \gamma ( \p\chi^+ - p\psi^+ + \psi x^+ + 2\chi e^+ ) \\
    &\quad +
      e^{-1}\gamma^2(c^+-x^+p^+-\half\psi^+\psi^+-\chi\gamma^+) .
      \notag
\end{align}
Let $\Xi_\tau$ be the flow associated to the Batalin--Vilkovisky
Hamiltonian
\begin{equation*}
  \frac{\log(e)}{e-1} \, c \bigl( ( x^+_\mu p^{+\mu} + \half
  \eta^{\mu\nu} \psi^+_\mu \psi^+_\nu + ec^+ + \chi\gamma^+ ) - c^+
  \bigr)
\end{equation*}
generalizing \eqref{Xi}, and let $\Xi=\Xi_{\tau=1}$ be the value of
the flow at $\tau=1$. After transformation by the Batalin--Vilkovisky
canonical transformation $\Xi$, the Lagrangian \eqref{Spsi} becomes
the AKSZ field theory
\begin{equation*}
  \Xi^*S = S_0 - c( \p e^+ - px^+ ) - \gamma ( \p\chi^+ - p\psi^+ + \psi
  x^+ + 2\chi e^+ ) + \gamma^2c^+ .
\end{equation*}
In addition to the previous composite fields
$(\mathbf{x},\mathbf{p},\mathbf{c},\mathbf{b})$ associated to the
particle, we now have the additional composite fields
\begin{align*}
  \boldsymbol{\psi}^\mu
  &= \psi^\mu+dt\,\eta^{\mu\nu} \psi^+_\nu
  &
  \boldsymbol{\gamma}
  &= - \gamma + dt\,\chi
  &
    \boldsymbol{\beta}
  &= \chi^++dt\,\gamma^+ .
\end{align*}
Up to a total derivative, the Lagrangian $\Xi^*S$ equals the
coefficient of $dt$ in the expression
\begin{equation*}
  \mathbf{p}_\mu d\mathbf{x}^\mu - \half \eta_{\mu\nu}
  \boldsymbol{\psi}^\mu d\boldsymbol{\psi}^\nu + \mathbf{c} d\mathbf{b}
  + \boldsymbol{\gamma}d\boldsymbol{\beta} + \half \eta^{\mu\nu} \mathbf{c}
  \mathbf{p}_\mu \mathbf{p}_\nu + \boldsymbol{\gamma}
  \mathbf{p}_\mu \boldsymbol{\psi}^\mu + \mathbf{b} \boldsymbol{\gamma}^2 .
\end{equation*}
In Section 7, we carry out the above construction for a more general
class of covariant field theories coupled to the supergravity multiplet
$(e,c,\chi,\gamma)$: this class of theories includes the spinning
particle with curved target, as in \cite{curved}.

\section{The Maurer--Cartan equation in a curved Lie superalgebra}

Let $L^\bull$ be a $\Z$-graded superspace: adopting the language of
theoretical physics, we say that an element $x\in L^k$ has ghost
number $k$, and write $\gh(x)=k$. Furthermore, $L^k$ has a
$\Z/2$-grading, making it into a superspace: we call this grading the
parity, and write $\pa(x)\in\{0,1\}$. We will also say that $x$ is
even (respectively odd) if $\pa(x)=0$ (respectively $1$). A graded
vector space is a special case of a graded superspace, in which the
ghost number and parity are congruent modulo $2$.

A 1-shifted curved Lie superalgebra is a graded superspace with the
following data (all of which have odd parity):
\begin{enumerate}[1)]
\item an element $R\in L^1$ (the curvature);
\item a linear operation $d:L^k\to L^{k+1}$ (the differential);
\item a bilinear operation $(-,-):L^k\times L^\ell\to L^{k+\ell+1}$
  (the antibracket).
\end{enumerate}
The axioms are as follows:
\begin{enumerate}[a)]
\item (the Bianchi identity) $dR=0$;
\item (the curvature identity) for all $x\in L^\bull$,
  \begin{equation*}
    d^2x = (R,x) ;
  \end{equation*}
\item (the Leibniz identity) for all $x,y\in L^\bull$,
  \begin{equation*}
    d(x,y)=(dx,y)+(-1)^{\pa(x)+1}(x,dy) ;
  \end{equation*}
\item (antisymmetry) for all $x,y\in L^\bull$,
  \begin{equation*}
    (y,x) = - (-1)^{(\pa(x)+1)(\pa(y)+1)} (x,y) .
  \end{equation*}
\item (the Jacobi rule) for all $x,y,z\in L^\bull$,
  \begin{equation*}
    (x,(y,z))=((x,y),z)+(-1)^{\pa(x)+1}(y,(x,z)) .
  \end{equation*}
\end{enumerate}
All curved Lie superalgebras considered in this paper are 1-shifted.

Let $L^\bull$ be a curved Lie superalgebra. If $x\in L^k$, we denote
the operation
\begin{equation*}
  y \mapsto (x,y) : L^\bull \to L^{\bull+k+1}
\end{equation*}
by $\ad(x)$. A curved Lie superalgebra $L$ is nilpotent if, for every
odd element $x\in L^\bull$, the endomorphism $\ad(x)$ is nilpotent.

A Maurer--Cartan element in a curved Lie algebra is an even element
$x\in L^0$ such that the following equation holds:
\begin{equation*}
  R + dx + \half (x,x) = 0 .
\end{equation*}
The set of all Maurer--Cartan elements is denoted $\MC(L)$. The
importance of Maurer--Cartan elements stems from the following result.
\begin{lemma}
  If $x\in\MC(L)$, the operator $d+\ad(x):L^\bull\to L^{\bull+1}$ is a
  differential (a graded derivation of square zero).
\end{lemma}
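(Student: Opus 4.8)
The plan is to verify separately the two properties bundled into the word ``differential'': that $D:=d+\ad(x)$ is a graded derivation of the antibracket, and that $D^2=0$. For the derivation property I would note that $d$ is a graded derivation by the Leibniz identity, and that $\ad(x)$ is a graded derivation by the Jacobi rule --- since $x$ has ghost number $0$ and even parity, $\ad(x)$, like $d$, raises ghost number by $1$ and is odd, and the Jacobi rule, read with its first argument fixed to be $x$, is exactly the graded Leibniz rule for $\ad(x)$. A sum of two graded derivations of the same ghost degree and parity is again one, so $D$ is a graded derivation.

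For $D^2=0$ I would expand
\begin{equation*}
  D^2 = d^2 + \bigl( d\circ\ad(x) + \ad(x)\circ d \bigr) + \ad(x)^2
\end{equation*}
and handle the three groups of terms in turn, using $\pa(x)=0$ throughout. First, $d^2=\ad(R)$ is the curvature identity. Second, applying the Leibniz identity to $d(x,y)$ and using $\pa(x)=0$ gives $d(x,y)+(x,dy)=(dx,y)$, i.e.\ $d\circ\ad(x)+\ad(x)\circ d=\ad(dx)$. Third, applying the Jacobi rule to $(x,(x,y))$ with $\pa(x)=0$ gives $2(x,(x,y))=((x,x),y)$, i.e.\ $\ad(x)^2=\half\ad((x,x))$. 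By linearity of $\ad(-)$, summing the three contributions yields $D^2=\ad\bigl(R+dx+\half(x,x)\bigr)$, which vanishes because $x$ is a Maurer--Cartan element. Put differently, twisting the curved Lie superalgebra by $x$ produces a curved Lie superalgebra with zero curvature.

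I do not expect a genuine obstacle: the proof is a direct manipulation of the five axioms together with the Maurer--Cartan equation. The only delicate point is the sign bookkeeping forced by the parity conventions --- in particular, confirming that the sign in the Leibniz rule for $\ad(x)$ matches that for $d$, so that $D$ really is a single graded derivation, and checking that the $\pa(x)=0$ specializations in the second and third groups of terms combine into precisely $R+dx+\half(x,x)$ and not some other linear combination of $R$, $dx$, and $(x,x)$.
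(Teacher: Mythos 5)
Your proof is correct and follows essentially the same route as the paper's: the paper likewise expands $(d+\ad(x))^2y$ into $d^2y+d(x,y)+(x,dy)+(x,(x,y))$, uses the curvature and Leibniz identities to reduce this to $(R,y)+(dx,y)+(x,(x,y))$, and concludes with $(x,(x,y))=\half((x,x),y)$ and the Maurer--Cartan equation. The only difference is that you spell out the derivation property and the $\pa(x)=0$ sign checks, which the paper dismisses as evident.
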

\begin{proof}
  It is evident that $d+\ad(x)$ is a graded derivation. Moreover, we
  have
  \begin{align*}
    (d+\ad(x))^2y &= (d+\ad(x))(dy+(x,y)) \\
                  &= d^2y + d(x,y) + (x,dy) + (x,(x,y)) \\
                  &= (R,y) + (dx,y) + (x,(x,y)) .
  \end{align*}
  The proof is completed by observing that
  $(x,(x,y)) = \half ((x,x),y)$.
\end{proof}

In the special case in which the curvature is zero, we recover the
definition of Maurer--Cartan elements in a differential graded Lie
superalgebra.

If $L$ is a curved Lie superalgebra, the space of odd elements of
$L^{-1}$ form a Lie algebra. If $L$ is nilpotent, there is a gauge
action of this Lie algebra on the set of Maurer--Cartan elements,
given by the equation
\begin{equation*}
  x\bull y  = x + \sum_{n=0}^\infty \frac{(-\ad(y))^n(dy+(x,y))}{(n+1)!} .
\end{equation*}
Informally, this formula expresses the conjugation of the differential
$d+\ad(x)$ by the gauge transformation $e^{\ad(y)}$
\begin{equation*}
  d + \ad(x\bull y) = e^{-\ad(y)} \circ ( d + \ad(x) ) \circ e^{\ad(y)} .
\end{equation*}
This explains why the action preserves solutions of the Maurer--Cartan
equation. In particular, if $dy=0$, then $x\bull y=e^{-\ad(y)}x$.

In order to derive the formula for $x\bull y$, one introduces a
parameter $s$, and considers the ordinary differential equation
\begin{align*}
  \frac{d\ad(x\bull sy)}{ds} &= \frac{d\ }{ds} e^{-\ad(sy)} ( d + \ad(x) )
                            e^{\ad(sy)} \\
                          &= [ d + \ad(x\bull sy) , \ad(y) ] \\
                          &= \ad(dy + ( x\bull sy , y ) ) .
\end{align*}
This leads to the consideration of the ordinary differential equation
\begin{equation}
  \label{ode}
  \frac{d(x\bull sy)}{ds} = dy + ( x\bull sy , y ) ,
\end{equation}
with initial condition $x\bull sy=x$ at $s=0$, whose solution is
\begin{equation*}
  x\bull sy  = x + s \sum_{n=0}^\infty
  \frac{(-s\ad(y))^n(dy+(x,y))}{(n+1)!} .
\end{equation*}

The Baker--Campbell--Hausdorff formula gives an expression for the
composition of two gauge transformations. For a proof, see Tao
\cite{Tao}*{Section~1.2}.
\begin{proposition} 
  \label{BCH} 
  If $y,z$ are odd elements of $L^{-1}$ and $x\in\MC(L)$, we have
  $x\bull y\bull z = x\bull(y\ast z)$, where
  \begin{equation*}
    y \ast z = y + \int_0^1 \frac{\ad(y*tz)}{1-e^{-\ad(y*tz)}} z \, dt
  \end{equation*}
  is the solution of the equation
  $e^{\ad(y\ast z)}=e^{\ad(y)}e^{\ad(z)}$.
\end{proposition}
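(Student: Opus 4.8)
The plan is to promote the identity to a one-parameter family and then invoke uniqueness of solutions of the ordinary differential equation \eqref{ode}. Write $w(s)=y\ast(sz)$ and compare the two curves $s\mapsto(x\bullet y)\bullet(sz)$ and $s\mapsto x\bullet w(s)$ in $L^0$, for $s\in[0,1]$. Both equal $x\bullet y$ at $s=0$, since $w(0)=y\ast 0=y$ and the gauge action of $0$ is trivial. By \eqref{ode} the first curve satisfies the affine-linear differential equation $\dot u(s)=dz+(u(s),z)$; so once I show the second curve satisfies the same equation, uniqueness forces the two curves to agree, and evaluation at $s=1$ (where $w(1)=y\ast z$) gives $x\bullet y\bullet z=x\bullet(y\ast z)$.

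This reduces the proposition to two computations. The first is the derivative of $w$: rescaling $z\mapsto sz$ in the integral formula defining $y\ast z$ and changing variables rewrites it as $w(s)=y+\int_0^s\frac{\ad(w(\sigma))}{1-e^{-\ad(w(\sigma))}}\,z\,d\sigma$, whence $\dot w(s)=\frac{\ad(w(s))}{1-e^{-\ad(w(s))}}\,z$. The second is a variational formula for the gauge action: for any smooth curve $u(s)$ of odd elements of $L^{-1}$,
\begin{equation*}
  \frac{d}{ds}\bigl(x\bullet u(s)\bigr)=dv(s)+\bigl(x\bullet u(s),v(s)\bigr),\qquad v(s)=\frac{1-e^{-\ad(u(s))}}{\ad(u(s))}\,\dot u(s).
\end{equation*}
Applying this with $u=w$, and using that $\frac{1-e^{-\ad(w)}}{\ad(w)}$ and $\frac{\ad(w)}{1-e^{-\ad(w)}}$ are mutually inverse power series in the nilpotent operator $\ad(w)$, gives $v(s)=z$ identically, so that $\frac{d}{ds}\bigl(x\bullet w(s)\bigr)=dz+(x\bullet w(s),z)$, as required.

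The substance of the argument is the variational formula, which I would prove by differentiating the closed form $x\bullet u=e^{-\ad(u)}x+\frac{1-e^{-\ad(u)}}{\ad(u)}(du)$ — this is the series in the text, rearranged using $(x,u)=-\ad(u)x$ for $x$ even and $u$ odd. The ingredients are: (a) the Duhamel identity $\frac{d}{ds}e^{-\ad(u(s))}=-\ad(v(s))\,e^{-\ad(u(s))}$, from $e^{-\tau\ad(u)}\ad(\dot u)e^{\tau\ad(u)}=\ad(e^{-\tau\ad(u)}\dot u)$; (b) the relation $[d,\ad(u)]=\ad(du)$, a consequence of the Leibniz identity in which the odd parity of $u$ makes the sign come out right, and which controls how $d$ commutes past $e^{-\tau\ad(u)}$; and (c) the fact that $e^{-\tau\ad(u)}$ is an automorphism of the antibracket, $\ad(u)$ being an even derivation. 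Setting $f(\tau)=\frac{1-e^{-\tau\ad(u)}}{\ad(u)}$, the one point requiring care is that the cross-terms produced by (a)--(c) assemble into $\int_0^1\frac{d}{d\tau}\bigl(f(\tau)(du),f(\tau)(\dot u)\bigr)\,d\tau$, which telescopes to $\bigl(\frac{1-e^{-\ad(u)}}{\ad(u)}(du),v\bigr)$ — exactly the term needed to complete $(x\bullet u,v)$. I expect this bookkeeping to be the main obstacle; everything else is formal, and all the series terminate because $L$ is nilpotent.

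Finally, it is worth recording why a shorter route is not quite available. The informal conjugation identity $d+\ad(x\bullet y)=e^{-\ad(y)}(d+\ad(x))e^{\ad(y)}$, combined with $e^{\ad(y)}e^{\ad(z)}=e^{\ad(y\ast z)}$, yields immediately that $\ad(x\bullet y\bullet z)=\ad(x\bullet(y\ast z))$; but this only identifies the two sides modulo the kernel of $\ad$, i.e.\ modulo the centre of $L$. The differential-equation argument above is what pins down the elements themselves.
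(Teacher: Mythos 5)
The paper does not actually prove this proposition; it defers to Tao's book, where the classical Baker--Campbell--Hausdorff formula is established by exactly the ODE-plus-uniqueness strategy you propose. Your argument is a correct, self-contained adaptation of that strategy to the gauge action on $\MC(L)$: the reparametrization giving $\dot w(s)=\frac{\ad(w(s))}{1-e^{-\ad(w(s))}}z$ is legitimate (the defining integral formula for $y\ast z$ is precisely the integrated form of that ODE), the closed form $x\bull u=e^{-\ad(u)}x+\frac{1-e^{-\ad(u)}}{\ad(u)}(du)$ agrees with the series in the text once one uses $(x,u)=-\ad(u)x$ for $x$ even and $u$ odd, and the variational formula with $v=\frac{1-e^{-\ad(u)}}{\ad(u)}\dot u$ is the standard ``independence of path'' lemma for gauge actions, whose proof via Duhamel, $[d,\ad(u)]=\ad(du)$, and the automorphism property of $e^{-\ad(u)}$ goes through in the curved super setting with the signs you indicate (note that $\ad(u)$ is parity-preserving for odd $u$ because the antibracket is odd, which is why no stray signs appear). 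The one step you rightly flag as delicate --- the telescoping of the cross-terms into $\bigl(\tfrac{1-e^{-\ad(u)}}{\ad(u)}(du),v\bigr)$ --- is exactly the computation in the standard lemma and does close up; and your closing remark is well taken: the informal conjugation identity $d+\ad(x\bull y)=e^{-\ad(y)}(d+\ad(x))e^{\ad(y)}$ only determines $x\bull y\bull z$ modulo the centre of the antibracket, so the ODE argument (or an equivalent device) is genuinely needed to identify the elements themselves rather than their adjoint actions.
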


\section{Formal variational calculus and the classical
  Batalin--Vilkovisky master equation}

Let $M$ be a graded supermanifold, with coordinates
$\{\xi^a\}_{a\in A}$, where $\xi^a$ has ghost number $\gh(\xi^a)\in\Z$
and parity $\pa(\xi^a)\in\Z/2$. Introduce the shifted cotangent bundle
$T^*[1]M$, whose coordinates are the coordinates $\{\xi^a\}_{a\in A}$
of $M$, and dual coordinates $\{\xi_a^+\}_{a\in A}$, of ghost number
\begin{equation*}
  \gh(\xi_a^+) = \gh(\xi^a) - 1 ,
\end{equation*}
and parity
\begin{equation*}
  \pa(\xi_a^+) = 1-\pa(\xi^a) .
\end{equation*}
In the Batalin--Vilkovisky formalism, the coordinates $\xi^a$ are
called fields, and the coordinates $\xi^+_a$ are called
antifields. However, this division is somewhat arbitrary, since we may
just as well exchange the r\^oles of field $\xi^a$ and antifield
$\xi^+_a$. In the work of Batalin and Vilkovisky, it was assumed that
the fields have nonnegative ghost number and the antifields have
negative ghost number, but this proves to be too restrictive in the
setting of AKSZ field theories.

Let $\CO(M)$ be the graded commutative superalgebra of functions on
$M$ (which may be polynomial, rational, analytic, or differentiable,
depending on the setting). Let $\CO_\infty(M)$ be the graded
superspace of all differential expressions in the fields and
antifields, graded by total ghost number, that is, (graded)
polynomials over $\CO(M)$ in the formal derivatives
$\{\p^k\xi^a\}_{k>0}$ of the coordinates with respect to a formal
parameter $t$. In other words, $\CO_\infty(M)$ is the graded
commutative superalgebra of functions on the jet space $J_\infty M$ of
$M$. Note that for now we only consider expressions that carry no
explicit dependence on the variable $t$.

Let $\A(M)$ be the graded commutative superalgebra generated over
$\CO_\infty(M)$ by (graded) polynomials in the derivatives
$\{\p^k\xi^+_a\}_{k\ge0}$ of the antifields. (In fact, one should take
a certain completion of this algebra whereby we allow infinite sums of
terms with decreasing ghost number, but we will be sloppy and neglect
this subtlety here, as we did in \cites{cohomology,curved}. Working
with this completion would not affect the conclusions of those
papers.) This is the graded commutative superalgebra of functions on
the jet space $J_\infty T^*[1]M$.

Introduce the abbreviations
\begin{align*}
  \p_{k,a}
  &= \frac{\p~}{\p(\p^k\xi^a)} : \A^j\to\A^{j-\gh(\xi^a)} , &
  \p^a_k
  &= \frac{\p~}{\p(\p^k\xi^+_a)} : \A^j\to\A^{j-\gh(\xi^+_a)} .
\end{align*}
Let $\p$ be the total derivative with respect to $t$:
\begin{equation*}
  \p = \sum_{k=0}^\infty \bigl( (\p^{k+1}\xi^a) \p_{k,a} +
  (\p^{k+1}\xi^+_a) \p^a_k \bigr) .
\end{equation*}

Let $\phi:M_0\to M_1$ be an \'etale map (local embedding) of graded
supermanifolds, where $M_0$ has coordinates $\{\xi^a\}_{a\in A}$ and
$M_1$ has coordinates $\{\eta^b\}_{b\in B}$: such a map is determined
by functions
\begin{equation*}
  y^b(\xi) \in \CO(M_0)
\end{equation*}
such that $\phi^*\eta^b=y^b(\xi)$. This defines a morphism of algebras
$\phi^*:\CO(M_1)\to\CO(M_0)$, which extends to a morphism
\begin{equation}
  \label{phi}
  \phi^* : \CO_\infty(M_1) \to \CO_\infty(M_0)
\end{equation}
by the requirement that $\p\phi^*=\phi^*\p$, so that
\begin{equation*}
  \phi^*\p^k\eta^b = \p^k y^b(\xi) .
\end{equation*}
In particular,
\begin{equation*}
  \phi^*\p\eta^b = J(\xi)^b_a \p\xi^a ,
\end{equation*}
where $J(\xi)^b_a$ is the Jacobian of $\phi$,
\begin{equation*}
  J(\xi)^b_a = \frac{\p y^b(\xi)}{\p\xi^a} .
\end{equation*}
Since $\phi$ is \'etale, $J$ is invertible. The morphism \eqref{phi}
extends to a morphism
\begin{equation*}
  \phi^*:\A(M_1)\to\A(M_0) ,
\end{equation*}
on setting $\phi^*\eta^+_b = J^{-1}(\xi)^a_b \xi^+_a$, and
\begin{equation*}
  \phi^*\p^k\eta^+_b = \p^k \bigl( J^{-1}(\xi)^a_b \xi^+_a \bigr) .
\end{equation*}

An evolutionary vector field is a graded derivation of the graded
commutative superalgebra $\A(M)$ that commutes with $\p$. In other
words, it is a vector field of the form
\begin{equation*}
  \pr \bigl( X^a \p_a + X_a \p^a \bigr)
  = \sum_{k=0}^\infty \bigl( (\p^kX^a) \p_{k,a} + (\p^kX_a) \p^a_k \bigr) .
\end{equation*}
The evolutionary vector field associated to the expression
$X^a \p_a + X_a \p^a$ by the above formula is called its prolongation.

The Soloviev antibracket on $\A(M)$ is defined by the formula
\begin{equation}
  \label{Soloviev}
  \[ f , g \] = (-1)^{(\pa(f)+1)\pa(\xi^a)}
  \sum_{k,\ell=0}^\infty \bigl( \p^\ell ( \p_{a,k}f ) \, \p^k
  (\p^a_\ell g) + (-1)^{\pa(f)} \p^\ell ( \p^a_kf ) \, \p^k (
  \p_{a,\ell}g ) \bigr) .
\end{equation}
It satisfies the axioms for a graded Lie superalgebra, is linear over
$\p$,
\begin{equation*}
  \[ \p f , g \] = \[ f , \p g \] = \p \[ f , g \] ,
\end{equation*}
and invariant under \'etale changes of coordinates
\cite{Darboux}*{Theorem 4.1}:
\begin{equation*}
  \[ \phi^*f , \phi^*g \] = \phi^* \[ f , g \] .
\end{equation*}

The superspace $\F=\A/\p\A$ of functionals is the graded quotient of
$\A$ by the subspace $\p\A$ of total derivatives. Denote the image of
$f\in\A$ in $\F$ by $\int f$. The Soloviev antibracket $\[f,g\]$
descends to an antibracket
\begin{equation*}
  \tint (f,g)
\end{equation*}
on $\F$, called the Batalin--Vilkovisky antibracket. Denote by
$\delta_a:\F^j\to\A^{j-\gh(\xi^a)}$ and
$\delta^a:\F^j\to\A^{j-\gh(\xi_a^+)}$ the variational derivatives
\begin{align*}
  \delta_a &= \sum_{k=0}^\infty (-\p)^k \circ \p_{k,a} &
  \delta^a &= \sum_{k=0}^\infty (-\p)^k \circ \p^a_k .
\end{align*}

\begin{lemma}
  The Batalin--Vilkovisky antibracket is given by the formula
  \begin{equation*}
    \tint ( f , g ) =  (-1)^{(\pa(f)+1)\pa(\xi^a)}
    \int \bigl( (\delta_af) \, (\delta^ag) + (-1)^{\pa(f)} (\delta^af)
    \, (\delta_ag) \bigr) .
  \end{equation*}
\end{lemma}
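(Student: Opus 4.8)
The plan is to compute $\tint(f,g)$ directly from the Soloviev formula \eqref{Soloviev} by pushing all the $\p^\ell$'s onto one side using integration by parts inside $\F=\A/\p\A$, and then recognizing the resulting operators as the variational derivatives $\delta_a,\delta^a$. Concretely, start from
\begin{equation*}
  \[ f , g \] = (-1)^{(\pa(f)+1)\pa(\xi^a)}
  \sum_{k,\ell} \bigl( \p^\ell(\p_{a,k}f)\,\p^k(\p^a_\ell g)
  + (-1)^{\pa(f)} \p^\ell(\p^a_k f)\,\p^k(\p_{a,\ell}g) \bigr),
\end{equation*}
and pass to $\F$ by applying $\tint$. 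Since $\tint\p h=0$, inside $\tint$ we may integrate by parts: for the first term, use $\p^\ell\alpha\cdot\beta \equiv (-1)^\ell\,\alpha\cdot\p^\ell\beta \pmod{\p\A}$ (up to the usual Koszul sign, which must be tracked carefully) to move the $\ell$ derivatives off $\p_{a,k}f$ and onto $\p^k(\p^a_\ell g)$. Summing over $\ell$ then replaces the factor involving $g$ by $\sum_{k,\ell}(-\p)^\ell\p^k\p^a_\ell g$; but one first sums over $k$ inside $f$: the factor $\sum_k \p^k$ applied after the integration by parts regroups, after a second integration by parts moving $k$-derivatives from $g$'s first factor onto $f$, to give $\sum_k(-\p)^k\p_{a,k}f = \delta_a f$ acting on $f$ and $\sum_\ell(-\p)^\ell\p^a_\ell g = \delta^a g$ acting on $g$. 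The second term is handled identically with the roles of $\p_{a,k}$ and $\p^a_k$ exchanged, producing $(\delta^a f)(\delta_a g)$ with the extra sign $(-1)^{\pa(f)}$.

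More carefully, the cleanest route is probably a two-step reduction rather than a simultaneous one. First show that for any $f$,
\begin{equation*}
  \tint \bigl( \textstyle\sum_k \p^\ell(\p_{a,k}f)\cdot\p^k h_\ell \bigr)
  = \tint \bigl( (\delta_a f)\cdot h \bigr)
\end{equation*}
is not quite right as stated — rather, one establishes the identity at the level of the antibracket by first using $\p$-linearity, $\[\p f,g\]=\p\[f,g\]$, to normalize, and then invoking the fact (implicit in the definition of $\delta_a$) that modulo $\p\A$ one has $f \equiv \sum_k$ (something) only after pairing. The honest statement is: for all $h\in\A$,
\begin{equation*}
  \tint h\,\delta_a(\cdot) \text{-pairings collapse the double sum},
\end{equation*}
because $\tint \p^\ell(\p_{a,k}f)\,\p^k(\p^a_\ell g)$, summed over $k,\ell$, equals $\tint\bigl(\sum_\ell(-\p)^\ell\p^a_\ell g\bigr)\bigl(\sum_k(\text{stuff})\bigr)$ only if we also integrate by parts in $k$ on the $f$-factor, which is legitimate since we are already inside $\tint$. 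Doing both integrations by parts and collecting the signs $(-1)^{k+\ell}$ into the definitions of $\delta_a,\delta^a$ yields exactly
\begin{equation*}
  \tint(f,g) = (-1)^{(\pa(f)+1)\pa(\xi^a)}\int\bigl((\delta_af)(\delta^ag)+(-1)^{\pa(f)}(\delta^af)(\delta_ag)\bigr).
\end{equation*}

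The main obstacle I expect is bookkeeping of the Koszul signs through the repeated integrations by parts: each transfer of a $\p$ across a product of homogeneous elements of $\A$ introduces a sign depending on the parity of the factor it jumps over, and one must check that when the $k$- and $\ell$-derivatives are moved and then resummed into $(-\p)^k$ and $(-\p)^\ell$, the accumulated signs reassemble precisely into the $(-1)^{(\pa(f)+1)\pa(\xi^a)}$ and $(-1)^{\pa(f)}$ prefactors of the claimed formula and nothing extra survives. A secondary point to verify is that all the manipulations are valid termwise in the completion of $\A$ (infinite sums of decreasing ghost number), which is routine given the grading but should be noted. Everything else — $\p$-linearity and well-definedness on $\F$ — is quoted from \cite{Darboux}*{Theorem 4.1} and the surrounding discussion, so no separate argument is needed there.
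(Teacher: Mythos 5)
Your argument is correct, and the paper in fact states this lemma without proof: the intended justification is exactly the double integration by parts you describe, which uncrosses the indices in \eqref{Soloviev} and turns $\sum_{k,\ell}\int\p^\ell(\p_{a,k}f)\,\p^k(\p^a_\ell g)$ into $\int(\delta_af)(\delta^ag)$, picking up precisely the signs $(-1)^{k+\ell}$ that are absorbed into the $(-\p)^k$ and $(-\p)^\ell$ of the variational derivatives. The Koszul-sign bookkeeping you flag as the main expected obstacle is actually vacuous: the total derivative $\p$ is an even derivation, so $\int(\p u)\,v=-\int u\,(\p v)$ holds with no parity-dependent sign, the $f$-factor and $g$-factor are never reordered, and the prefactors $(-1)^{(\pa(f)+1)\pa(\xi^a)}$ and $(-1)^{\pa(f)}$ are inherited unchanged from the Soloviev formula rather than needing to be reassembled.
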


The (Batalin-Vilkovisky) Hamiltonian vector field associated to an
element $\tint f\in\F$ is the evolutionary vector field given by the
formula
\begin{equation*}
  \h_f = \sum_{k=0}^\infty (-1)^{(\pa(f)+1)\pa(\xi^a)}
  \bigl( \p^k (\delta_af) \, \p^a_k + (-1)^{\pa(f)} \p^k (\delta^af)
  \, \p_{k,a} \bigr) .
\end{equation*}
Despite the notation, $\h_f$ only depends on $f$ through its image
$\tint f$ in $\F$.

The following theorem is proved, though not in precisely these terms,
in Olver \cite{Olver}, but we give here a simpler proof, taken from
\cite{curved}.
\begin{theorem}
  \label{hamiltonian}
  The map $f\mapsto\h_f$ is a morphism of graded Lie superalgebras
  from $\F$ to the evolutionary vector fields.
\end{theorem}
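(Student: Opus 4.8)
The plan is to verify the identity
\[
  \h_{\tint(f,g)} = [\h_f,\h_g]
\]
directly, reducing everything to the level of the Soloviev antibracket on $\A(M)$ where no integration by parts is yet involved. First I would record the elementary compatibility between the Hamiltonian construction and the Soloviev bracket: since $\h_f$ depends only on the image $\tint f\in\F$, and since $\[\p f,g\] = \p\[f,g\]$, it suffices to prove the identity for representatives $f,g\in\A$ and to check that both sides annihilate $\p\A$ in the appropriate sense. The key structural fact I would isolate first is that the assignment $f\mapsto\h_f$ factors as $f\mapsto\tint f\mapsto$ (variational derivatives) $\mapsto$ (prolongation), so that $\h_f$ is the prolongation of the evolutionary vector field with components $X^a = \pm(\delta^a f)$ and $X_a = \pm(\delta_a f)$ read off from Lemma~3.\,? (the formula for $\tint(f,g)$). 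Thus $\h_f$ is determined by a section of the (shifted) tangent bundle, and the claim becomes a statement about how these sections bracket.

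Next I would compute $[\h_f,\h_g]$ as a commutator of evolutionary vector fields. Since the prolongation operation $\pr$ is a Lie algebra homomorphism from evolutionary vector fields (with the obvious bracket on their generating components) to derivations commuting with $\p$, it is enough to compute the bracket of the generating components $X^a\p_a + X_a\p^a$ for $f$ and $g$. Applying $\h_f$ to the components of $\h_g$ and antisymmetrizing, one obtains an expression involving second variational derivatives $\delta_a\delta_b$, $\delta_a\delta^b$, $\delta^a\delta^b$; the symmetric parts cancel by the graded symmetry of mixed partials, and what survives should be exactly the components of the Hamiltonian vector field attached to $\tint(f,g)$, using the formula of the preceding lemma together with the fact that $\delta_a\tint(f,g)$ can be expanded by the Leibniz rule for variational derivatives. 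The signs are dictated by the parity bookkeeping already fixed in the definitions of $\[-,-\]$, $\tint(-,-)$ and $\h_{-}$.

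The main obstacle is the careful management of the signs and of the formal derivatives $\p^k$ in the commutator computation: $\h_f$ contains infinitely many terms $\p^k(\delta_af)\,\p^a_k$, and when $\h_f$ differentiates a coefficient $\p^k(\delta^ag)$ of $\h_g$ one must use the identity $\p_{j,b}\circ\p^k = \sum \binom{k}{i}\p^{k-i}\circ\p_{j-i,b}$ (and its $\p^b_j$ analogue) to recombine everything back into variational derivatives $\delta_a$, $\delta^a$ of a single functional. I would organize this by working modulo $\p\A$ from the start — i.e.\ prove the identity $\h_{\tint(f,g)} = [\h_f,\h_g]$ as an identity of evolutionary vector fields, but derive the needed component identities by pairing against a test functional and integrating by parts, so that the combinatorial $\p^k$-shuffles collapse. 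An alternative, cleaner route, if the bookkeeping becomes unwieldy, is to use the characterization of $\h_f$ as the unique evolutionary vector field with $\iota_{\h_f}\om = \d(\tint f)$ for the canonical shifted symplectic structure $\om$ on $J_\infty T^*[1]M$ (equivalently, $\h_f = \{\,\tint f,-\,\}$ acting by the Poisson bracket dual to $\tint(-,-)$); then the Jacobi identity for $\tint(-,-)$ — which holds since the Soloviev bracket is a graded Lie bracket and descends — immediately yields $[\h_f,\h_g] = \h_{\tint(f,g)}$, and one only needs to check that $f\mapsto\h_f$ is injective modulo $\p\A$, which is the content of the variational completeness of the pairing $(\delta_a,\delta^a)$.
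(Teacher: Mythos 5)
Your overall strategy---verify $[\h_f,\h_g]=\h_{\tint(f,g)}$ by a direct computation on the generating components---is essentially Olver's original route, and it can be made to work, but as written there is a genuine gap at precisely the step that constitutes the theorem. The generating components of $[\h_f,\h_g]$ are (up to signs) $\h_f(\delta^ag)\mp\h_g(\delta^af)$ and $\h_f(\delta_ag)\mp\h_g(\delta_af)$, and the claim that these coincide with $\delta^a\[f,g\]$ and $\delta_a\[f,g\]$ cannot be obtained from ``the Leibniz rule for variational derivatives'': the operators $\delta_a,\delta^a$ are not derivations of the product on $\A$, and computing the variational derivative of the bilinear expression \eqref{Soloviev} genuinely requires the commutation identity $\p_{j,b}\circ\p^k=\sum_i\tbinom{k}{i}\p^{k-i}\circ\p_{j-i,b}$ you mention, organized into the higher Euler operators $\delta_{k,a},\delta^a_k$. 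Your fallback---establish the component identities by pairing against test functionals, i.e.\ working modulo $\p\A$---cannot close the gap, because the map sending an evolutionary vector field $V$ to the family of functionals $h\mapsto\int V(h)$ has a nontrivial kernel: it contains $\p$ itself, whose generating components $\p\xi^a,\p\xi^+_a$ are nonzero even though $\int\p h=0$ for every $h$. The identity $\int[\h_f,\h_g](h)=\int\h_{\tint(f,g)}(h)$ for all $h$ is in fact an easy consequence of the Jacobi rule for the Soloviev bracket together with $\int\h_k(h)=\tint(k,h)$; the entire content of the theorem is the upgrade from this weak statement to an exact identity of evolutionary vector fields, and since $\p$ is (up to sign) $\h_{\D}$, a discrepancy by a multiple of $\p$ is exactly the kind of error the weak statement cannot exclude. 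The symplectic variant has an analogous problem: what is needed there is not injectivity of $f\mapsto\h_f$ modulo constants and total derivatives (that is Theorem~\ref{faithful}), but injectivity of the contraction $V\mapsto V\mathbin\lrcorner\om$ on evolutionary vector fields together with Cartan-type identities in the variational bicomplex, none of which is set up in the paper.

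The missing ingredient is a uniqueness statement strong enough to perform this upgrade, and the paper supplies it as Lemma~\ref{recursion}: if a finite sum $\sum_k\p^k\t_k$ of evolutionary vector fields composed with powers of $\p$ vanishes, then each $\t_k$ vanishes. Combined with the expansion $\ad(f)=\sum_k\p^k\f_{(k)}$ of \eqref{adf}, in which $\f_{(0)}=\h_f$, the Jacobi rule for the Soloviev bracket yields $\sum_k\p^k\bigl(\sum_\ell[\f_{(\ell)},\g_{(k-\ell)}]-\mathsf{h}_{(k)}\bigr)=0$ for $h=\[f,g\]$, and the lemma extracts the $k=0$ term, which is exactly $[\h_f,\h_g]=\h_{\[f,g\]}$. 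If you carry out your direct computation and recombine the $\p^k$-shuffles into higher Euler operators, you will in effect be forced to reconstruct this decomposition and uniqueness argument; it is the precise substitute for the Leibniz rule you were hoping to invoke.
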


Recall the higher Euler operators of Kruskal et al.\ \cite{KMGZ}:
\begin{align*}
  \delta_{k,a}
  &= \sum_{\ell=0}^\infty \tbinom{k+\ell}{k} \, (-\p)^\ell
    \p_{k+\ell,a} ,
  & \delta^a_k
  &= \sum_{\ell=0}^\infty \tbinom{k+\ell}{k} \, (-\p)^\ell \p^a_{k+\ell} .
\end{align*}
When $k=0$, $\delta_{0,a}=\delta_a$ and $\delta^a_0=\delta^a$ are the
classical variational derivatives.

If $f\in\A$, the differential operator $\ad(f)=\[f,-\]$ associated to
$f$ by the Soloviev antibracket is given by the formula
\cite{curved}*{Proposition~2.1}
\begin{equation}
  \label{adf}
  \ad(f) = \sum_{k=0}^\infty \p^k \f_{(k)} ,
\end{equation}
where $\f_{(k)}$ is the sequence of evolutionary vector fields
\begin{equation*}
  \f_{(k)} = (-1)^{(\pa(f)+1)\pa(\xi^a)} \pr\bigl( \bigl( \delta_{k,a}f
  \bigr) \p^a + (-1)^{\pa(f)} \bigl( \delta^a_kf \bigr) \p_a
  \bigr) .
\end{equation*}
In particular, $\f_{(0)}=H_f$.

The proof of Theorem \ref{hamiltonian} relies on
\cite{curved}*{Theorem~2.1}, which we reformulate for convenience.
\begin{lemma}
  \label{recursion}
  Let $\t_k$, $k\ge0$, be a sequence of evolutionary vector fields
  such that $\t_k=0$, $k\gg0$, and
  \begin{equation*}
    \label{ss}
    \sum_{k=0}^\infty \p^k\t_k = 0 . 
  \end{equation*}
  Then $\t_k=0$ for all $k\ge0$.
\end{lemma}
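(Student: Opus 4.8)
The plan is as follows. Suppose, for contradiction, that not all $\t_k$ vanish, and let $N$ be the largest index with $\t_N\neq 0$, so that $\sum_{k=0}^{N}\p^k\t_k=0$. Since an evolutionary vector field is determined by its characteristic---its values on the coordinates $\xi^a$ and the antifields $\xi^+_a$---it suffices to prove that $\t_N$ annihilates every coordinate; this will contradict the maximality of $N$.

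I would extract this by passing to principal symbols. Viewing $\A=\A(M)$ as the algebra of functions on the jet space $J_\infty T^*[1]M$, the operators $\p$ and the $\t_k$ are first-order differential operators. Writing $\pi_{k,a}$ and $\pi^a_k$ for the symbol variables dual to $\p_{k,a}$ and $\p^a_k$, their principal symbols are
\[
  \sigma(\p)=\sum_{k\ge 0}\bigl((\p^{k+1}\xi^a)\,\pi_{k,a}+(\p^{k+1}\xi^+_a)\,\pi^a_k\bigr),\qquad
  \sigma(\t_k)=\sum_{j\ge 0}\bigl((\p^j\t_k\xi^a)\,\pi_{j,a}+(\p^j\t_k\xi^+_a)\,\pi^a_k\bigr).
\]
Because $\p^k\t_k$ has order at most $k+1$, the only term of $D=\sum_{k=0}^{N}\p^k\t_k$ contributing to the order-$(N+1)$ part of the symbol is $\p^N\t_N$, whose contribution is $\sigma(\p)^N\sigma(\t_N)$. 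Since $D=0$ as an operator, its symbol is zero, and therefore $\sigma(\p)^N\sigma(\t_N)=0$ in the (super)polynomial algebra over $\A$ generated by the symbol variables.

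The remaining step is to read off the characteristic of $\t_N$ from this identity. If $M$ is a point there is nothing to prove, so we may assume there is an \emph{even} coordinate $z$ among $\{\xi^a\}\cup\{\xi^+_a\}$; indeed $\xi^a$ and $\xi^+_a$ always have opposite parity. Write $\pi_z$ for the even symbol variable dual to $\p/\p z$. The coefficient of $\pi_z$ in $\sigma(\p)$ is $\p z$ and in $\sigma(\t_N)$ is $\t_N(z)$, so the coefficient of $\pi_z^{\,N+1}$ in $\sigma(\p)^N\sigma(\t_N)$ is $(\p z)^N\,\t_N(z)$; as $\p z$ is not a zero-divisor in $\A$ (for $z$ even), $\t_N(z)=0$. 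Running this for all even coordinates annihilates $\t_N$ on the even coordinates, and then, for an odd coordinate $z'$, extracting the coefficient of $\pi_z^{\,N}\pi_{z'}$ (with $z$ even) yields $(\p z)^N\,\t_N(z')$ plus a term proportional to $\t_N(z)$, which has already been shown to vanish; hence $\t_N(z')=0$ as well. Therefore $\t_N=0$, contradicting the choice of $N$. The step that requires care is this last one: organising the super-sign bookkeeping and checking exactly which monomials of $\sigma(\p)^N\sigma(\t_N)$ receive contributions; one also has to use at the right place that $\p z$ is a non-zero-divisor only for $z$ even (since $(\p z)^2=0$ for $z$ odd), which is why the odd components of the characteristic must be handled through an auxiliary even coordinate.
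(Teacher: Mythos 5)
Your proof is correct and is essentially the paper's argument in different clothing: extracting the coefficient of $\pi_z^{N+1}$ (resp.\ $\pi_z^N\pi_{z'}$) from the top-order symbol $\sigma(\p)^N\sigma(\t_N)$ is exactly the paper's $(K+1)$-fold commutator of $\sum_k\p^k\t_k$ with an even coordinate $\xi$ (resp.\ the commutator with $\xi^+$ followed by the $K$-fold commutator with $\xi$), which yields $(K+1)!\,(\p\xi)^K\t_K(\xi)=0$ and then $K!\,(\p\xi)^K\t_K(\xi^+)=0$, with the non-zero-divisor property of $\p\xi$ for even $\xi$ used in the same place. The only cosmetic differences are that you argue by contradiction at the maximal index rather than by downward induction, and that you pair an arbitrary odd coordinate with an arbitrary even one rather than each field with its own antifield; neither affects the substance.
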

\begin{proof}
  We prove by downward induction in $k$ that the vector fields $\t_k$
  vanish. Let $K$ be the largest integer such that $\t_K$ is
  nonzero. Let $\xi$ be one of the fields of the theory having even
  parity, and take the $(K+1)$-fold commutator of the left-hand side
  of \eqref{ss} with $\xi$. We obtain
  \begin{equation*}
    (K+1)! \, (\p\xi)^K \t_K(\xi) = 0 .
  \end{equation*}
  It follows that $\t_K(\xi)=0$.

  Next, we take the commutator with the antifield $\xi^+$ followed by
  the $K$-fold commutator with $\xi$: we obtain the equation
  \begin{equation*}
    K! \, (\p\xi)^{K-1} \bigl( \p\xi \, \t_K(\xi^+) +
    K\, \p\xi^+ \, \t_K(\xi) \bigr) = K! \, (\p\xi)^K \t_K(\xi^+) = 0 .
  \end{equation*}
  We conclude that $\t_K(\xi^+)=0$.

  The vanishing of $\t_K(\xi)$ and $\t_K(\xi^+)$ may be proved for
  fields $\xi$ of odd parity by exchanging the r\^oles of $\xi$ and
  its antifield $\xi^+$ in the above argument. In this way, we see
  that $\t_K=0$. Arguing by downward induction, we conclude that
  $\t_k=0$ for all $k\ge0$, proving the lemma.
\end{proof}

\begin{proof}[Proof of Theorem~\ref{hamiltonian}]
  If $\[f,g\]=h$, it follows from \eqref{adf} that
  \begin{equation*}
    \sum_{k=0}^\infty \sum_{\ell=0}^k \p^k[\f_{(\ell)},\g_{(k-\ell)}] =
    \sum_{k=0}^\infty \p^k \mathsf{h}_{(k)} .
  \end{equation*}
  Consider the evolutionary vector fields
  \begin{equation*}
    \t_k = \sum_{\ell=0}^k [\f_{(\ell)},\g_{(k-\ell)}] - \mathsf{h}_{(k)} .
  \end{equation*}
  We are in the situation of Lemma~\ref{recursion}: it follows that
  $\t_k=0$ for all $k\ge0$, and in particular,
  \begin{equation*}
    \t_0 = [\h_f,\h_g]-\h_{\[f,g\]} = 0 .
  \end{equation*}
  Since $\int\[f,g\]=\int(f,g)$, we see that the map $\h$ is a
  morphism of graded Lie superalgebras.
\end{proof}

The following lemma shows that the kernel of the Hamiltonian map
$f\mapsto\h_f$  vanishes except in ghost number $0$, where it equals
the constant multiples of $1$.
\begin{theorem}
  \label{faithful}
  If $\h_f=0$, then $f$ is the sum of a constant and a total
  derivative.
\end{theorem}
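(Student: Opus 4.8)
The plan is to start from the hypothesis $\h_f = 0$ and exploit the formula \eqref{adf} for $\ad(f)$ together with the recursion Lemma~\ref{recursion}. Recall that $\f_{(0)} = \h_f$, so the hypothesis says precisely that the zeroth term in the expansion $\ad(f) = \sum_{k\ge 0} \p^k \f_{(k)}$ vanishes. I would first argue that in fact \emph{all} the higher evolutionary vector fields $\f_{(k)}$ vanish, not just $\f_{(0)}$. The cleanest way to see this is to observe that the collection $\{\h_f\}$ being a Lie algebra morphism (Theorem~\ref{hamiltonian}) is shadowed by a statement about the whole sequence $\f_{(k)}$: the higher Euler operators $\delta_{k,a}$ are built so that the $\f_{(k)}$ are obtained from $\h_f$ by ``integration by parts with remainder.'' Concretely, I expect the identity $\delta_{k,a} = \delta_a \circ (\text{something}) + \dots$ relating $\delta_{k,a} f$ to $\delta_a$ applied to derivatives of $f$; more usefully, one has the telescoping relation $\p_{k-1,a} = \delta_{k-1,a} - (k)\,\p\circ(\text{lower})$, so that $\delta_{k,a} f$ can be recovered from $\delta_a$ of suitable expressions. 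Since $\delta_a f$ and $\delta^a f$ are (up to sign) exactly the components of $\h_f$, vanishing of $\h_f$ forces $\delta_a f = 0$ and $\delta^a f = 0$ for every $a$, hence $\delta_{k,a} f = 0$ and $\delta^a_k f = 0$ for all $k$, hence $\f_{(k)} = 0$ for all $k$.

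With all $\f_{(k)} = 0$ in hand, I would then invoke the classical fact from the variational calculus that $\delta_a f = 0$ for all $a$ (equivalently, all variational derivatives of $f$ vanish) if and only if $f$ is a total derivative up to a constant. This is the standard characterization of the kernel of the Euler--Lagrange operator: the complex $\CO \xrightarrow{} \A \xrightarrow{\delta} \bigoplus_a \A$ is exact at $\A$ modulo constants, i.e. $\ker\bigl(f \mapsto (\delta_a f, \delta^a f)\bigr) = \C\cdot 1 + \p\A$. (One proves it by the usual homotopy: $f - \text{(its value at the origin of jet space)}$ is $\p$ of the explicit operator $\sum_k \frac{1}{k+1} \p^k\xi^a \cdot \delta_{k+1,a}\,(\cdot)$ plus a term involving only $\delta_a f$, which vanishes by hypothesis — this is precisely where the higher Euler operators earn their keep.) Combining this with the previous paragraph yields that $f$ is a constant plus a total derivative, which is the assertion of the theorem.

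The main obstacle, and the step I would spend the most care on, is the first one: rigorously deducing $\delta_{k,a} f = 0$ for \emph{all} $k$ from $\h_f = 0$ alone. Naively $\h_f = \f_{(0)}$ only controls $\delta_{0,a} f = \delta_a f$, and one must genuinely use that $\ad(f)$ is a \emph{well-defined operator} — i.e. that \eqref{adf} holds as an honest identity of differential operators, so that $\sum_k \p^k \f_{(k)}$ equals $\ad(f) = \[f,-\]$, which is determined by $f$. The trick is to apply $\ad(f)$ to the coordinate functions and their derivatives and read off, via Lemma~\ref{recursion} applied in reverse, that the only way $\sum_{k\ge 1}\p^k\f_{(k)}$ can be a first-order ($k=0$-supported) operator is if it vanishes term by term — but here one needs instead the structural input that $\f_{(k)}$ is itself expressible through $\delta_a f, \delta^a f$ via the combinatorial identity relating $\p_{k,a}$ to the $\delta_{j,a}$. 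I would either cite this identity from \cite{KMGZ} or \cite{curved}, or prove it inline by the binomial-coefficient bookkeeping $\sum_{\ell} \binom{k+\ell}{k}(-1)^\ell \p^\ell \p_{k+\ell,a}$, and then the rest of the argument is routine.
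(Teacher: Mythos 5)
Your first step contains a false claim, and it is worth isolating it precisely: the vanishing of $\delta_af$ and $\delta^af$ does \emph{not} imply the vanishing of the higher Euler derivatives $\delta_{k,a}f$ and $\delta^a_kf$ for $k\ge1$, and hence does not give $\f_{(k)}=0$ for $k\ge1$. For a single even field $\xi$, take $f=\xi\,\p\xi=\tfrac12\p(\xi^2)$. Then $\delta f=\p\xi-\p(\p_1f)=\p\xi-\p\xi=0$, so $\h_f=0$; but $\delta_1f=\p_1f=\xi\ne0$, so $\f_{(1)}\ne0$, and correspondingly $\ad(f)\ne0$ (the Soloviev bracket $\[ f,\xi^+\]$ equals $\p\xi$ up to sign). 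The structural reason is that the higher Euler operators do not annihilate total derivatives; rather they satisfy $\delta_{k,a}\circ\p=\delta_{k-1,a}$ for $k\ge1$, so the sequence $(\f_{(k)})_{k\ge1}$ genuinely carries more information than $\h_f=\f_{(0)}$ --- this is precisely why Lemma~\ref{recursion} is needed elsewhere in the paper and why Theorem~\ref{faithful} asserts only that $f$ is a total derivative plus a constant, not that $\ad(f)=0$. No amount of ``shoring up'' along the lines of your third paragraph will rescue the implication, because it is false.

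Fortunately the false step is also superfluous. Your second paragraph uses only $\delta_af=\delta^af=0$, which does follow directly from $\h_f=0$ by evaluating $\h_f$ on the coordinates $\xi^+_a$ and $\xi^a$, and the argument you then sketch --- the homotopy obtained from the rescaling $f_s=f(s\xi,s\p\xi,\dots)$ together with the binomial identity
\begin{equation*}
  \pr\bigl(X^a\p_a+X_a\p^a\bigr)=\sum_{k=0}^\infty\p^k\bigl(X^a\delta_{k,a}+X_a\delta^a_k\bigr),
\end{equation*}
in which the $k=0$ term is the only one not manifestly of the form $\p(\cdots)$ and vanishes by hypothesis --- is exactly the paper's (Olver's) proof. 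One small correction there: for non-homogeneous $f$ the coefficients are not the bare $\tfrac1{k+1}$ but the integrals $\int_0^1(\cdots)_s\,ds$, which is how the paper writes it. Delete your first and third paragraphs and you have the intended argument.
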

\begin{proof}[Proof (Olver \cite{Olver}*{Theorem~4.7})]
  We must show that if $\delta_af=\delta^af=0$, then $f$ is the sum of
  a constant and a total derivative. For $0\le s\le 1$, let $f_s$ be
  the rescaled quantity
  \begin{equation*}
    f_s(\xi,\p\xi,\dots) = f(s\xi,s\p\xi,\dots) .
  \end{equation*}
  It is an exercise in binomial coefficients to show that an
  evolutionary vector field may be written in terms of the higher
  Euler operators:
  \begin{equation*}
    \pr \bigl( X^a \p_a + X_a \p^a \bigr)
    = \sum_{k=0}^\infty \p^k \bigl( X^a \delta_{k,a} + X_a \delta^a_k
    \bigr)
  \end{equation*}
  It follows that
  \begin{align*}
    \frac{df_s}{ds}
    &= \pr( \xi^a\,\p_a+\xi^+_a\,\p^a )f_s \\
    &= \sum_{k=0}^\infty \p^k( \xi^a\,\delta_{k,a}f +
      \xi^+_a\,\delta_k^af )_s .
  \end{align*}
  Integrating over $s$ from $0$ to $1$, we see that
  \begin{equation*}
    f = f_0 + \sum_{k=0}^\infty \biggl( \int_0^1 \p^k(
    \xi^a\,\delta_{k,a}f_s + \xi^+_a\,\delta_k^af_s ) \, ds \biggr) .
  \end{equation*}
  In particular, if $\delta_af=\delta^af=0$, we see that
  \begin{equation*}
    f = f_0 + \sum_{k=1}^\infty \p\biggl( \int_0^1 \p^{k-1}(
    \xi^a\,\delta_{k,a}f_s + \xi^+_a\,\delta_k^af_s ) \, ds \biggr) ,
  \end{equation*}
  proving the lemma.
\end{proof}

A Maurer--Cartan element of $\F$ is a solution $\int S\in\F^0$ of the
classical Batalin--Vilkovisky master equation \eqref{master}. In the
Batalin--Vilkovisky formalism, a Maurer--Cartan element $\tint S$
determines a classical field theory.

There is a more precise formulation of the classical master equation,
obtained by lifting a solution in the space of functionals $\F$ to a
resolution of this space. We review the details of this construction,
taken from \cite{curved} .

Introduce the quotient complex $\tilde{\A}$ of $\A$ by the constants:
\begin{equation*}
  \tilde{\A}^j =
  \begin{cases}
    \A^0/\C , & j=0 , \\
    \A^j , & \text{$j\ne0$} .
  \end{cases}
\end{equation*}
The space of functionals $\F$ has a resolution
\begin{equation*}
  \B^j = \A^j \oplus \tilde{\A}^{j+1}\eps ,
\end{equation*}
where the symbol $\eps$ is understood to have odd parity and ghost
number $-1$, so that the parities of the superspace $\tilde{\A}^{j+1}$
are reversed in $\B^j$. The differential $d:\B^j\to\B^{j+1}$ equals
\begin{equation}
  \label{differential}
  d \bigl( f + g\eps \bigr) = (-1)^{\pa(g)} \, \p g ,
\end{equation}
The Soloviev antibracket extends to $\B$ by the formula
\begin{equation}
  \label{Soloviev2}
  \[ f_0 + g_0\eps , f_1 + g_1\eps \] = \[ f_0 , f_1 \] + \[ f_0
  , g_1 \] \, \eps + (-1)^{\pa(f_1)+1} \, \[ g_0 , f_1 \] \, \eps .
\end{equation}
We have
\begin{equation*}
  d \[ a , b\] = \[ da , b \] + (-1)^{\pa(a)+1} \[ a , db \] ,
\end{equation*}
and the differential graded Lie superalgebra $(\B,d)$ is a resolution
of the graded Lie superalgebra $\F$.

If $\tint S$ is a solution of the classical master equation
\eqref{master}, there is an element $\tilde{S}\in\A^1$ of odd parity
such that
\begin{equation*}
  \half \, \[ S , S \] = \p\tilde{S} .
\end{equation*}
The classical master equation \eqref{master} may be recast as the
Maurer-Cartan equation
\begin{equation}
  \label{MASTER}
  d\S + \half \[ \S , \S \] = 0
\end{equation}
in $\B$, where
\begin{equation*}
  \S = S + \tilde{S}\eps \in \B^0 .
\end{equation*}
This refinement of the classical master equation is closely related to
the modified ckassical master equation of Cattaneo, Mn\"ev and
Reshitikhin \cite{CMR}*{Proposition 3.1}.

Let $\s$ be the Hamiltonian vector field $\h_\S$: this is an
evolutionary vector field of degree $1$. By Theorem~\ref{hamiltonian},
we see that a solution of the classical master equation \eqref{MASTER}
yields an odd Hamiltonian vector field $\s$, of ghost number $1$,
satisfying the relation $\s^2=0$. The differential graded Lie
superalgebra $\B$ with differential $d+\s$ is a resolution of the
differential graded Lie superalgebra $\F$, with differential
$\tint(S,-)$. The cohomology of this complex is the
Batalin--Vilkovisky cohomology of the classical field theory $S$.

\section{Covariant field theories in one dimension: local case}

AKSZ field theories are a class of solutions of the classical master
equation, introduced by Alexandrov et al.\ \cite{AKSZ}. Here, we only
consider the case of one-dimensional AKSZ field theories: these
include the main model of interest to us in this paper, the spinning
particle in a curved background. (The focus in \cite{AKSZ} is rather
on the two and three-dimensional cases.) An AKSZ field theory is
associated with a symplectic form $\om$ on the graded supermanifold
whose coordinates are the fields of the theory.

In this section, we define a curved Lie superalgebra whose
Maurer--Cartan elements consist of a solution of the master equation
\eqref{MASTER}, together with additional structure that expresses
covariance with respect to time translation. In the case of an AKSZ
field theory, the additional structure involves the Poisson tensor
$\pi=\om^{-1}$, and thus incorporates the nondegeneracy of the
symplectic form.

Let $u$ be a variable of ghost number $2$. We consider the graded Lie
algebras of power series in $u$ with coefficients in the graded Lie
algebras $\F$ and $\B$, such that
\begin{equation*}
  \tint (uf,g) = \tint (f,ug) = u \tint (f,g) ,
\end{equation*}
respectively
\begin{equation*}
  \[uf,g\] = \[f,ug\] = u\[f,g\] .
\end{equation*}

The element
\begin{equation*}
  \D = \xi^+_a \p\xi^a \in \A(M)
\end{equation*}
is invariant under changes of coordinates, and its image $\tint\D$ in
$\F$ lies in the centre, that is, $\ad(\tint\D)=0$. Consider the
curved Lie superalgebra $\F[[u]]$ with vanishing differential and
curvature $u\tint\D$. A Maurer--Cartan element of $\F[[u]]$ is a
solution $\tint S_u\in\F[[u]]$, of ghost number $0$, of the equation
\begin{equation}
  \label{aksz}
  \half \tint ( S_u , S_u ) = - u\tint\D .
\end{equation}
Expand $S_u$ in powers of $u$
\begin{equation*}
  S_u = S_0 + uS_1 + u^2S_2 + \dots
\end{equation*}
The Maurer--Cartan equation \eqref{aksz} is equivalent to the
classical master equation \eqref{master} for $S=S_0$, the equation
\begin{equation*}
  \tint ( S_0 , S_1 ) = - \tint \D ,
\end{equation*}
and, for $n>1$, the sequence of equations
\begin{equation*}
  \sum_{k=0}^n \tint ( S_k , S_{n-k} ) = 0 .
\end{equation*}
In \cites{cohomology,curved}, such a structure was found in the case
of the spinning particle: in those papers, $S_0$ was called $S$, $S_1$
was called $G$, while $S_n$ vanished for $n>1$.

The operator $\ad(\D):\A^\bull\to\A^\bull$ is given by the explicit
formula
\begin{equation*}
  \ad(\D) = \p\pr( \xi^+_a \p^a ) - \p .
\end{equation*}
Introduce the graded derivation $\iota$ on $\B$, of degree $-1$,
\begin{equation*}
  \iota \bigl( f + g\eps \bigr) = (-1)^{\pa(f)} \bigl( \pr(
  \xi^+_a \p^a )f - f \bigr) \, \eps .
\end{equation*}
It is easily seen that $d\iota + \iota d = \ad(\D)$ and
$\iota^2=0$. Let $\B[[u]]$ be the curved Lie superalgebra with graded
derivation $d_u = d + u\iota$ and curvature $u\D$.

The following definition is  central to this paper.
\begin{definition}
  A (one-dimensional) covariant field theory is a Maurer--Cartan
  element $\S_u\in\B[[u]]$, that is, an element of ghost number $0$ and
  even parity such that
  \begin{equation}
    \label{AKSZ}
    d_u\S_u + \half \, \[ \S_u , \S_u \] = - u\D .
  \end{equation}
\end{definition}

As in the case of the classical master equation, any solution of
\eqref{aksz} gives rise to a solution of \eqref{AKSZ}.
\begin{proposition}
  Let $\tint S_u\in\F[[u]]$ be a solution of \eqref{aksz}, and choose
  a lift of $\tint S_u$ to an element $S_u\in\A[[u]]$. Let
  $\tilde{S}_u\in\tilde{\A}[[u]]$ be the element determined by the
  equation
  \begin{equation*}
    \half \[ S_u , S_u \] = - u\D + \p\tilde{S}_u .
  \end{equation*}
  Then
  \begin{equation*}
    \S_u = S_u + \tilde{S}_u \, \eps \in \B[[u]]
  \end{equation*}
  is a solution of \eqref{AKSZ}.
\end{proposition}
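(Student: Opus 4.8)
The plan is to verify the Maurer--Cartan equation \eqref{AKSZ} for $\S_u = S_u + \tilde S_u\,\eps$ by decomposing both sides according to the direct sum $\B^j = \A^j \oplus \tilde\A^{j+1}\eps$ and checking the $\A$-component and the $\eps$-component separately, using \eqref{aksz} as the input on the $\A$-side. Since $\S_u$ has ghost number $0$, both $d_u\S_u + \tfrac12\[\S_u,\S_u\]$ and $-u\D$ live in $\B^1 = \A^1 \oplus \tilde\A^2\eps$, so there are exactly two identities to establish.

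First I would compute the $\A^1$-component. From the definition \eqref{differential} of $d$ and of $\iota$, applying $d_u = d + u\iota$ to $S_u + \tilde S_u\eps$ produces only $\eps$-terms, so $d_u\S_u$ contributes nothing in $\A^1$; likewise, by the extension formula \eqref{Soloviev2}, the $\A$-component of $\[\S_u,\S_u\]$ is just $\[S_u,S_u\]$. Hence the $\A^1$-component of the left side of \eqref{AKSZ} is $\tfrac12\[S_u,S_u\]$, and by the defining equation for $\tilde S_u$ this equals $-u\D + \p\tilde S_u$. The $\A^1$-component of the right side is $-u\D$, so matching requires $\p\tilde S_u = 0$ in $\A^1$ — but that is wrong as stated, which signals that I have mis-split the degrees: in fact $\p\tilde S_u \in \A^2$, not $\A^1$, so there is no $\p\tilde S_u$ contamination of the $\A^1$-component at all, and the $\A^1$-identity is exactly $-u\D = -u\D$. (The point is that $\D\in\A^1$, $\tilde S_u\in\tilde\A^1$, $\p\tilde S_u\in\A^2$, consistent with $\tfrac12\[S_u,S_u\]\in\A^2$ — so I should instead be matching in $\B^1$ with the $\A$-part sitting in $\A^1$ and the $\eps$-part sitting in $\tilde\A^2\eps$; let me restate cleanly below.)

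The correct bookkeeping: $S_u\in\A^0$, so $\[S_u,S_u\]\in\A^1$, and the defining equation reads $\tfrac12\[S_u,S_u\] = -u\D + \p\tilde S_u$ with all three terms in $\A^1$, forcing $\tilde S_u\in\tilde\A^0$ and $\p\tilde S_u\in\A^1$. Now $d_u\S_u = (d+u\iota)(S_u+\tilde S_u\eps) = (-1)^{\pa(\tilde S_u)}\p\tilde S_u + u\iota(S_u + \tilde S_u\eps)$; since $\S_u$ is even and $\eps$ is odd, $\tilde S_u$ is odd, so the $\A^1$-part of $d_u\S_u$ is $-\p\tilde S_u$, and the $\eps$-part comes from $u\iota(S_u+\tilde S_u\eps)$. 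By \eqref{Soloviev2}, $\tfrac12\[\S_u,\S_u\]$ has $\A^1$-part $\tfrac12\[S_u,S_u\]$ and an $\eps$-part built from $\[S_u,\tilde S_u\]$. Thus the $\A^1$-component of the left side of \eqref{AKSZ} is $-\p\tilde S_u + \tfrac12\[S_u,S_u\] = -\p\tilde S_u + (-u\D + \p\tilde S_u) = -u\D$, matching the right side. The $\eps$-component identity then states $u\iota(S_u + \tilde S_u\eps) + (\text{$\eps$-part of }\tfrac12\[\S_u,\S_u\]) = 0$, which I would verify by expanding $\iota$ via its definition, using that $\iota(f) = (-1)^{\pa(f)}(\pr(\xi^+_a\p^a)f - f)\eps$, that $\ad(\D)$ acts as $\p\pr(\xi^+_a\p^a) - \p$ on $\A$, and the identity $d\iota + \iota d = \ad(\D)$; applying $\ad$ of the master equation $\tfrac12\[S_u,S_u\] = -u\D + \p\tilde S_u$ and using $\ad(\D) = d\iota + \iota d$ together with linearity of the antibracket over $\p$ should collapse the $\eps$-terms to zero.

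The main obstacle I expect is the sign-and-reindexing care in the $\eps$-component: one must track the odd parity of $\eps$ and of $\tilde S_u$ through \eqref{Soloviev2}, the $(-1)^{\pa(g)}$ in \eqref{differential}, and the $(-1)^{\pa(f)}$ in the definition of $\iota$, then confirm that the resulting expression is precisely $\iota$ applied to the master equation and hence vanishes because $\iota$ kills $-u\D + \p\tilde S_u$ up to the $d$-exact ambiguity — more precisely, one uses that for $a\in\A$ the $\eps$-part of $\tfrac12\[a+\tilde a\eps, a+\tilde a\eps\]$ is $\[a,\tilde a\]\eps$ (with appropriate sign), and that $\iota$ is a derivation for the extended bracket so that applying $u\iota$ to $\tfrac12\[\S_u,\S_u\]|_{\A} + \ldots$ reproduces exactly the cross term. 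Concretely the cleanest route is: note $\iota$ is a graded derivation of the bracket on $\B$ of degree $-1$ with $\iota^2=0$ and $d\iota+\iota d=\ad(\D)$, so $d_u$ is a curved differential with curvature $u\D$ on $\B[[u]]$ (this is asserted just before the Definition), and then $\S_u = S_u + \tilde S_u\eps$ satisfies $d_u\S_u + \tfrac12\[\S_u,\S_u\] = (d+u\iota)(S_u+\tilde S_u\eps) + \tfrac12\[S_u,S_u\] + (\eps\text{-terms})$; the $\A$-part was checked above, and the $\eps$-part equals $\iota$ applied to the equation $\tfrac12\[S_u,S_u\] + u\D = \p\tilde S_u$ modulo $d_u$-exactness, which vanishes since the right side is $d$-exact and $\iota d = \ad(\D) - d\iota$ shuffles it into the required form. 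This reduces the whole proposition to the identity $d\iota + \iota d = \ad(\D)$ already granted in the text, plus routine sign verification.
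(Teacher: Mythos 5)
Your computation of the $\A$-component is correct and agrees with the paper's: after your self-correction, the cancellation $-\p\tilde S_u+\half\[S_u,S_u\]=-u\D$ is exactly the first step of the proof, which reduces everything to the residual $\eps$-term
\begin{equation*}
  u\iota(S_u) + \[ S_u , \tilde S_u \] \eps .
\end{equation*}
The gap is in how you dispose of this term. First, it is not ``$\iota$ applied to the master equation modulo $d_u$-exactness'': applying the degree $-1$ derivation $\iota$ to $\half\[S_u,S_u\]+u\D-\p\tilde S_u=0$ produces terms of the form $\[\iota S_u,S_u\]$, $u\iota(\D)$ and $\iota(\p\tilde S_u)$, none of which is the cross-term $\[S_u,\tilde S_u\]\eps$ you need to cancel, so the ``cleanest route'' at the end of your proposal does not compute the right object. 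The correct mechanism --- which you half-gesture at with ``applying $\ad$ of the master equation'' --- is to apply $\p$ to the residual: using $\p\[f,g\]=\[f,\p g\]$ and $\ad(\D)=\p\circ\pr(\xi^+_a\p^a)-\p$, one finds
\begin{equation*}
  u\,\p\iota(S_u) + \p\[ S_u , \tilde S_u \]\,\eps
  = \[ S_u , -u\D + \p\tilde S_u \]\,\eps
  = \half \[ S_u , \[ S_u , S_u \] \]\,\eps = 0
\end{equation*}
by the Jacobi rule.

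Second, and more importantly, this only shows that the residual term is annihilated by $\p$, hence that its $\eps$-coefficient is a \emph{constant}; it does not show that the term vanishes identically in $\A\eps$, and in general it does not. Your assertion that the $\eps$-terms ``collapse to zero'' therefore has a missing step, and as stated the argument would prove something false if $\B$ were built from $\A$ rather than $\tilde\A$. What closes the proof is the definition of $\B$: the $\eps$-summand of $\B^j$ is $\tilde\A^{j+1}$ with $\tilde\A^0=\A^0/\C$, so constant multiples of $\eps$ are zero in $\B[[u]]$ by construction. This quotient is not a technicality --- it is precisely why $\tilde\A$ appears in the definition of $\B$, and the same point recurs in Section 5, where $\delta\OM(\mu\eps)$ vanishes because $\delta\mu$ is locally constant. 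You need to state explicitly that the kernel of $\p$ consists of the constants and that these die in $\tilde\A\eps$; without that, the $\eps$-component of the Maurer--Cartan equation is not established.
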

\begin{proof}
  We have
  \begin{align*}
    u\D + d_u\S_u + \half \[ \S_u , \S_u \]
    &= u\D - \p\tilde{S}_u + u\iota(S_u) + \half \[ S_u +
      \tilde{S}_u\eps , S_u + \tilde{S}_u\eps \] \\
    &= u\iota(S_u) + \[ S_u , \tilde{S}_u \] \eps .
  \end{align*}
  We would like to show that the right-hand side has the form a
  constant times $\eps$: it suffices to verify that on applying $\p$ to
  it, we obtain zero. But we have
  \begin{align*}
    u\p\iota(S_u) + \p\[ S_u , \tilde{S}_u \] \eps
    &= \bigl( \[ S_u , -u\D+\p\tilde{S}_u \] \bigr) \eps \\
    &= \half \[ S_u , \[\S_u , \S_u \] \] \eps = 0 .
  \end{align*}
  It follows that $u\iota(S_u) + \[ S_u , \tilde{S}_u \] \eps$
  represents zero in $\B[[u]]$.
\end{proof}

Let $\S_u$ be a covariant field theory. Expanding $\S_u$ in powers of
$u$, we obtain a series of elements $\S_n\in\B^{-2n}$:
\begin{equation*}
  \S_u = \sum_{n=0}^\infty u^n \S_n .
\end{equation*}
Let $\s_n$ be the Hamiltonian vector field $\h_{\S_n}$: this is an
evolutionary vector field of degree $1-2n$. By
Theorem~\ref{hamiltonian}, we see that a covariant field theory yields
a sequence of Hamiltonian vector fields $\s_n$, of ghost number
$1-2n$, such that $\s_0$ is the Batalin--Vilkovisky differential,
satisfying the relation $\s_0^2=0$, $\s_1$ is a cochain homotopy for
the operator $\p$, in the sense that
\begin{equation*}
  [\s_0,\s_1] = -\p ,
\end{equation*}
and for $n>1$,
\begin{equation*}
  \sum_{k=0}^n [\s_k,\s_{n-k}] = 0 .
\end{equation*}
All of the examples considered in this paper satisfy $\S_n=0$, $n>1$;
in particular, $\s_1^2=0$.

An odd element $H\in\B[[u]]$ of ghost number $-1$ generates a flow on
the space of covariant field theories by gauge action on the curved
Lie superalgebra:
\begin{equation*}
  \S_{u,\tau} = \S_u\bull\tau H .
\end{equation*}
We may also consider twists of covariant field theories, by which we
mean the flow associated to a Hamiltonian in
$\B\[u\]=\B[[u]][u^{-1}]$ such that the Maurer--Cartan element
$\S_u\bull\tau H$ remains in $\B[[u]]$. The class of twists discussed
in the following proposition are the ones of importance to the study
of AKSZ field theories.

If $\S_u$ is a covariant field theory, the operator
\begin{equation}
  \label{d}
  \d = d_u+\ad(\S_u)
\end{equation}
is a differential on $\B[[u]]$.
\begin{proposition}
  \label{Aksz}
  Consider an element $W\in\CO(M)$ of ghost number $1$ and odd parity
  such that $\d W$ is divisible by $u$ and $\[\d W,W\]=0$.  Then the
  twist $\S_u\bull u^{-1}W$ of $\S_u$ by $u^{-1}W$ is a covariant
  field theory, given by the formula
  \begin{equation*}
    \S_u\bull u^{-1}W = \S_u + u^{-1} \d W .
  \end{equation*}
\end{proposition}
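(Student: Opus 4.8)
The plan is to use the gauge-action formula from Section~2, specialized to the Hamiltonian $H = u^{-1}W \in \B\[u\]$, which is odd and of ghost number $-1$, and to show that the infinite series collapses to the two-term expression $\S_u + u^{-1}\d W$. Recall that $\S_u\bull sH$ solves the ordinary differential equation \eqref{ode}, namely
\begin{equation*}
  \frac{d(\S_u\bull sH)}{ds} = d_u H + \[ \S_u\bull sH , H \] ,
\end{equation*}
with initial value $\S_u$ at $s=0$. So first I would compute the first-order term: the right-hand side at $s=0$ is $d_u(u^{-1}W) + \[\S_u, u^{-1}W\] = u^{-1}(d_u W + \[\S_u,W\]) = u^{-1}\d W$, using the definition \eqref{d} of $\d$. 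The key claim is then that $\d W$ is a \emph{fixed point direction}: that is, $\[\,\d W\,,\,u^{-1}W\,]=0$ and $d_u(u^{-1}W)$ is constant along the flow, so that the solution of the ODE is simply the affine line $s\mapsto \S_u + s\,u^{-1}\d W$.

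The heart of the argument is to verify that the ansatz $\S_u\bull sH = \S_u + s\,u^{-1}\d W$ actually satisfies the ODE. Differentiating the ansatz gives $u^{-1}\d W$ on the left. On the right we must check
\begin{equation*}
  d_u(u^{-1}W) + \[ \S_u + s\,u^{-1}\d W , u^{-1}W \]
  = u^{-1}\d W + s\,u^{-2}\[ \d W , W \} .
\end{equation*}

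The second term vanishes by the hypothesis $\[\d W, W\]=0$, and the first term equals $u^{-1}\d W$ by the computation above, so the ODE holds. By uniqueness of solutions to the linear ODE \eqref{ode} (the coefficient $\ad(H)$ being locally nilpotent on each $u$-graded piece, or simply by the power-series solution formula), the value at $s=1$ is $\S_u + u^{-1}\d W$. I would also note that the hypothesis ``$\d W$ is divisible by $u$'' guarantees $u^{-1}\d W \in \B[[u]]$, so that $\S_u\bull u^{-1}W$ genuinely lands in $\B[[u]]$ and is a bona fide covariant field theory rather than merely an element of $\B\[u\]$; this is exactly what makes $u^{-1}W$ an admissible twist in the sense discussed before Proposition~\ref{Aksz}. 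The fact that it solves \eqref{AKSZ} is automatic, since the gauge action preserves Maurer--Cartan elements.

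The main obstacle, and the step deserving the most care, is the bracket identity $\[\d W, W\} = 0$ and its relation to the stated hypothesis $\[\d W, W\]=0$ --- I would want to double-check the parity bookkeeping, since $W$ is odd of ghost number $1$, $\d W$ is odd of ghost number $2$, and the Soloviev antibracket shifts ghost number by $1$; the antisymmetry axiom (d) then forces $\[\d W,W\]=\[W,\d W\]$ up to sign, and one should confirm this is consistent with the hypothesis rather than making it vacuous. A secondary point is to make sure the formula for $\iota$ and for $\ad(\S_u)$ interact correctly with multiplication by $u^{-1}$, i.e.\ that $d_u$ and $\ad(\S_u)$ are both $u$-linear in the appropriate sense so that $\d(u^{-1}W) = u^{-1}\d W$ --- this follows from the stipulations $\[uf,g\]=u\[f,g\]$ and $d_u = d + u\iota$, but it is worth stating explicitly. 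Everything else is a short computation with the ODE \eqref{ode}.
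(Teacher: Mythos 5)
Your argument is correct and is exactly the intended one: the paper states Proposition~\ref{Aksz} without proof, and the gauge-action machinery of Section~2 (either the series formula for $x\bull y$ or the equivalent ODE \eqref{ode}) collapses to the two-term expression precisely because $dy+(x,y)=u^{-1}\d W$ and $\ad(u^{-1}W)$ annihilates it by the hypothesis $\[\d W,W\]=0$. The parity point you flag is fine and the hypothesis is not vacuous: $W$ is odd and $\d W$ is even, so axiom (d) gives $\[W,\d W\]=-\[\d W,W\]$ rather than forcing either side to vanish, and the hypothesis therefore does exactly the work you need (also note the stray mismatched delimiter $\[\d W,W\}$ in your display, which should be $\[\d W,W\]$).
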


Let us now show how these formulas capture AKSZ field theories in the
one-dimensional case. The de~Rham complex $\Om^\bull(M)$ of the graded
supermanifold is generated over $\CO(M)$ by the one-forms $df$,
$f\in\CO(M)$, of parity $\pa(f)+1$, subject to the Leibniz relation
\begin{equation*}
  d(fg) = df\,g + (-1)^{\pa(f)} f\,dg .
\end{equation*}
We adopt the sign convention that one-forms graded commute:
\begin{equation*}
  df\,dg = (-1)^{(\pa(f)+1)(\pa(g)+1)} dg\,df .
\end{equation*}

Let
\begin{equation*}
  \nu = \nu_a(\xi) \, d\xi^a \in \Om^1(M)
\end{equation*}
be a one-form on $M$ of ghost number $0$ and odd parity; in other
words, $\gh(\nu_a)=-\gh(\xi^a)$ and $\pa(\nu_a)=\pa(\xi^a)$. The
two-form $\om=d\nu$ equals
\begin{align*}
  \om &= \half \, d\xi^a \, \om_{ab}(\xi) \, d\xi^b \in \Om^2(M) \\
      &= \half (-1)^{(\pa(\xi^a)+1)\pa(\xi^b)} \, \om_{ab}(\xi) \,
        d\xi^a \, d\xi^b ,      
\end{align*}
where
\begin{equation}
  \label{omega}
  \om_{ab} = \p_a\nu_b - (-1)^{\pa(\xi^a)\pa(\xi^b)} \p_b\nu_a .
\end{equation}
In particular, $\gh(\om_{ab})=-\gh(\xi^a)-\gh(\xi^b)$ and
$\pa(\om_{ab})=\pa(\xi^a)+\pa(\xi^b)$. 

Denote the frame of the tangent bundle $TM$ dual to the frame $d\xi^a$
of the cotangent bundle by $\tau_a=\p/\p\xi^a$. The two-form $\om$
induces a morphism of vector bundles $TM\to T^*M$, which is denoted
$X\mapsto X^\flat=X\mathbin\lrcorner\om$, or in terms of the frames
$\{\tau_a\}$ and $\{d\xi^a\}$,
\begin{equation*}
  \tau^\flat_a = \om_{ab}\,d\xi^b .
\end{equation*}
Likewise, a bivector field $\pi$ on $M$ induces a morphism of vector
bundles $T^*M\to TM$, denoted
$\theta\mapsto\theta^\sharp=\pi\mathbin\lrcorner\theta$.  The two-form
$\om$ is symplectic if there is a bivector field $\pi$ such that
$(X^\flat){}^\sharp=X$. Expanding the bivector field in the local
frame $\{\tau_a\}$,
\begin{align*}
  \pi &= \half \, \tau_a \, \pi^{ab}(\xi) \, \tau_b \in
        \Gamma(M,\Sym^2(T[-1]M)) \\
      &= \half \, (-1)^{(\pa(\xi^a)+1)\pa(\xi^b)} \, \pi^{ab}(\xi) \,
        \tau_a \, \tau_b ,
\end{align*}
the relationship between $\om$ and $\tau$ becomes
\begin{equation}
  \label{symplectic}
  (-1)^{\pa(\xi^a)} \, \pi^{ab} \om_{bc}  = \delta_c^a .
\end{equation}
Note that the coefficients $\pi^{ab}$ possess the same symmetry as
$\om_{ab}$, namely
\begin{equation*}
  \pi^{ab} = - (-1)^{\pa(\xi^a)\pa(\xi^b)} \, \pi^{ba} .
\end{equation*}

\begin{lemma}
  If $\om$ and $\pi$ are inverse to each other in the above sense,
  then $\om$ is closed if and only if the bivector field $\pi$ is a
  Poisson tensor:
  \begin{equation}
    \label{Poisson}
    \pi^{ab} \, \p_b\pi^{cd} = (-1)^{\pa(\xi^a)\pa(\xi^c)} \,
    \pi^{cb} \, \p_b\pi^{ad} .
  \end{equation}
\end{lemma}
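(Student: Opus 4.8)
The plan is to translate the closedness of $\om$ into a differential equation on its coefficients $\om_{ab}$, and then convert that equation, term by term, into the Poisson equation \eqref{Poisson} on the inverse tensor $\pi^{ab}$, using the duality \eqref{symplectic}. First I would compute $d\om$ in the frame $\{d\xi^a\}$. Since $\om = \tfrac12\,d\xi^a\,\om_{ab}\,d\xi^b$, the exterior derivative is $d\om = \tfrac12\, d\xi^a\,(d\om_{ab})\,d\xi^b = \tfrac12\,(\p_c\om_{ab})\,d\xi^c\,d\xi^a\,d\xi^b$ (with appropriate Koszul signs from moving $d\xi^c$ past the coefficient, of parity $\pa(\xi^a)+\pa(\xi^b)$). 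The statement $d\om=0$ is then the graded cyclic symmetrization of $\p_c\om_{ab}$ in the three indices $a,b,c$: schematically $\p_c\om_{ab} \pm \p_a\om_{bc} \pm \p_b\om_{ca} = 0$, with signs dictated by the graded-commutativity of one-forms. I would record this identity carefully once, as it is the only place the sign bookkeeping is subtle.

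Next I would differentiate the defining relation \eqref{symplectic}, $(-1)^{\pa(\xi^a)}\pi^{ab}\om_{bc} = \delta^a_c$, with respect to $\xi^d$, to get an expression for $\p_d\pi^{ab}$ in terms of $\pi$, $\om$, and $\p\om$: contracting with $\pi$ on the free $\om$-index yields $\p_d\pi^{ab} = -(-1)^{(\cdots)}\,\pi^{ae}\,(\p_d\om_{ef})\,\pi^{fb}$ up to signs. Substituting this into both sides of the claimed Poisson identity \eqref{Poisson} replaces the left- and right-hand sides by expressions each of which is a triple contraction of three copies of $\pi$ against a single derivative $\p\om$. The difference of the two sides is then, after relabelling the contracted indices, precisely the triple $\pi$-contraction of the graded cyclic sum $\p_c\om_{ab} \pm \p_a\om_{bc} \pm \p_b\om_{ca}$, which vanishes by $d\om=0$. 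Conversely, since $\pi$ is invertible, contracting the cyclic sum with $\pi\o\pi\o\pi$ is a reversible operation, so \eqref{Poisson} implies $d\om=0$; this gives the "if and only if".

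The main obstacle is purely the sign tracking: one must be scrupulous about (i) the Koszul signs when $d$ passes a coefficient function past $d\xi^c$, (ii) the graded commutativity $d\xi^a\,d\xi^b = (-1)^{(\pa(\xi^a)+1)(\pa(\xi^b)+1)}d\xi^b\,d\xi^a$ used to antisymmetrize, (iii) the sign $(-1)^{\pa(\xi^a)}$ and the position of indices in \eqref{symplectic}, and (iv) the sign in $\p_d(\pi^{ae}\om_{ef}) $ when $\p_d$ acts past $\pi^{ae}$. I would manage this by first verifying everything in the purely even (ungraded) case — where \eqref{Poisson} and the cyclic form of $d\om=0$ are classical and the contraction argument is transparent — and then reinstating the $\Z/2$-signs, checking that each is forced by parity consistency. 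A useful organizing remark is that both sides of \eqref{Poisson} and both sides of the cyclic identity are graded-antisymmetric in the appropriate index pairs with the \emph{same} symmetry type as $\pi$ and $\om$ respectively, so any sign error would manifest as a symmetry-type mismatch and can be caught that way. Once the signs are pinned down, the proof is the short computation just outlined, and I would present it as: (1) formula for $d\om$; (2) $d\om=0 \iff$ graded cyclic identity on $\p\om$; (3) formula for $\p\pi$ from \eqref{symplectic}; (4) substitution showing \eqref{Poisson} is the $\pi^{\o 3}$-contraction of that identity; (5) invertibility of $\pi$ gives the equivalence.
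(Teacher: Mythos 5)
The paper states this lemma without proof, so there is no argument of the author's to compare against. Your strategy is the standard one---pass from $d\om=0$ to the graded cyclic identity on $\p_c\om_{ab}$, differentiate \eqref{symplectic} to get $\p_d\pi^{ab}=-\pi^{ae}(\p_d\om_{ef})\pi^{fb}$ up to signs, substitute, and use invertibility of $\pi$ for the converse---and it is sound.

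There is, however, one concrete point at which your step (4) will not close as described. Equation \eqref{Poisson} as printed has only \emph{two} terms, and the substitution yields, for the difference of its two sides, the $\pi^{\o 3}$-contraction of only two of the three terms of the cyclic sum: after relabelling, that difference is proportional to $\pi^{ab}\pi^{ce}\pi^{fd}\bigl(\p_b\om_{ef}-\p_e\om_{bf}\bigr)$, and $d\om=0$ trades this for $-\pi^{ab}\pi^{ce}\pi^{fd}\,\p_f\om_{be}$, which is $\pm\,\pi^{db}\p_b\pi^{ac}$ and does not vanish in general. In other words, the literal two-term identity \eqref{Poisson} is the Leibniz-form Jacobi identity $\{\xi^a,\{\xi^c,\xi^d\}\}=\{\{\xi^a,\xi^c\},\xi^d\}+(-1)^{\pa(\xi^a)\pa(\xi^c)}\{\xi^c,\{\xi^a,\xi^d\}\}$ with the term $\{\{\xi^a,\xi^c\},\xi^d\}$ dropped, and it is strictly stronger than $d\om=0$. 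Your own proposed sanity check (do the ungraded case first) exposes this: on $\R^2$ with $\pi^{12}=f(x)$ nonvanishing and nonconstant, $\om$ is automatically closed and $\pi$ is Poisson, yet $\pi^{1b}\p_b\pi^{21}=-f\,\p_2f\ne 0=\pi^{2b}\p_b\pi^{11}$. So you should prove the equivalence of $d\om=0$ with the full three-term cyclic (equivalently Leibniz-form) graded Jacobi identity---which your argument does essentially verbatim---and note that \eqref{Poisson} is evidently intended as shorthand for that identity, consistent with the Jacobi-identity lemma for $\{-,-\}$ proved immediately afterwards, which does carry all three terms.
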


Let $M$ be a graded supermanifold, and let $\nu$ be a one-form of
ghost number $0$ and odd parity such that $\om=d\nu$ is a symplectic
form. The following theorem follows by a lengthy but straightforward
calculation based on \eqref{omega}, \eqref{symplectic} and
\eqref{Poisson}.
\begin{theorem}
  \label{Su}
  The elements of $\B[[u]]$
  \begin{align*}
    \S_0 &= (-1)^{\pa(\xi^a)} \, \nu_a(\xi) \, \p\xi^a , \\
    \S_1 &= \half \, \xi^+_a \, \pi^{ab}(\xi) \, \xi^+_b
           + (-1)^{\pa(\xi^a)} \, \nu_a(\xi) \, \pi^{ab}(\xi) \,
           \xi^+_b \eps \\
         &= \half ( \xi^+_a - \nu_a \eps ) \pi^{ab} ( \xi^+_b -
           \nu_b \eps ) ,
  \end{align*}
  of ghost number $0$ and $-2$ respectively, and even parity, are
  independent of the coordinate system $\{\xi^a\}$.  Their sum
  \begin{equation*}
    \S_u = \S_0 + u\S_1 \in \B[[u]] ,
  \end{equation*}
  satisfies \eqref{AKSZ}, and hence defines a covariant field theory.
\end{theorem}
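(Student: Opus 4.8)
The plan is to verify the Maurer--Cartan equation \eqref{AKSZ} for $\S_u=\S_0+u\S_1$ by expanding both sides in powers of $u$ and checking the resulting three equations separately: the $u^0$-term, which is the classical master equation $\half\[\S_0,\S_0\]=0$; the $u^1$-term, which reads $\iota(\S_0)+\[\S_0,\S_1\]=-\D$; and the $u^2$-term, which reads $\half\[\S_1,\S_1\]=0$. (There are no higher terms since $\S_u$ is linear in $u$.) For the invariance-of-coordinates claim, the plan is to verify directly that $\S_0$ and $\S_1$ transform correctly under an \'etale change $\phi$, using the transformation law $\phi^*\eta^+_b=J^{-1}(\xi)^a_b\xi^+_a$ for antifields and the fact that a one-form $\nu$ and a bivector $\pi$ transform tensorially; alternatively, one invokes the coordinate-invariance of the Soloviev antibracket together with the intrinsic description of $\S_0$ and $\S_1$ given by the ``completed-square'' formula $\S_1=\half(\xi^+_a-\nu_a\eps)\pi^{ab}(\xi^+_b-\nu_b\eps)$, in which $\xi^+_a-\nu_a\eps$ transforms like $d\xi^a$ dualized.

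For the computational heart, I would first compute the relevant variational derivatives. Since $\S_0=(-1)^{\pa(\xi^a)}\nu_a(\xi)\p\xi^a$ depends on $\xi$ and $\p\xi$ but not on antifields, $\delta^a\S_0=0$ and $\delta_a\S_0=(-1)^{\pa(\xi^b)}(\p_a\nu_b-(-1)^{\pa(\xi^a)\pa(\xi^b)}\p_b\nu_a)\p\xi^b$, which by \eqref{omega} is (up to the sign $(-1)^{\pa(\xi^b)}$) exactly $\om_{ab}\p\xi^b$; hence $\delta_a\S_0$ is, up to sign, the component of $(\p\xi)^\flat$. Then $\half\[\S_0,\S_0\]$ is (by Lemma on the BV antibracket, or rather its $\A$-level analogue) built from $(\delta_a\S_0)(\delta^a\S_0)$-type terms, which vanish because $\delta^a\S_0=0$; more precisely $\[\S_0,\S_0\]$ at the level of $\B$ is a total derivative, which is why $\S_0$ already satisfies its half of the story and why the calculation of $\widetilde{\S_0}$ is visible. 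For the $u^2$-equation, one computes $\delta^a\S_1$ and $\delta_a\S_1$: the quadratic-in-antifield piece $\half\xi^+_a\pi^{ab}\xi^+_b$ contributes $\delta^a\mapsto\pi^{ab}\xi^+_b$ and $\delta_a\mapsto\half\xi^+_b(\p_a\pi^{bc})\xi^+_c$ plus antifield-linear terms from the $\nu\pi\xi^+\eps$ piece; then $\half\[\S_1,\S_1\]$ assembles into an expression of the schematic form $\xi^+_a\,(\pi^{ab}\p_b\pi^{cd})\,\xi^+_c\,(\text{something})\,\xi^+_d$, which vanishes precisely by the Poisson/Jacobi identity \eqref{Poisson} after using the symmetry of $\pi^{ab}$ to antisymmetrize the three $\xi^+$ factors appropriately.

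The genuinely load-bearing step is the $u^1$-equation $\iota(\S_0)+\[\S_0,\S_1\]=-\D$. Here one uses the defining formula for $\iota$, giving $\iota(\S_0)=(-1)^{\pa(\S_0)}(\pr(\xi^+_a\p^a)\S_0-\S_0)\eps$, and since $\S_0$ has no antifield dependence $\pr(\xi^+_a\p^a)\S_0=0$, so $\iota(\S_0)=\pm\S_0\eps$; and one uses \eqref{Soloviev2} to compute $\[\S_0,\S_1\]$, whose $\A$-component is $\[\S_0,\half\xi^+\pi\xi^+\]$ and whose $\eps$-component involves $\[\S_0,\nu\pi\xi^+\]$ together with $\[\widetilde{\S_0}\text{-type},\S_1\]$ contributions. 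Contracting $\delta_a\S_0=\pm\om_{ab}\p\xi^b$ against $\delta^a(\half\xi^+\pi\xi^+)=\pi^{ab}\xi^+_b$ and invoking the inverse relation \eqref{symplectic}, $(-1)^{\pa(\xi^a)}\pi^{ab}\om_{bc}=\delta^a_c$, collapses the $\pi$--$\om$ product to a Kronecker delta and produces $\xi^+_c\p\xi^c=\D$ with the correct sign; the remaining $\eps$-terms must cancel against $\iota(\S_0)$ using once more \eqref{symplectic} applied to $\nu_a\pi^{ab}\om_{bc}$. The main obstacle is purely bookkeeping: keeping the Koszul signs $(-1)^{(\pa(f)+1)\pa(\xi^a)}$ from \eqref{Soloviev}, the parity shift on the $\eps$-summand of $\B$, and the sign conventions in \eqref{omega} and \eqref{symplectic} all mutually consistent, so that the three identities \eqref{omega}, \eqref{symplectic}, \eqref{Poisson} fire with exactly the signs needed. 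I would organize this by doing everything first at the level of $\A$ (ignoring $\eps$) to pin down the $\D$ and the Jacobi cancellation, then separately track the $\eps$-components, and only at the end reconcile signs against the stated formulas.
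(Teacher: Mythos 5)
Your overall strategy---expand \eqref{AKSZ} in powers of $u$ and let \eqref{omega}, \eqref{symplectic} and \eqref{Poisson} do the work in the three coefficient equations---is exactly the calculation the paper intends (the paper records no more than that the theorem ``follows by a lengthy but straightforward calculation'' from those identities). However, the coefficient equations you write down are not the right ones, and the discrepancy is precisely where the content lies. Write $\S_1=f_1+g_1\eps$ with $f_1=\half\xi^+_a\pi^{ab}\xi^+_b$ and $g_1=(-1)^{\pa(\xi^a)}\nu_a\pi^{ab}\xi^+_b$. Since $d_u\S_u=d\S_0+u\bigl(d\S_1+\iota\S_0\bigr)+u^2\iota\S_1$, the coefficient of $u$ in \eqref{AKSZ} is
\begin{equation*}
  d\S_1+\iota\S_0+\[\S_0,\S_1\]=-\D ,
\end{equation*}
and the coefficient of $u^2$ is $\iota\S_1+\half\[\S_1,\S_1\]=0$. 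You have dropped $d\S_1=(-1)^{\pa(g_1)}\p g_1$ and $\iota\S_1=f_1\eps$, and neither vanishes. The term $\p g_1$ is indispensable: the Soloviev bracket $\[\S_0,f_1\]$ produces only terms of the form $\p_a\nu_c\,\p\xi^c\,\pi^{ab}\xi^+_b$ and $\nu_a\,\p(\pi^{ab}\xi^+_b)$, and it is $\p g_1$ that cancels the latter and supplies the second half of the antisymmetrization \eqref{omega} turning $\p\nu$ into $\om$, after which \eqref{symplectic} collapses $\om\pi$ to the identity and yields $-\xi^+_a\p\xi^a=-\D$. Likewise, in the $u^2$ equation only the $\eps$-free part $\half\[f_1,f_1\]$ vanishes by \eqref{Poisson}; the $\eps$-component $\[f_1,g_1\]\eps$ is nonzero and is there precisely to cancel $\iota\S_1$.

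The source of the error is that you evaluate the brackets through the variational derivatives $\delta_a$, $\delta^a$. Those compute the Batalin--Vilkovisky antibracket on $\F$, i.e.\ the bracket modulo total derivatives, and your argument as written verifies only the $\F[[u]]$-level equation \eqref{aksz}. Theorem \ref{Su} asserts the refined equation \eqref{AKSZ} in $\B[[u]]$, where total derivatives are retained---this is the whole reason for passing to the resolution $\B$---so the brackets must be computed with the full Soloviev formula \eqref{Soloviev}, and $d\S_1$ and $\iota\S_1$ are exactly the corrections that make the equation close on the nose rather than up to $\p(\cdot)$. Two smaller points: $\S_0$ has neither antifield dependence nor an $\eps$-component, so $\[\S_0,\S_0\]$ is identically zero and there is no ``$\widetilde{\S_0}$'' in this theorem; and the $\eps$-component of the $u^1$ equation amounts to the identity $\[\S_0,g_1\]=\S_0$ in $\tilde{\A}$, which again uses \eqref{omega} and \eqref{symplectic} at the Soloviev level rather than merely \eqref{symplectic} applied to $\nu_a\pi^{ab}\om_{bc}$.
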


Let $\nu$ and $\nu'$ be two one-forms such that
\begin{equation*}
  d\nu = d\nu' = \om ,
\end{equation*}
and in particular, $\nu-\nu'$ is closed. If $M$ is simply connected,
then $\nu-\nu'$ is exact: there is a function $\mu\in\CO(M)$ such that
\begin{equation*}
  \nu - \nu' = d\mu .
\end{equation*}
It follows that
\begin{align*}
  \S_0 - \S_0'
  &= (-1)^{\pa(\xi^a)} \, \p_a \mu(\xi) \, \p\xi^a \\
  &= \p \mu(\xi) ,
\end{align*}
and in particular,
\begin{equation*}
  \tint \S_0 = \tint \S_0'
\end{equation*}
and
\begin{equation*}
  \tint \S_u = \tint \S_u' .
\end{equation*}
Thus, locally in $M$, the choice of $\nu$ is unimportant in the
definition of the field theory: it is only when the worldline has
nonempty boundary (or $M$ has nonzero first homology) that this
ambiguity comes into play. This is one of the reasons that we have
introduced the resolution $\B$ of $\F$

The Poisson bracket associated to the symplectic form $\om$ is the
bilinear form on $\CO(M)$ given by the formula
\begin{align*}
  u \{ f , g \}
  &= (-1)^{\pa(f)} \[ \d f , g \] \\
  &= (-1)^{(\pa(f)+\pa(\xi^a))\pa(\xi^b)} u \, \pi^{ab} \p_af \p_bg ,
\end{align*}
where $\d$ is the differential introduced in \eqref{d}.
\begin{lemma}
  The Poisson bracket satisfies the graded symmetry condition
  \begin{equation*}
    \{ f , g \} = - (-1)^{\pa(f)\pa(g)} \, \{ g , f \} .
  \end{equation*}
\end{lemma}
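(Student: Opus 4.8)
The plan is to reduce the claimed graded antisymmetry of $\{f,g\}$ to the already-established antisymmetry of the Soloviev antibracket on $\B$. Recall from the statement that $u\{f,g\} = (-1)^{\pa(f)}\[\d f,g\]$, where $\d=d_u+\ad(\S_u)$ is the differential on $\B[[u]]$ from \eqref{d}. Since $f,g\in\CO(M)\subset\A\subset\B$ have ghost number $1$ and odd parity, both sides have a well-defined parity and the computation should be purely formal: apply the antisymmetry axiom (d) of the curved Lie superalgebra $\B[[u]]$ together with the Leibniz identity for $\d$ (which follows from axiom (c) plus the derivation property of $d_u$ and of $\ad(\S_u)$).

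First I would write, using the Leibniz rule for $\d$ applied to the product $fg\in\CO(M)$ — but more directly, I would expand $\[\d f, g\]$ and $\[\d g, f\]$ and compare. Since $\d f$ has ghost number $2$ and parity $\pa(f)+1$ (the differential raises ghost number by one and parity by one, as stated for $d$ and $\iota$), axiom (d) gives
\begin{equation*}
  \[ g , \d f \] = -(-1)^{(\pa(g)+1)(\pa(\d f)+1)} \[ \d f , g \]
  = -(-1)^{(\pa(g)+1)\pa(f)} \[ \d f , g \].
\end{equation*}
Next I would use that $\d$ is a derivation of the antibracket together with $\d(fg)=(\d f)g+(-1)^{\pa(f)}f(\d g)$ and the fact that $\[f,g\]=0$ for $f,g\in\CO(M)$ (functions on $M$ Poisson-commute for the \emph{Soloviev} bracket, since $\CO(M)$ involves no $t$-derivatives of fields or antifields, and the Soloviev antibracket \eqref{Soloviev} contains only terms with at least one derivative $\p^k$, $\p^\ell$ acting — here both acting on constants-in-$t$ gives zero). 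Hence $0=\d\[f,g\]=\[\d f,g\]+(-1)^{\pa(f)+1}\[f,\d g\]$, so $\[\d f,g\]=(-1)^{\pa(f)}\[f,\d g\]$. Combining this identity, a second application of axiom (d) to rewrite $\[f,\d g\]$ in terms of $\[\d g,f\]$, and collecting the signs, one obtains $(-1)^{\pa(f)}\[\d f,g\] = -(-1)^{\pa(f)\pa(g)}(-1)^{\pa(g)}\[\d g,f\]$, which is exactly $u\{f,g\} = -(-1)^{\pa(f)\pa(g)}\,u\{g,f\}$; dividing by $u$ finishes the proof.

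The main obstacle is purely bookkeeping: getting every sign right, in particular tracking that $\d f$ has the parity of $f$ shifted by one and ghost number shifted by one, and being careful that the "$\d$ is a derivation of the antibracket" identity has the correct Koszul sign (axiom (c) with $a=f$, using $\pa(f)+1$ in the exponent). It is worth double-checking the claim $\[f,g\]=0$ for $f,g\in\CO(M)$ against the explicit formula \eqref{Soloviev}: each summand carries a factor $\p^\ell(\p_{a,k}f)$ or $\p^\ell(\p^a_k f)$, and since $f$ depends only on the $\xi^a$ (not their $t$-derivatives and not the antifields), $\p_{a,k}f$ vanishes for $k>0$ while $\p^a_k f=0$ for all $k$; the surviving term $\p^\ell(\p_{a,0}f)\,\p^k(\p^a_\ell g)$ has $\p^a_\ell g=0$ since $g$ likewise does not involve antifields. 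So indeed $\[f,g\]=0$, and as a cross-check the second displayed formula for $u\{f,g\}$ in the statement — the one with $\pi^{ab}\p_a f\,\p_b g$ — is manifestly antisymmetric given the symmetry $\pi^{ab}=-(-1)^{\pa(\xi^a)\pa(\xi^b)}\pi^{ba}$ recorded just before Theorem~\ref{Su}, which provides an independent confirmation of the sign.
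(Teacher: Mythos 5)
Your proposal is correct and follows essentially the same route as the paper: observe that $\[f,g\]=0$ for $f,g\in\CO(M)$, apply $\d$ using the Leibniz identity, and then use the antisymmetry axiom of the antibracket to convert $\[f,\d g\]$ into $\[\d g,f\]$; your sign bookkeeping (in particular that $\pa(\d g)+1=\pa(g)$) comes out right. The explicit verification that $\[f,g\]=0$ from \eqref{Soloviev} and the cross-check against the $\pi^{ab}$ formula are sensible additions but do not change the argument.
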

\begin{proof}
  We have $\[f,g\]=0$, hence
  \begin{align*}
    0 &= \d\[ f , g \] = \[ \d f , g \] +
        (-1)^{\pa(f)+1} \, \[ f , \d g \] \\
      &= \[ \d f , g \] - (-1)^{(\pa(f)+1)(\pa(g)+1)}
        \, \[ \d g , f \] .
        \qedhere
  \end{align*}
\end{proof}

The following lemma generalizes to graded supermanifolds the proof of
the Jacobi rule for the Poisson bracket associated to a Poisson tensor.
\begin{lemma}
  The Poisson bracket satisfies the graded Jacobi identity
  \begin{equation*}
    \{ f , \{ g , h \} \} =  \{ \{ f , g \} , h \} +
    (-1)^{\pa(f)\pa(g)} \, \{ g , \{ f , h \} \} .
  \end{equation*}
\end{lemma}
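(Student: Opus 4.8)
I would argue just as in the preceding symmetry lemma, exploiting that the Poisson bracket is a \emph{derived bracket} of the Soloviev antibracket along the differential $\d=d_u+\ad(\S_u)$. Recall (as noted just before Proposition~\ref{Aksz}) that $\d$ is a differential on $\B[[u]]$, i.e.\ a graded derivation of $\[-,-\]$ with $\d^2=0$, and that the Poisson bracket is defined by $\[\d f,g\]=(-1)^{\pa(f)}\,u\{f,g\}$ for $f,g\in\CO(M)$. Since $\{f,g\}$ lies again in $\CO(M)$---its local form $(-1)^{(\pa(f)+\pa(\xi^a))\pa(\xi^b)}\pi^{ab}\,\p_af\,\p_bg$ makes this plain---this relation iterates. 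The plan is to rewrite the three terms of the Jacobi identity as nested double antibrackets and reduce to the Jacobi rule for $\[-,-\]$.

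Two applications of the defining relation, with the $u$-bilinearity of the antibracket, give
\begin{equation*}
  \[\d f,\[\d g,h\]\]=(-1)^{\pa(f)+\pa(g)}u^2\{f,\{g,h\}\},\qquad \[\d g,\[\d f,h\]\]=(-1)^{\pa(f)+\pa(g)}u^2\{g,\{f,h\}\}.
\end{equation*}
For the third term one first identifies $\[\d f,\d g\]$ with $u\,\d\{f,g\}$: applying $\d$ to $\[\d f,g\]=(-1)^{\pa(f)}u\{f,g\}$ and using the Leibniz identity and $\d^2f=0$,
\begin{equation*}
  (-1)^{\pa(f)}u\,\d\{f,g\}=\d\[\d f,g\]=\[\d^2f,g\]+(-1)^{\pa(\d f)+1}\[\d f,\d g\]=(-1)^{\pa(f)}\[\d f,\d g\],
\end{equation*}
and then $\[\[\d f,\d g\],h\]=u\,\[\d\{f,g\},h\]=(-1)^{\pa(f)+\pa(g)}u^2\{\{f,g\},h\}$ by one more use of the defining relation. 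Substituting these three identifications into the Jacobi rule for $\[-,-\]$ applied to the odd elements $\d f,\d g,h$,
\begin{equation*}
  \[\d f,\[\d g,h\]\]=\[\[\d f,\d g\],h\]+(-1)^{(\pa(\d f)+1)(\pa(\d g)+1)}\[\d g,\[\d f,h\]\],
\end{equation*}
and dividing through by the invertible scalar $(-1)^{\pa(f)+\pa(g)}u^2$, yields the asserted identity---here one uses $\pa(\d f)=\pa(f)+1$, which makes $(\pa(\d f)+1)(\pa(\d g)+1)\equiv\pa(f)\pa(g)\pmod2$.

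The only genuinely non-formal input is $\d^2=0$; this is where closedness of $\om=d\nu$---equivalently the Poisson condition \eqref{Poisson}---enters, via Theorem~\ref{Su}. Everything else is the standard mechanism by which a square-zero derivation of a Lie superalgebra produces, on an abelian subalgebra (here $\CO(M)$, where $\[f,g\]=0$), a derived bracket satisfying the Jacobi identity. The point requiring care is the sign bookkeeping imposed by the $1$-shifted conventions: both $\d$ and $\[-,-\]$ are odd, and it is precisely the shift $\pa(\d f)=\pa(f)+1$ that produces the factor $(-1)^{\pa(f)\pa(g)}$, in keeping with the symmetry lemma. One could instead prove the identity by brute force, expanding the three double brackets via the local formula for $\{-,-\}$: the terms carrying a second derivative on one of $f,g,h$ cancel in pairs by $\pi^{ab}=-(-1)^{\pa(\xi^a)\pa(\xi^b)}\pi^{ba}$, and the rest reassemble into \eqref{Poisson}; but the $\d$-formalism is exactly what spares one that computation.
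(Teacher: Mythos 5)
Your proof is correct and follows essentially the same route as the paper: both arguments realize the Poisson bracket as the derived bracket of the Soloviev antibracket along the square-zero derivation $\d$, identify $\[ \d f , \d g \]$ with $u\,\d\{f,g\}$ via the Leibniz identity and $\d^2=0$, and then invoke the shifted Jacobi rule for the triple $\d f$, $\d g$, $h$. The paper merely packages this as a single chain of equalities starting from $u^2\{\{f,g\},h\}$, with the same sign bookkeeping you carry out.
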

\begin{proof}
  We have
  \begin{align*}
    u^2 & \{ \{ f , g \} , h \}
          = (-1)^{\pa(g)} \, \[ \d\[ \d f , g \]  , h \] \\
        &= (-1)^{\pa(f)+\pa(g)} \, \[ \[ \d f , \d g \] , h \] \\
        &= (-1)^{\pa(f)+\pa(g)} \bigl( \[ \d f , \[ \d g ,
          h \] \] - (-1)^{\pa(f)\pa(g)} \, \[ \d g , \[ \d f
          , h \] \] \bigr) \\
        &= u^2 \bigl( \{ f , \{ g , h \} \} - (-1)^{\pa(f)\pa(g)} \, \{
          g , \{ f , h \} \} \bigr) .
          \qedhere
  \end{align*}
\end{proof}

If $W\in\CO(M)$ is a function on $M$ of ghost number $1$ and odd
parity, then
\begin{equation*}
  \d W = u \bigl( - \xi^+_a\,\pi^{ab}\,\p_bW + (-1)^{\pa(\xi^a)} \,
  \nu_a \, \pi^{ab}\,\p_bW\eps + W\eps \bigr)
\end{equation*}
is divisible by $u$, and $\[\d W,W\]=u\{W,W\}$ vanishes if and only if
\begin{equation*}
  \{W,W\} = 0 ,
\end{equation*}
in other words, precisely when the Hamiltonian vector field associated
to $W$ is cohomological. Applying Proposition~\ref{Aksz}, we obtain
the following result.
\begin{theorem}
  Let $M$ be a graded supermanifold, and let $\nu\in\Om^1(M)$ be a
  one-form of ghost number $0$ and odd parity such that $\om=d\nu$ is
  a symplectic form. Let $\pi$ be the Poisson tensor associated to
  $\om$.

  Let $W\in\CO(M)$ be a function on $M$ of ghost number $1$ and odd
  parity, such that $\{W,W\}=0$. Then the twist $\S_u\bull u^{-1}W$ of
  the covariant field theory $\S_u$ by $u^{-1}W$, given by the formula
  \begin{align*}
    \S_u + u^{-1} \d W
    &= (-1)^{\pa(\xi^a)} \nu_a\,\p\xi^a + W \eps \\
    &+ \half u ( \xi^+_a - u^{-1}\p_aW - \nu_a \eps )
      \pi^{ab} ( \xi^+_b - u^{-1}\p_bW - \nu_b \eps ) ,
  \end{align*}
  is a covariant field theory.
\end{theorem}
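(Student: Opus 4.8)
The plan is to obtain the theorem directly from Proposition~\ref{Aksz}, applied to the covariant field theory $\S_u=\S_0+u\S_1$ produced by Theorem~\ref{Su}. Two hypotheses have to be checked for the given $W$: that $\d W$ is divisible by $u$, and that $\[\d W,W\]=0$. Granting these, Proposition~\ref{Aksz} immediately yields that $\S_u\bull u^{-1}W=\S_u+u^{-1}\d W$ is a covariant field theory, and it only remains to rewrite $\S_u+u^{-1}\d W$ into the ``completed square'' displayed in the statement.

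For the divisibility, I would expand $\d W=(d_u+\ad(\S_u))W$ for $W\in\CO(M)$, viewed inside $\A^1\subset\B^1$. Since $W$ depends on neither the jet variables $\p^k\xi^a$ with $k>0$ nor any antifield, we have $dW=0$ and $\iota W=(-1)^{\pa(W)}\bigl(\pr(\xi^+_a\p^a)W-W\bigr)\eps=-(-1)^{\pa(W)}W\eps=W\eps$, so $d_uW=uW\eps$ is divisible by $u$. Likewise $\S_0=(-1)^{\pa(\xi^a)}\nu_a(\xi)\,\p\xi^a$ carries no antifields, so each term of the Soloviev bracket $\[\S_0,W\]$ has a vanishing factor $\p^a_k\S_0$ or $\p^a_kW$; hence $\ad(\S_0)W=0$ and $\ad(\S_u)W=u\[\S_1,W\]$ is divisible by $u$. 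Thus $\d W=u\bigl(W\eps+\[\S_1,W\]\bigr)$, and evaluating $\[\S_1,W\]$ via \eqref{Soloviev} and \eqref{Soloviev2} gives the explicit formula for $\d W$ recorded just before the statement.

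For the bracket condition, the text preceding the statement already records $\[\d W,W\]=u\{W,W\}$, where $u\{f,g\}=(-1)^{\pa(f)}\[\d f,g\]$ is the Poisson bracket of the previous section; hence $\[\d W,W\]$ vanishes exactly when $\{W,W\}=0$, which is our hypothesis. (It can also be seen directly: $\[d_uW,W\]=u\[W\eps,W\]=u\[W,W\]\eps=0$ by \eqref{Soloviev2}, because $\[W,W\]=0$ for a function $W$ of the $\xi^a$ alone, while $\[\ad(\S_u)W,W\]$ is then exactly the antibracket computing $\pm u\{W,W\}$.) Proposition~\ref{Aksz} now applies to the odd, ghost-number $-1$ element $u^{-1}W\in\B\[u\]$ and delivers $\S_u\bull u^{-1}W=\S_u+u^{-1}\d W$ as a covariant field theory.

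Finally, to put $\S_u+u^{-1}\d W$ in the stated form, I would substitute $\S_0=(-1)^{\pa(\xi^a)}\nu_a\,\p\xi^a$, the identity $\S_1=\half(\xi^+_a-\nu_a\eps)\pi^{ab}(\xi^+_b-\nu_b\eps)$ from Theorem~\ref{Su}, and the explicit formula for $u^{-1}\d W$, and ``complete the square'': the terms of the quadratic $\half u\,(\xi^+_a-u^{-1}\p_aW-\nu_a\eps)\pi^{ab}(\xi^+_b-u^{-1}\p_bW-\nu_b\eps)$ with no factor of $\p W$ reproduce $u\S_1$; those linear in $\p W$ reproduce the $u^0$-part of $u^{-1}\d W$, after using the graded antisymmetry of $\pi^{ab}$ to merge the two cross-terms; and the term quadratic in $\p W$ equals $\half u^{-1}\p_aW\,\pi^{ab}\,\p_bW$, a scalar multiple of $\{W,W\}$, which vanishes. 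Together with the remaining pieces $(-1)^{\pa(\xi^a)}\nu_a\,\p\xi^a$ and $W\eps$ this is exactly the displayed expression. I expect the only real work to be the Koszul-sign bookkeeping in this last step — confirming that the $\p W$ cross-terms add rather than cancel, and that the leftover $\p_aW\,\pi^{ab}\,\p_bW$ is genuinely $\pm\{W,W\}$; with the conventions of \eqref{Soloviev}, \eqref{Soloviev2} and \eqref{symplectic} pinned down this is mechanical, and no idea beyond Theorem~\ref{Su} and Proposition~\ref{Aksz} is needed.
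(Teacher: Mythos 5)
Your proposal is correct and follows the paper's own route: the paper likewise checks that $\d W=u\bigl(-\xi^+_a\pi^{ab}\p_bW+(-1)^{\pa(\xi^a)}\nu_a\pi^{ab}\p_bW\eps+W\eps\bigr)$ is divisible by $u$ and that $\[\d W,W\]$ is a multiple of $u\{W,W\}$, then invokes Proposition~\ref{Aksz} and rewrites $\S_u+u^{-1}\d W$ as the completed square. Your added details (why $dW=0$, $\iota W=W\eps$, $\[\S_0,W\]=0$, and the vanishing of the $\p_aW\,\pi^{ab}\,\p_bW$ term via $\{W,W\}=0$) are accurate fillings-in of steps the paper leaves implicit.
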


\section{Covariant field theories in one dimension: global case}

The formalism of the last section only applies when the symplectic
form $\om$ on the graded supermanifold $M$ is exact. When this
condition is not satisfied, the best we can do is to choose a cover
\begin{equation*}
  \U =  \{ U_\alpha \}_{\alpha\in I}
\end{equation*}
of $M$, where each $U_\alpha$ is an open subspace of the graded
supermanifold $M$, such that the restriction of $\om$ to each
$U_\alpha$ is exact:
\begin{equation*}
  \om|_{U_\alpha} = d\nu_\alpha .
\end{equation*}

The nerve $N_k\U$ of the cover is the sequence of graded
supermanifolds indexed by $k\ge0$
\begin{equation*}
  N_k\U = \bigsqcup_{\alpha_0\dots\alpha_k\in I^{k+1}}
  U_{\alpha_0\dots\alpha_k} ,
\end{equation*}
where
\begin{equation*}
  U_{\alpha_0\dots\alpha_k} = U_{\alpha_0} \cap \dots \cap U_{\alpha_k} .
\end{equation*}
Denote by $\epsilon:N_0\U\to M$ the map which on each summand
$U_\alpha$ equals the inclusion $U\hookrightarrow M$.

The collection $\nu=\{\nu_\alpha\}_{\alpha\in I}$ gives a one-form on
$N_0\U$, such that
\begin{equation*}
  d\nu = \epsilon^*\om .
\end{equation*}
For all $\alpha_0,\alpha_1\in I$, the one-form
\begin{equation*}
  \nu_{\alpha_0}|_{U_{\alpha_0\alpha_1}}
  - \nu_{\alpha_1}|_{U_{\alpha_0\alpha_1}} \in
  \Om^1(U_{\alpha_0\alpha_1})
\end{equation*}
is closed. Assume the cover $\U$ is chosen such that this form is
exact for all $(\alpha_0,\alpha_1)$: there exists functions
$\mu_{\alpha_0\alpha_1}\in\Om^0(U_{\alpha_0\alpha_1})$ such that
\begin{equation*}
  d\mu_{\alpha_0\alpha_1} = \nu_{\alpha_0}|_{U_{\alpha_0\alpha_1}}
  - \nu_{\alpha_1}|_{U_{\alpha_0\alpha_1}} .
\end{equation*}
Assemble the functions
$\{\mu_{\alpha_0\alpha_1}\}_{\alpha_0\alpha_1\in I}$ into a single
function $\mu$ on $N_1\U$. Let $\delta_0,\delta_1:N_1\U\to N_0\U$ be
the morphisms which on $U_{\alpha_0\alpha_1}$ are respectively the
inclusions into $U_{\alpha_1}$ and $U_{\alpha_0}$ (sic). We have
\begin{equation*}
  d\mu = \delta_1^*\nu - \delta_0^*\nu .
\end{equation*}

The collection of differential forms
\begin{equation*}
  (\nu,\mu) \in \Om^1(N_0\U) \times \Om^0(N_1\U)
\end{equation*}  
will serve as a replacement for the non-existent one-form
$\nu\in\Om^1(M)$ solving the equation $\om=d\nu$. In order to repeat
the discussion of the last section, we must extend the definition of
the Soloviev bracket, and the curved Lie algebra $\B[[u]]$, from
graded supermanifolds $M$ to sequences of graded supermanifolds of the
form $\{N_k\U\}$. Since we will use the formalism of simplicial and
cosimplicial objects in our discussion, we now review their
definition.

Let $\Delta$ be the category whose objects are the totally ordered
sets
\begin{equation*}
  [k] = (0<\dots<k) , \quad k\in\N ,
\end{equation*}
and whose morphisms are the order-preserving functions. A simplicial
graded supermanifold $M_\bull$ is a contravariant functor from
$\Delta$ to the category of graded supermanifolds. (We leave open here
whether we are working in the smooth, analytic or algebraic setting.)
Here, $M_k$ is the value of $M_\bull$ at the object $[k]$, and
$f^*:M_\ell\to M_k$ is the action of the arrow $f:[k]\to[\ell]$ of
$\Delta$. The arrow $d_i:[k]\to[k+1]$ which takes $j<i$ to $j$ and
$j\ge i$ to $j+1$ is known as a face map, while the arrow
$s_i:[k]\to[k-1]$ which takes $j\le i$ to $j$ and $j>i$ to $j-1$ is
known as a degeneracy map.

The simplicial graded supermanifolds used in this paper are the \Cech\
nerves $N_\bull\U$ of covers $\U=\{U_\alpha\}_{\alpha\in I}$. The face
map $\delta_i=d_i^*:N_{k+1}\U\to N_k\U$ corresponds to the inclusion
of the open subspace
\begin{equation*}
  U_{\alpha_0\dots\alpha_{k+1}} \subset N_{k+1}\U
\end{equation*}
into the open subspace
\begin{equation*}
  U_{\alpha_0\dots\widehat{\alpha}_i\dots\alpha_{k+1}} \subset N_k\U ,
\end{equation*}
and the degeneracy map $\sigma_i=s_i^*:N_{k-1}U\to N_kU$ corresponds
to the identification of the open subspace
\begin{equation*}
  U_{\alpha_0\dots\alpha_{k-1}} \subset N_{k-1}\U
\end{equation*}
with the open subspace
\begin{equation*}
  U_{\alpha_0\dots\alpha_i\alpha_i\dots\alpha_k} \subset N_k\U .
\end{equation*}
Any simplicial map $f^*:M_\ell\to M_k$ is the composition of a
sequence of face maps followed by a sequence of degeneracy maps. In
particular, we see that in the case $M_\bull=N_\bull\U$ of the nerve
of a cover, all of these maps are \'etale.

A covariant functor $X^\bull$ from $\Delta$ to a category $\CC$ is
called a cosimplicial object of $\CC$. These arise as the result of
applying a contravariant functor to a simplicial space: for example,
applying the de~Rham functor $\Om^\bull(-)$ to the simplicial graded
supermanifold $N_\bull\U$, we obtain the cosimplicial differential
graded commutative superalgebra $\Om^\bull(N_\bull\U)$. We will also
be interested in the cosimplicial graded Lie superalgebra
\begin{equation*}
  \F(N_\bull\U)
\end{equation*}
with the Batalin--Vilkovisky antibracket, the cosimplicial
differential graded Lie superalgebra
\begin{equation*}
  \B(N_\bull\U)
\end{equation*}
with the Soloviev antibracket and differential $d$, and the
cosimplicial curved graded Lie superalgebra
\begin{equation*}
  \B(N_\bull\U)[[u]]
\end{equation*}
with the Soloviev antibracket, differential $d_u$ and curvature $u\D$.

Associated to a cosimplicial superspace $V^\bull$ is a graded
superspace $N^\bull(V)$, called the normalized cochain complex,
defined as follows:
\begin{equation*}
  N^k(V) = \bigcap_{i=0}^{k-1} \, \ker \bigl( s^i \colon V^k \to
  V^{k-1} \bigr) .
\end{equation*}
This is a complex, with differential
\begin{equation*}
  \delta = \sum_{i=0}^{k+1} (-1)^i d^i : N^k(V) \to N^{k+1}(V) .
\end{equation*}
If the superspaces $V^k$ making up the cosimplicial superspace are
themselves complexes $V^k=V^{\bull k}$, with differential
$d:V^{jk}\to V^{j+1,k}$, we obtain a double complex, with external
differential $\delta:N^k(V^j)\to N^{k+1}(V^j)$ and internal
differential $d:N^k(V^j)\to N^k(V^{j+1})$: the totalization of
$V^{\bull\bull}$ is the graded superspace
\begin{equation*}
  |V|^n = \prod_{k=0}^\infty N^k(V^{n-k}) ,
\end{equation*}
with differential $d_{\Tot}=\delta+(-1)^kd$.

The de~Rham complex of the simplicial graded supermanifold $N_\bull\U$
is the totalization $|\Om^\bull(N_\bull\U)|$ of the de~Rham complex of
$N_\bull\U$. There is a morphism $N_\bull\U\to M$ of simplicial graded
supermanifolds from $N_\bull\U$ to the constant simplicial graded
supermanifold $M$, which induces a quasi-isomorphism of complexes
\begin{equation*}
  \Om^\bull(M) \to |\Om^\bull(N_\bull\U)| .
\end{equation*}
We have $d\nu=\epsilon^*\om$ and $\delta\nu+d\mu=0$. Let
$[\om]\in\Om^0(N_2\U)$ be the \Cech\ differential of
$\mu\in\Om^0(N_1\U)$:
\begin{equation*}
  [\om]_{\alpha_0\alpha_1\alpha_2} =
  \mu_{\alpha_0\alpha_1}|_{U_{\alpha_0\alpha_1\alpha_2}}
  - \mu_{\alpha_0\alpha_2}|_{U_{\alpha_0\alpha_1\alpha_2}}
  + \mu_{\alpha_1\alpha_2}|_{U_{\alpha_0\alpha_1\alpha_2}} .
\end{equation*}
We have $d[\om]=d\delta\mu=\delta d\mu=-\delta^2\mu=0$, hence
$[\om]$ is a locally constant \Cech\ 2-cocycle. By the formula
\begin{equation*}
  d_{\Tot}(\nu+\mu) = \epsilon^*\om - [\om] ,
\end{equation*}
we see that $[\om]$ represents the  cohomology class of $\om$ in
the \Cech\ complex.

The cocycle $[\om]$ is irrelevant in classical mechanics, since being
locally constant, it does not contribute to the Euler-Lagrange
equations. It assumes great importance in quantum mechanics, since it
measures shifts in the phase of the Feynman integrand.

The construction of $|\Om^\bull(N_\bull\U)|$ behaves well under
refinement of covers. A refinement $\V=\{V_\beta\}_{\beta\in J}$ of a
cover $\U=\{U_\alpha\}_{\alpha\in I}$ is determined by a function of
indexing sets $\phi:J\to I$, such that for all $\beta\in J$, $V_\beta$
is a subset of $U_{\phi(\beta)}$. There is a morphism of cosimplicial
differential graded superalgebras
$\Phi^*:\Om^\bull(N_\bull\U)\to\Om^\bull(N_\bull\V)$, obtained by
restricting differential forms on
$U_{\phi(\alpha_0)\dots\phi(\alpha_k)}$ to differential forms on
$V_{\alpha_0\dots\alpha_k}$. Applying the totalization functor, we
obtain a morphism of complexes
\begin{equation*}
  \Phi^* : |\Om^\bull(N_\bull\U)| \to |\Om^\bull(N_\bull\V)| .
\end{equation*}
If we have a further refinement $\W=\{W_\gamma\}_{\gamma\in K}$ of
$\V=\{V_\beta\}_{\beta\in J}$ with $\psi:K\to J$, we may define a
composition of these refinements $\phi\psi:K\to I$, and we obtain a
commuting triangle of morphisms of complexes
\begin{equation*}
  \begin{tikzcd}
    & {}|\Om^\bull(N_\bull\U)| \arrow{dl}[']{\Phi^*}
    \arrow{dr}{\Psi^*\Phi^*} \\
    {}|\Om^\bull(N_\bull\V)| \arrow{rr}[']{\Psi^*} & &
    {}|\Om^\bull(N_\bull\W)|
  \end{tikzcd}
\end{equation*}
In the special case where the cover $\U=\{M\}$ has just one element,
the whole space $M$, we obtain the commutative diagram
\begin{equation*}
  \begin{tikzcd}
    & \Om^\bull(M) \arrow{dl} \arrow{dr} \\
    {}|\Om^\bull(N_\bull\V)| \arrow{rr}[']{\Psi^*} & &
    {}|\Om^\bull(N_\bull\W)|
  \end{tikzcd}
\end{equation*}

In this section, we generalize the classical master equation of
Batalin--Vilkovisky theory to a Maurer--Cartan equation for the
cosimplicial graded Lie superalgebra $\F(N_\bull\U)$. We might expect
this generalization to simply be the Maurer--Cartan equation for the
totalization $|\F(N_\bull\U)|$, but $|\F(N_\bull\U)|$ is not a
differential graded Lie superalgebra. (This problem is related to the
absence of a natural graded commutative product on the singular
cochains of a topological space.) To circumvent this difficulty, we
use a technique introduced in rational homotopy theory by Sullivan
\cite{Sullivan} (see also Bousfield and Guggenheim \cite{BG}), the
Thom--Whitney normalization.

Let $\Om_k$ be the free graded commutative algebra with generators
$t_i$ of degree $0$ and $dt_i$ of degree $1$, and relations
\begin{equation*}
  t_0 + \dotsb + t_k = 1
\end{equation*}
and $dt_0+\dotsb+dt_k=0$. There is a unique differential $\delta$ on
$\Om_k$ such that $\delta(t_i)=dt_i$, and $\delta(dt_i)=0$.

The differential graded commutative algebras $\Om_k$ are the
components of a simplicial differential graded commutative algebra
$\Om_\bull$ (that is, contravariant functor from $\Delta$ to the
category of differential graded commutative algebras): the arrow
$f:[k]\to[\ell]$ in $\Delta$ acts by the formula
\begin{equation*}
  f^*t_i = \sum_{f(j)=i} t_j , \quad 0\le i\le n .
\end{equation*}

The Thom--Whitney normalization of a cosimplicial superspace is an
example of the categorical construction called an end:
\begin{equation*}
  N^\bull_{\TW}(V) = \int_\Delta \Om_\bull \o V^\bull .
\end{equation*}
In other words, $N^\bull_{\TW}(V)$ is the equalizer of the maps
\begin{equation*}
  \begin{tikzcd}
    \displaystyle
    \prod_{k=0}^\infty \Om_k \o V^k
    \arrow[shift right]{r}[']{1\o f_*}
    \arrow[shift left]{r}{f^*\o1} &
    \displaystyle
    \prod_{k,\ell=0}^\infty \prod_{f:[k]\to[\ell]} \Om_k \o V^\ell
  \end{tikzcd}
\end{equation*}
In \cite{Whitney} Whitney defines an injective morphism between the
two normalizations
\begin{equation*}
  \OM : N(V) \to N_{\TW}(V)
\end{equation*}
compatible with the differentials. The Whitney map takes a \Cech\
$k$-cochain $(\nu_{\alpha_0\dots\alpha_k})$ to
\begin{equation*}
  \OM(\nu) = \frac{1}{k+1} \sum_{\alpha_0,\dots,\alpha_k\in I}
  \sum_{i=0}^k (-1)^i
  t_{\alpha_i}dt_{\alpha_0}\dots\widehat{dt}_{\alpha_i}\dots
  dt_{\alpha_k} \o \nu_{\alpha_0\dots\alpha_k} .
\end{equation*}
The differential $\delta\nu$ is taken by this map to
\begin{equation*}
  \OM(\delta\nu) = \frac{1}{k+2} \sum_{\alpha_0,\dots,\alpha_{k+1}\in I}
  \sum_{j=0}^{k+1} \sum_{i=0}^{k+1} (-1)^{i+j}
  t_{\alpha_i}dt_{\alpha_0}\dots\widehat{dt}_{\alpha_i}\dots
  dt_{\alpha_{k+1}} \o \nu_{\alpha_0\dots\widehat{\alpha}_j\dots\alpha_{k+1}} .
\end{equation*}
Only the terms with $i=j$ contribute, and we obtain
\begin{equation*}
  \OM(\delta\nu) = \sum_{\alpha_1,\dots,\alpha_{k+1}\in I}
  dt_{\alpha_1}\dots dt_{\alpha_{k+1}} \, \nu_{\alpha_1\dots\alpha_{k+1}} .
\end{equation*}
On the other hand, we have
\begin{equation*}
  \delta\OM(\nu) = \sum_{\alpha_0,\dots,\alpha_k\in I}
  dt_{\alpha_0}\dots dt_{\alpha_k} \o \nu_{\alpha_0\dots\alpha_k} ,
\end{equation*}
and we conclude that
\begin{equation}
  \label{Whitney}  
  \delta\OM(\nu)=\OM(\delta\nu) .
\end{equation}

If the superspaces $V^k$ making up the cosimplicial superspace are
themselves graded $V^k=V^{\bull k}$, with differential
$d:V^{jk}\to V^{j+1,k}$, we obtain a double complex, with external
differential $\delta:N_{\TW}^k(V^j)\to N_{\TW}^{k+1}(V^j)$ and
internal differential $d:N_{\TW}^k(V^j)\to N_{\TW}^k(V^{j+1})$: the
Thom--Whitney totalization of $V^{\bull\bull}$ is the graded
superspace
\begin{equation*}
  \|V\|^n = \prod_{k=0}^\infty N_{\TW}^k(V^{n-k}) ,
\end{equation*}
with differential $d_{\TW}=\delta+(-1)^kd$. The Whitney map $\OM$
induces an injective morphism of graded superspaces
\begin{equation*}
  \OM : |V|^\bull \to \|V\|^\bull .
\end{equation*}
By \eqref{Whitney}, this is a morphism of complexes. For the
cosimplicial superspaces which we consider in this paper, $\OM$ is a
quasi-isomorphism. (This is proved using a spectral sequence, and we
must impose additional hypotheses in order for the spectral sequence
to converge. It is sufficient to assume that $V^\bull$ is the (graded
super)space of sections of a sheaf over the \Cech\ nerve of a cover
$\U$ of bounded dimension; that is, $U_{\alpha_0\dots\alpha_k}$ is
empty if the cardinality of the set of indices
$\{\alpha_0,\dots,\alpha_k\}$ is sufficiently large. In particular,
this condition holds if the cover is finite.)

Applying this construction to the cosimplicial complex
$\Om^\bull(N_\bull\U)$, we obtain the Thom--Whitney totalization
$\|\Om^\bull(N_\bull\U)\|$, and an injective morphism of complexes
\begin{equation*}
  |\Om^\bull(N_\bull\U)| \hookrightarrow \|\Om^\bull(N_\bull\U)\| .
\end{equation*}
The advantage of Sullivan's Thom--Whitney normalization is that it
takes cosimplicial differential graded commutative superalgebras to
differential graded commutative superalgebras. (Its disadvantage is
that its use is restricted to characteristic zero.) The reason is very
simple: if $V^k$ is a differential graded commutative superalgebra,
then so is $\Om_k \o V^k$. The differential on $\Om_k\o V^k$ is
\begin{equation*}
  d(\alpha\o v) = \delta\alpha \o v + (-1)^i \alpha \o dv , \quad
  \alpha\in\Om^i_k, v\in V^{jk} ,
\end{equation*}
and the product is
\begin{equation*}
  (\alpha_1\o v_1)(\alpha_2\o v_2) = (-1)^{i_2\pa(v_1)} \,
  \alpha_1\alpha_2 \, v_1v_2 ,
\end{equation*}
where $\alpha_\ell\in\Om^{i_\ell}_k$ and $v_\ell\in V^{j_\ell k}$. The
Thom--Whitney totalization $\|V\|$ is a subspace of the product of
differential graded commutative superalgebras $\Om_k\o V^k$, and this
subspace is preserved by the differential and by the product. In this
way, we see that by expanding $|\Om^\bull(N_\bull\U)|$ to the larger
complex $\|\Om^\bull(N_\bull\U)\|$, we obtain a construction which
associates to the cover $\U$ a differential graded commutative
superalgebra.

The Thom--Whitney totalization also takes cosimplicial curved Lie
superalgebras to curved Lie superalgebras. In particular, if $L^\bull$
is a cosimplicial curved Lie superalgebra, the antibracket on $\|L\|$
is given by the formula
\begin{equation*}
  ( \alpha_1\o v_1 , \alpha_2\o v_2 ) = (-1)^{j_2(\pa(v_1)+1)} \,
  \alpha_1\alpha_2 \, ( v_1 , v_2 ) .
\end{equation*}
The Thom--Whitney totalization $\|V\|$ is a subspace of the product
curved Lie superalgebra $\prod_k\Om_k\o V^k$, which is preserved by
the antibracket. In the curved case, the curvatures of the curved Lie
superalgebras $V^k$ assemble to an element of degree $1$ in
$|V|\subset\|V\|$, which is easily seen to be a curvature element for
the Thom--Whitney totalization.

It follows that the Thom--Whitney totalizations $\|\F(N_\bull\U)\|$
and
\begin{equation*}
  \|\B(N_\bull\U)\|
\end{equation*}
are differential graded Lie superalgebras. The differential of
$\|\F(N_\bull\U)\|$ is induced by the differentials $\delta$ of the
algebras $\Om_k$, while the differential of $\|\B(N_\bull\U)\|$ also
involves the internal differential of $\B(N_\bull\U)$. The antibracket
of $\|\F(N_\bull\U)\|$ is induced by the Batalin--Vilkovisky
antibracket on $\F(N_\bull\U)$, while the antibracket of
$\|\B(N_\bull\U))$ is induced by the Soloviev antibracket. Similarly,
the Thom--Whitney totalization
\begin{equation*}
  \|\B(N_\bull\U)[[u]]\| \cong \|\B(N_\bull\U)\|[[u]]
\end{equation*}
is a curved Lie superalgebra, whose differential is the sum the
differential $d_u$ of $\B(N_\bull\U)[[u]]$, with curvature $\D$, and
$\delta$, and whose antibracket is induced by the Soloviev
antibracket.

Given a refinement $\V$ of a cover $\U$, and a refinement $\W$ of
$\V$, we obtain a commuting diagram of Thom--Whitney totalizations
\begin{equation*}
  \begin{tikzcd}
    & \|\F(N_\bull\U)\| \arrow{dl}[']{\Phi^*}
    \arrow{dr}{\Psi^*\Phi^*} \\
    \|\F(N_\bull\V)\| \arrow{rr}[']{\Psi^*} & & \|\F(N_\bull\W)\|
  \end{tikzcd}
\end{equation*}
The arrows in this diagram are morphisms of differential graded Lie
algebras. There are also commuting triangles of differential graded
Lie superalgebras
\begin{equation*}
  \begin{tikzcd}
    & \|\B(N_\bull\U)\| \arrow{dl}[']{\Phi^*}
    \arrow{dr}{\Psi^*\Phi^*} \\
    \|\B(N_\bull\V)\| \arrow{rr}[']{\Psi^*} & &
    \|\B(N_\bull\W)\|
  \end{tikzcd}
\end{equation*}
and of curved Lie superalgebras
\begin{equation*}
  \begin{tikzcd}
    & \|\B(N_\bull\U)[[u]]\| \arrow{dl}[']{\Phi^*}
    \arrow{dr}{\Psi^*\Phi^*} \\
    \|\B(N_\bull\V)[[u]]\| \arrow{rr}[']{\Psi^*} & &
    \|\B(N_\bull\W)[[u]]\|
  \end{tikzcd}
\end{equation*}

The analogue of the classical master equation \eqref{MASTER} in the
global setting is the Maurer--Cartan equation for the differential
graded superalgebra $\|\B(N_\bull\U)\|$:
\begin{equation*}
  d_{\TW}\S + \half \[ \S , \S \] = 0 .
\end{equation*}
Here, $\S$ is a collection of elements
$\S^j_{\alpha_0\dots\alpha_k} \in \Om^j_k \o
\B^{-j}(U_{\alpha_0\dots\alpha_k})$ of total degree $0$, simplicial,
in the sense that for each $f:[k]\to[\ell]$,
\begin{equation*}
  (f^*\o1)\S^j_\ell=(1\o f_*)\S^j_k ,  
\end{equation*}
which satisfies the sequence of Maurer--Cartan equations
\begin{equation*}
  \delta\S^{j-1} + (-1)^j d\S^j + \frac12 \sum_{i=0}^j \[ \S^i ,
    \S^{j-i} \] = 0 .
\end{equation*}
This makes the following definition natural. The graded derivation on
the curved Lie superalgebra $\|\B(N_\bull\U)[[u]]\|$ is
\begin{equation*}
  d_{\TW,u}=d_{\TW}+u\iota .
\end{equation*}
\begin{definition}
  Let $M$ be a graded supermanifold $M$. A global covariant field theory for
  $M$ is a solution of the Maurer--Cartan equation for the curved Lie
  superalgebra $\|\B(N_\bull\U)[[u]]\|$, where $\U$ is a cover of $M$:
  \begin{equation*}
    d_{\TW,u}\S_u + \half \[ \S_u , \S_u \] = - u \D .
  \end{equation*}
\end{definition}

Since the Maurer--Cartan set $\MC(L)$ is a functor on the category of
curved Lie superalgebras, there is a commutative diagram of sets
\begin{equation*}
  \begin{tikzcd}[cramped,column sep=tiny]
 & \MC(\|\B(N_\bull\U)[[u]]\|) \arrow{dl}[']{\Phi^*}
    \arrow{dr}{\Psi^*\Phi^*} \\
    \MC(\|\B(N_\bull\V)[[u]]\|) \arrow{rr}[']{\Psi^*} & &
    \MC(\|\B(N_\bull\W)[[u]]\|)
  \end{tikzcd}
\end{equation*}
We see that if $\S_u$ is a covariant field theory with respect to a
cover $\U$ of $M$, then it induces a covariant field theory with
respect to any refinement of $\U$.

We now come to the main result of this section.
\begin{theorem}
  \label{global}
  Let $M$ be a graded supermanifold with symplectic form $\om$. Let
  $\U$ be a cover of $M$, and let $(\nu,\mu)\in|\Om^\bull(N_\bull\U)|$
  be a one-form such that $d\nu=\epsilon^*\om$ and $\delta\nu=d\mu$.

  Let $\S_u\in\check{C}^0(\U,\B[[u]])$ be the \Cech\ cochain which
  over $U_\alpha$ equals the local covariant field theory
  $\S_{\alpha,u}$ associated to the one-form $\nu_\alpha$.

  The element $\SS_u=\OM(\S_u+\mu\eps)$ is a global covariant field
  theory, that is, a Maurer--Cartan element in the curved Lie
  superalgebra $\|\B(N_\bull\U)[[u]]\|$.
\end{theorem}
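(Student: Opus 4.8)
The plan is to verify the Maurer--Cartan equation
\begin{equation*}
  d_{\TW,u}\SS_u + \half\[\SS_u,\SS_u\] = -u\D
\end{equation*}
in $\|\B(N_\bull\U)[[u]]\|$ by exploiting the fact that $\OM$ is a morphism of complexes (equation \eqref{Whitney}) together with multiplicativity of the Soloviev antibracket under $\OM$, and then reducing everything to facts already established locally. Concretely, I would write $\SS_u = \OM(\S_u) + \OM(\mu\eps)$, where $\S_u$ is the $0$-cochain whose $\alpha$-component is the local covariant field theory $\S_{\alpha,u} = \S_{0,\alpha}+u\S_{1,\alpha}$ from Theorem~\ref{Su}, and where $\mu=\{\mu_{\alpha_0\alpha_1}\}$ is the $1$-cochain with $d\mu=\delta\nu$. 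Note $\OM(\S_u)$ lives in $N^0_{\TW}$ (degree $(0,0)$) and $\OM(\mu\eps)$ lives in $N^1_{\TW}$ but with the $\eps$ shifting ghost number by $-1$, so both sit in total degree $0$ as required.

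First I would compute $d_{\TW,u}\SS_u = (\delta + (-1)^k d + u\iota)\SS_u$ piece by piece. On the $\OM(\S_u)$ piece: the internal differential contributes $(-1)^0\, d\,\OM(\S_0 + u\S_1) = \OM(d\S_0 + u\,d\S_1)$; the $u\iota$ term contributes $\OM(u\iota\S_u)$; and $\delta$ annihilates it (it is a $0$-cochain of locally defined objects, but the \Cech\ differential of $\{\S_{0,\alpha}\}$ is nonzero — it equals $\delta\S_0 = \OM$-image of $\p\mu\,$-type terms). Here is the key bookkeeping: since $d\nu_\alpha = \om|_{U_\alpha}$ is independent of $\alpha$ and $\nu_{\alpha_0}-\nu_{\alpha_1} = d\mu_{\alpha_0\alpha_1}$, the formula $\S_{0,\alpha} = (-1)^{\pa(\xi^a)}\nu_{\alpha,a}\p\xi^a$ gives $\delta\S_0 = \p\mu$ (exactly as in the computation following Theorem~\ref{Su} showing $\S_0-\S_0' = \p\mu(\xi)$), and more generally the $\eps$-component of $\S_{1,\alpha}$, namely $(-1)^{\pa(\xi^a)}\nu_{\alpha,a}\pi^{ab}\xi^+_b$, has $\delta$ equal to $\p_a\mu\,\pi^{ab}\xi^+_b$, which one recognizes as $-\iota$ applied to (the $u$-coefficient of) the $\A$-part. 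Meanwhile on the $\OM(\mu\eps)$ piece, $\delta\,\OM(\mu\eps) = \OM(\delta\mu\,\eps)$ and $d\,\OM(\mu\eps)=\OM(d(\mu\eps))=(-1)^{\pa(\mu)}\OM(\p\mu)$ by the formula \eqref{differential} for the differential on $\B$. Assembling these and using the sign conventions for $d_{\TW}=\delta+(-1)^kd$, the $\delta$-contribution from the $\S$-piece cancels the $\p\mu$ coming from $d\,\OM(\mu\eps)$, and the $\OM(\delta\mu\,\eps)$ term is the one that will be needed to absorb the off-diagonal \Cech\ part of $\half\[\SS_u,\SS_u\]$.

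Next I would compute $\half\[\SS_u,\SS_u\]$. Using multiplicativity, $\[\OM(a),\OM(b)\]$ is a sum of terms $t\,dt\cdots dt \o \[a_I, b_J\]$ where the antibrackets are taken componentwise over the nerve; the leading ($N^0_{\TW}$) part is simply $\half\OM(\[\S_u,\S_u\])$, which by Theorem~\ref{Su} applied on each $U_\alpha$ equals $\OM(-u\D - d_u\S_{u})$ on that patch — i.e.\ $-u\D - \OM(d\S_0 + u\,d\S_1 + u\iota\S_u)$ — and this is exactly what cancels the internal-differential and $u\iota$ contributions computed above, leaving $-u\D$. The remaining terms of $\half\[\SS_u,\SS_u\]$ live in $N^{\ge1}_{\TW}$: these are the cross terms $\[\OM(\S_u),\OM(\mu\eps)\]$ and $\half\[\OM(\mu\eps),\OM(\mu\eps)\]$. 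The latter vanishes because $\eps^2$-terms vanish in the extended Soloviev bracket \eqref{Soloviev2}. For the former, using \eqref{Soloviev2} one gets $\[\S_{u},\mu\eps\] = \[\S_{u},\mu\]\eps$ (the antifield-less $\mu$ kills one of the two summands), and since $\mu$ is (locally) constant-along-the-fiber in the relevant sense, $\[\S_{0,\alpha},\mu_{\alpha_0\alpha_1}\]$ reproduces precisely the $\p\mu$-derivative terms needed to match $\OM(\delta\mu\,\eps)$ — this is the global analogue of the local identity $\S_0 - \S_0' = \p\mu$. So the $N^1_{\TW}$-components balance, and by Lemma~\ref{recursion}-style bookkeeping (or just by the explicit shape of the Whitney forms $t\,dt\cdots dt$) there are no higher components.

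The main obstacle I expect is the third paragraph's sign-and-\Cech\ bookkeeping: keeping straight, simultaneously, (i) the Koszul signs in the extended Soloviev bracket \eqref{Soloviev2} and the $\eps$-parity reversal, (ii) the $(-1)^k$ in $d_{\TW}$ and the alternating signs $\sum(-1)^i d^i$ in $\delta$, and (iii) the combinatorial factors $\tfrac1{k+1}$ in the Whitney map $\OM$ — and checking that the cross term $\[\OM(\S_u),\OM(\mu\eps)\]$ matches $\OM(\delta\mu\,\eps)$ \emph{with the correct sign and coefficient}. The cleanest way to control this is not to expand Whitney forms at all, but to observe that $\OM$ is an algebra morphism onto its image for the relevant products (the antibracket restricts to $\|\cdot\|$ as stated in the excerpt) and a chain map by \eqref{Whitney}, so the entire identity pulls back to the single \Cech-level identity
\begin{equation*}
  d_u(\S_u + \mu\eps) + \delta(\S_u + \mu\eps) + \half\[\S_u+\mu\eps,\S_u+\mu\eps\] = -u\D
\end{equation*}
in $\mathrm{Tot}\,\B(N_\bull\U)[[u]]$ — and \emph{that} splits into: the $N^0$ equation, which is Theorem~\ref{Su} on each patch; the $N^1$ equation $\delta\S_u + d_u(\mu\eps) + \[\S_u,\mu\eps\] = 0$, which is the local statement $\S_{0,\alpha}-\S_{0,\beta}=\p\mu_{\alpha\beta}$ (plus its $u$-companion via $\iota$) repackaged; and the $N^2$ equation $\delta(\mu\eps)=0$ at the level of $\B$, i.e.\ $\p[\om]=0$ since $[\om]$ is locally constant. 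Once phrased this way the proof is essentially the verification that $\OM$ is a morphism of curved Lie superalgebras composed with the local theorem, and the only real work is checking that $\delta$ on the \Cech-nerve, applied to the sheaf-valued cochain $\S_u$, reproduces $\p\mu$ on overlaps — which is immediate from $d\mu_{\alpha\beta}=\nu_\alpha-\nu_\beta$ and the defining formula for $\S_{0,\alpha}$.
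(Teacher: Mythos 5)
Your outline of the direct verification---splitting the Maurer--Cartan equation by form degree along the simplex, invoking Theorem~\ref{Su} in degree $0$, the relation $d\mu_{\alpha_0\alpha_1}=\nu_{\alpha_0}-\nu_{\alpha_1}$ in degree $1$, and the local constancy of $\delta\mu$ in degree $2$---has the right shape, and the paper records the proof only as a ``lengthy calculation.'' But the shortcut you lean on in your final paragraph is broken. You assert that $\OM$ ``is an algebra morphism onto its image for the relevant products,'' so that the whole identity pulls back to a single \Cech-level identity in $|\B(N_\bull\U)[[u]]|$. This is false, and the paper says so: $|\F(N_\bull\U)|$ is \emph{not} a differential graded Lie superalgebra, which is precisely why the Thom--Whitney totalization is introduced at all. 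Concretely, the product $t_{\alpha_i}t_{\alpha_j}$ of two Whitney $0$-forms is quadratic and is not a Whitney form, so the image of $\OM$ is not closed under the antibracket, and the displayed ``\Cech-level identity'' does not make sense. The same issue undermines your claim that the form-degree-$0$ part of $\half\[\SS_u,\SS_u\]$ is ``simply $\half\OM(\[\S_u,\S_u\])$'': over $U_{\alpha_0\dots\alpha_k}$ that part is $\half\sum_{i,j}t_{\alpha_i}t_{\alpha_j}\[\S_{\alpha_i,u},\S_{\alpha_j,u}\]$, and reducing it to $\sum_i t_{\alpha_i}\cdot\half\[\S_{\alpha_i,u},\S_{\alpha_i,u}\]$ requires, beyond $\sum_i t_{\alpha_i}=1$, the additional identity $\[\S_{\alpha_i,u}-\S_{\alpha_j,u},\,\S_{\alpha_i,u}-\S_{\alpha_j,u}\]=0$ on overlaps. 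That identity is true (it follows from $\nu_{\alpha_i}-\nu_{\alpha_j}=d\mu_{\alpha_i\alpha_j}$, the antisymmetry of $\pi^{ab}$, and the absence of $\eps^2$ in \eqref{Soloviev2}), but it is a genuine piece of the lengthy calculation that you neither state nor prove, and it cannot be bypassed by multiplicativity of $\OM$.

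Second, your treatment of the form-degree-$2$ component misses the one subtlety the paper explicitly singles out. The term $\delta\OM(\mu\eps)=\OM(\delta\mu\,\eps)=\OM([\om]\eps)$ must vanish on the nose: the only other form-degree-$2$ contribution, $\half\[\OM(\mu\eps),\OM(\mu\eps)\]$, is already zero by \eqref{Soloviev2}, so nothing is available to cancel it. Your justification, ``$\p[\om]=0$ since $[\om]$ is locally constant,'' only shows that $[\om]\eps$ is closed for the internal differential \eqref{differential}, not that it is zero. The correct reason is that $[\om]$ is locally constant while $\B^{-1}=\A^{-1}\oplus\tilde{\A}^0\eps$ with $\tilde{\A}^0=\A^0/\C$, so constant multiples of $\eps$ vanish in the sheaf $\B$ by construction. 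Without this quotient built into $\tilde{\A}$ the theorem would fail whenever the class $[\om]$ is nontrivial, so this is not a cosmetic point; you should replace your justification by the observation that $[\om]\eps=0$ in $\B$.
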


This result is proved by lengthy calculation. One subtle point is that
\begin{equation*}
  \delta\OM(\mu\eps) = \OM(\delta\mu\eps) \in
  \check{C}^2(\U,\B[[u]])
\end{equation*}
vanishes. Indeed,
\begin{equation*}
  \delta\mu \in \check{C}^2(\U,\CO)
\end{equation*}
is locally constant and by definition constant multiples of $\eps$
vanish in the sheaf $\B$.

Over the set $U_{\alpha_0\dots\alpha_k}$, the covariant field theory $\SS_u$ is
given by the explicit formula
\begin{equation*}
  \SS_u
  = \sum_{i=0}^k t_{\alpha_i} \o \S_{\alpha_i,u} \bigm|_{U_{\alpha_0\dots\alpha_k}}
  + \frac{1}{2} \sum_{i,j=0}^k
  (t_{\alpha_i}dt_{\alpha_j}-t_{\alpha_j}dt_{\alpha_i}) \o
  \mu_{\alpha_i\alpha_j} \bigm|_{U_{\alpha_0\dots\alpha_k}}\eps .
\end{equation*}
Let $W\in\CO(M)$ be a function on $M$ of degree $1$ and odd parity
such that $\{W,W\}=0$. As in the local case, we may twist this global
covariant field theory by $u^{-1}W$:
\begin{align*}
  \SS_u\bull u^{-1}W
  &= \SS_u + u^{-1} \bigl( d_uW + \[ \SS_u , W \] \bigr) \\
  &= \SS_u +  W\eps + u^{-1} \sum_{i=0}^k t_{\alpha_i} \o \[
    \S_{\alpha_i,u} , W \] \bigr) \bigm|_{U_{\alpha_0\dots\alpha_k}} .
\end{align*}
More generally, $W$ might be any \Cech\ cocycle
$W\in\check{C}^\bull(\U,\CO)$ of degree $1$ and odd parity, such that
$\{\OM(W),\OM(W)\}=0$.

We close this section with a discussion of how the global covariant
field theory associated to the class $(\nu,\mu)$ changes under an
equivalence of theories. The type of equivalence we have in mind is a
homotopy of the form $\tilde{\nu}\in\check{C}^0(\U,\Om^0)$ between a
pair of classes $(\nu_i,\mu_i)$, $i=0,1$:
\begin{align*}
  \nu_1 - \nu_0 &= d\tilde{\nu} & \mu_1 - \mu_0 &= \delta\tilde{\nu} .
\end{align*}
\begin{proposition}
  Let $\SS_{i,u}$ be the global covariant field theory associated to
  $(\nu_i,\mu_i)$. Then $\SS_{0,u}$ and $\SS_{1,u}$ are gauge
  equivalent:
  \begin{equation*}
    \SS_{0,u}\bull\OM(\tilde{\nu}\eps) = \SS_{1,u} .
  \end{equation*}
\end{proposition}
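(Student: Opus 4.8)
The plan is to evaluate the gauge‑action series of Section~2 directly on $\SS_{0,u}$ with gauge parameter $y:=\OM(\tilde\nu\eps)$, an odd element of $\|\B(N_\bull\U)[[u]]\|^{-1}$ which over $U_{\alpha_0\dots\alpha_k}$ equals $\sum_{i=0}^k t_{\alpha_i}\o\tilde\nu_{\alpha_i}\eps$. Write $\S^{(i)}_u=\S^{(i)}_0+u\S^{(i)}_1\in\check{C}^0(\U,\B[[u]])$ for the \Cech\ cochain which over $U_\alpha$ equals the local covariant field theory attached to $\nu_{i,\alpha}$ by Theorem~\ref{Su}, so that $\SS_{i,u}=\OM(\S^{(i)}_u+\mu_i\eps)$. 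The crucial first observation is that $\ad(y)^2=0$ on $\|\B(N_\bull\U)[[u]]\|$: by \eqref{Soloviev2}, the Soloviev antibracket of an element of $\tilde\A\eps$ with any element of $\B$ again lies in $\tilde\A\eps$, while the antibracket of two elements of $\tilde\A\eps$ vanishes; since $y$ is componentwise valued in $\tilde\A\eps$, the operator $\ad(y)$ carries $\|\B[[u]]\|$ into its $\eps$‑divisible part and kills that part. Consequently the gauge‑action series terminates at first order:
\begin{equation*}
  \SS_{0,u}\bull y = \SS_{0,u} + w - \tfrac12\,\ad(y)\,w ,\qquad
  w := d_{\TW,u}y+(\SS_{0,u},y),
\end{equation*}
and it is enough to prove $w=\SS_{1,u}-\SS_{0,u}$ and $\ad(y)w=0$.

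To compute $w$, I would treat $d_{\TW,u}y=\delta y+dy+u\iota y$ (evaluated on $N^0_{\TW}$) term by term. The term $u\iota y$ vanishes because the non‑$\eps$ component of $\tilde\nu\eps$ is zero. The internal differential of $\B$ sends $\tilde\nu_\alpha\eps$ to $\p\tilde\nu_\alpha$, and $\nu_1-\nu_0=d\tilde\nu$ together with the identity $\p f=(-1)^{\pa(\xi^a)}(\p_af)\,\p\xi^a$ (for $f\in\CO(M)$) gives $dy=\OM(\S^{(1)}_0-\S^{(0)}_0)$. By the Whitney identity \eqref{Whitney} and $\mu_1-\mu_0=\delta\tilde\nu$, the $\Om_\bull$‑differential gives $\delta y=\OM((\delta\tilde\nu)\eps)=\OM((\mu_1-\mu_0)\eps)$. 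For the antibracket, write $\SS_{0,u}=\OM(\S^{(0)}_u)+\OM(\mu_0\eps)$; the bracket with $\OM(\mu_0\eps)$ vanishes since both factors are $\eps$‑divisible. In the bracket with $\OM(\S^{(0)}_u)$, the only term of $\S^{(0)}_{\alpha,u}$ with a nonzero antibracket against the antifield‑free $\eps$‑element $\tilde\nu_\beta\eps$ is the universal antifield‑quadratic term $P=\half\xi^+_a\pi^{ab}\xi^+_b$ of $\S^{(0)}_{\alpha,1}$; a short computation with \eqref{Soloviev} yields $\[P,\tilde\nu_\beta\]=(-1)^{\pa(\xi^c)}\pi^{cb}\xi^+_b\,\p_c\tilde\nu_\beta$, whence $\[P,\tilde\nu_\beta\]\eps=\S^{(1)}_{\beta,1}-\S^{(0)}_{\beta,1}$ by $\nu_1-\nu_0=d\tilde\nu$. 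This bracket is independent of the spectator index, so over $U_{\alpha_0\dots\alpha_k}$ the double sum $\sum_{i,j}t_{\alpha_i}t_{\alpha_j}\o(\cdots)$ collapses by $\sum_i t_{\alpha_i}=1$ in $\Om_k$ to $\sum_j t_{\alpha_j}\o(\cdots)$, giving $(\OM(\S^{(0)}_u),y)=\OM(u(\S^{(1)}_1-\S^{(0)}_1))$. Summing the three contributions,
\begin{equation*}
  w = \OM\bigl((\S^{(1)}_0-\S^{(0)}_0)+u(\S^{(1)}_1-\S^{(0)}_1)+(\mu_1-\mu_0)\eps\bigr)
    = \SS_{1,u}-\SS_{0,u}.
\end{equation*}

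Finally, $\ad(y)w=0$: the summands $\OM(u(\S^{(1)}_1-\S^{(0)}_1))$ and $\OM((\mu_1-\mu_0)\eps)$ are $\eps$‑divisible, hence annihilated by $\ad(y)$, and $\OM(\S^{(1)}_0-\S^{(0)}_0)=\OM(\p\tilde\nu)$ is annihilated too because $\tilde\nu_\alpha$ and $\p\tilde\nu_\alpha$ involve no antifields, so the Soloviev brackets occurring in $\ad(y)$ vanish by \eqref{Soloviev}--\eqref{Soloviev2}. Therefore $\SS_{0,u}\bull\OM(\tilde\nu\eps)=\SS_{0,u}+w=\SS_{1,u}$. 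The hard part is the middle step: the sign bookkeeping in $\[P,\tilde\nu_\beta\]$ and in the three identifications, and checking that $\Om_\bull$‑bilinearity of the Thom--Whitney antibracket together with the relations $\sum_i t_{\alpha_i}=1$ (and $\sum_i dt_{\alpha_i}=0$, used for $\delta y$) reproduces exactly $\OM$ of the difference cochain $(\S^{(1)}_u+\mu_1\eps)-(\S^{(0)}_u+\mu_0\eps)$. The conceptual point that makes the double sum collapse is that $\[\S^{(0)}_{\alpha_i,u},\tilde\nu_{\alpha_j}\eps\]$ does not depend on $\alpha_i$, because the only part of $\S^{(0)}_{\alpha_i,u}$ pairing nontrivially with an $\eps$‑divisible antifield‑free element is the index‑independent quadratic term $\half\xi^+_a\pi^{ab}\xi^+_b$; once $\ad(y)^2=0$ and this observation are in place, the rest is formal.
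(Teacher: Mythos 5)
Your proof is correct and takes essentially the same route as the paper: both arguments reduce the gauge action to the first-order term by computing $d_{\TW,u}\OM(\tilde{\nu}\eps)+\[\SS_{0,u},\OM(\tilde{\nu}\eps)\]=\SS_{1,u}-\SS_{0,u}$ exactly as you do (the $\delta$-, $\p$- and antibracket contributions matching $\mu_1-\mu_0$, $\S^{(1)}_0-\S^{(0)}_0$ and $u(\S^{(1)}_1-\S^{(0)}_1)$ respectively). The only difference is in how the higher terms of the series are killed --- the paper computes the bracket for both $i=0,1$ and subtracts to get $\[\SS_{1,u}-\SS_{0,u},\OM(\tilde{\nu}\eps)\]=0$, while you deduce $\ad(y)^2=0$ and $\ad(y)w=0$ directly from the $\eps$-divisibility structure of \eqref{Soloviev2} --- and both justifications are valid.
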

\begin{proof}
  We have
  \begin{align*}
    d_{\TW,u}\OM(\tilde{\nu}\eps)
    &= \OM(\delta\tilde{\nu}\eps) + \p \OM(\tilde{\nu}) \\
    &= \OM( \mu_1 - \mu_0 ) + (-1)^{\pa(\xi^a)} \OM( ( \nu_{1,a} -
      \nu_{0,a} ) \p\xi^a )
  \end{align*}
  and, for $i=0,1$,
  \begin{align*}
    \[ \SS_{i,u} , \OM(\tilde{\nu}\eps) \]
    &= - u \OM( \xi^+_a \pi^{ab} \p_b\tilde{\nu}\eps) \\
    &= - u \OM( \xi^+_a \pi^{ab} (\nu_{1,b}-\nu_{0,b})\eps ) .
  \end{align*}
  Adding these two equations, we see that
  \begin{equation*}
    d_{\TW,u}\OM(\tilde{\nu}\eps) + \[ \SS_{i,u} ,
      \OM(\tilde{\nu}\eps) \] = \SS_{1,u} - \SS_{0,u} .
  \end{equation*}
  Taking the difference of these two equations, we see that
  \begin{equation*}
    \[ \SS_{1,u} - \SS_{0,u} , \OM(\tilde{\nu}\eps)) \] = 0 .
  \end{equation*}
  We see that
  \begin{equation*}
    \SS_{0,u}\bull\OM(\tilde{\nu}\eps) = \SS_{0,u} + \bigl(
    \SS_{1,u} - \SS_{0,u} \bigr) = \SS_{1,u} ,
  \end{equation*}
  proving the result.
\end{proof}

It follows from this proposition that the global covariant field
theory $\SS_u$ is invariantly associated, up to refinement of the
cover $\U$ over which it is defined and a gauge transformation, to an
element of the Deligne cohomology group
\begin{equation*}
  \check{\mathbb{H}}^2(\U,\R \to \Om^0 \xrightarrow{d} \Om^1) .
\end{equation*}
Quantization of this model requires lifting the two-cocycle
$\delta\mu\in\check{C}^2(\U,\R)$ to $\check{C}^2(\U,\Z)$. With this
constraint, the global covariant field theory is classified by an
element of the Deligne cohomology group
\begin{equation*}
  \check{\mathbb{H}}^2(\U,\Z \to \Om^0 \xrightarrow{d} \Om^1)
\end{equation*}
which classifies Hermitian line bundles with connection on $M$ which
are trivialized on restriction to the cover $\U$.

\section{The particle as a covariant field theory}

In this section, we couple certain covariant field theories to gravity
on the world-line. Of course, one-dimensional gravity carries no
propagating fields: instead, the effect of coupling to gravity is to
render the covariant field theory generally covariant.

The gravitational field in one dimension consists of a nowhere
vanishing one-form $e$ on the world-line, whose square is the metric
along the world-line. Its antifield $e^+$ is a fermionic scalar field
of ghost number $-1$. In addition, there is a ghost field $c$, which
is a fermionic field that transforms as a vector field on the
world-line, and has ghost number $1$: its antifield $c^+$ is a bosonic
field that transforms as a quadratic differential on the world-line,
and has ghost number $-2$.

Consider the graded manifold $T^*\R[1]\cong\R[1]\times\R[-1]$, with
fermionic coordinates $b$ and $c$, respectively of ghost number $-1$
and $1$. We consider the covariant field theory $\X_u$ associated to
the one-form $\nu=-c\,db$, given by the explicit formula
\begin{equation*}
  \X_u = \X_0 + u\X_1 = c\p b + u(b^+c^++c^+c\eps) .
\end{equation*}
Consider the Batalin--Vilkovisky Hamiltonian flow generated by the
Hamiltonian
\begin{equation*}
  \log(b^+) c^+c ,
\end{equation*}
defined in a neighbourhood of the locus where $b^+=1$. The covariant
field theory $\X_u$ flows to
\begin{equation*}
  \X_u\bull \tau\log(b^+)c^+c = (b^+)^{\tau-1} c ( b^+\p b + \tau
  c^+\p c ) + u (b^+)^{\tau-1}c^+ + (1-\tau)uc^+c\eps ,
\end{equation*}
and, setting $\tau=1$, we see that
\begin{equation*}
  \X_u\bull\log(b^+)c^+c = c(b^+\p b+c^+\p c) + uc^+ .
\end{equation*}
We may identify $b^+$ as the gravitational field $e$. The antifield
$e^+$ of $e$ is the field $-b$, and the action in these coordinates
becomes
\begin{equation*}
  \X_u\bull\log(b^+)c^+c = c(-e\p e^++c^+\p c) + uc^+ .
\end{equation*}

Let $\S_u=\S_0+u\S_1$ be a covariant field theory with $\S_i=0$,
$i>1$; denote by $u\D$ its curvature. The product of the covariant
field theories $\S_u$ and $\X_u$ is associated to the symplectic
graded manifold $M\times T^*\R[1]$:
\begin{equation*}
  \S_u + \X_u = \S_0 + c\p b + u(\S_1 + b^+c^+ + c^+c\eps ) .
\end{equation*}
The following theorem shows that after a further gauge transformation,
generated by the Hamiltonian $c\S_1$, this model is transformed into a
theory minimally coupled to the background gravitational field.
\begin{theorem}
  \label{particle}
  Let $M$ be a graded supermanifold, and let $\nu$ be a one-form on
  $M$ such that $d\nu$ is a symplectic form. Let $\S_u$ be the
  associated covariant field theory. Then we have
  \begin{equation*}
    ( \S_u + \X_u )\bull\log(b^+)c^+c\bull c\S_1 = \S_0
    + c(\D+b^+\p b+c^+\p c) + c\iota\S_0 + uc^+ .
  \end{equation*}
\end{theorem}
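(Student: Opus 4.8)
The plan is to apply the two gauge transformations in succession, reusing the identity $\X_u\bull\log(b^+)c^+c = c(b^+\p b + c^+\p c) + uc^+$ established above.

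\emph{First gauge transformation.} The flow generated by $\log(b^+)c^+c$ does not disturb $\S_u$: this Hamiltonian involves only the coordinates $b,c$ of the factor $T^*\R[1]$ and their antifields, which are disjoint from the coordinates of $M$. Hence $\ad(\log(b^+)c^+c)$ annihilates $\S_u$, the antibracket $\[\S_u,\log(b^+)c^+c\]$ vanishes, and $d_u(\log(b^+)c^+c) = u\iota(\log(b^+)c^+c) = -uc^+c\eps$ lies in the $(b,c)$-sector as well. Thus the gauge flow leaves the summand $\S_u$ fixed, and
\begin{equation*}
  (\S_u+\X_u)\bull\log(b^+)c^+c = \S_u + \X_u\bull\log(b^+)c^+c
  = \S_0 + c(b^+\p b + c^+\p c) + u(\S_1 + c^+) ;
\end{equation*}
denote this element of $\B[[u]]$ by $\T_u$. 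Since $\T_u - \S_0 - uc^+ = c(b^+\p b + c^+\p c) + u\S_1$, the asserted identity amounts to $\T_u\bull c\S_1 - \T_u = c\D + c\iota\S_0 - u\S_1$.

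\emph{Second gauge transformation.} Next I would compute $\T_u\bull c\S_1$ from the gauge-action series, or equivalently by integrating the defining equation \eqref{ode} for the flow $\tau\mapsto\T_u\bull\tau\,c\S_1$. The $n=0$ term of that series is $d_u(c\S_1)+\[\T_u,c\S_1\]$. Expanding the antibracket with the Leibniz rule and using that antibrackets between quantities built from the coordinates of $M$ and those of $T^*\R[1]$ vanish (so that $\[\S_0,c\]=0$, $\[\S_1,c\]=0$ and $\[b^+\p b,c\S_1\]=0$), together with the canonical antibracket of the conjugate pair $(c,c^+)$, every antibracket that occurs reduces to one of $\[\S_0,\S_1\]$, $\[\S_1,\S_1\]$, $d\S_1$ and $\iota\S_1$. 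These are determined by expanding the Maurer--Cartan equation \eqref{AKSZ} for $\S_u = \S_0 + u\S_1$ (Theorem~\ref{Su}) in powers of $u$: the coefficient of $u$ gives
\begin{equation*}
  \[\S_0,\S_1\] = -\D - \iota\S_0 - d\S_1 ,
\end{equation*}
the coefficient of $u^2$ gives $\[\S_1,\S_1\] = -2\iota\S_1$, and one also uses $\iota^2=0$, $d\iota+\iota d=\ad(\D)$ and the explicit values $\iota\S_0=-\S_0\eps$, $\iota\S_1=\half\xi^+_a\pi^{ab}\xi^+_b\eps$. Substituting, the $n=0$ term reproduces $c\D + c\iota\S_0 - u\S_1$ up to a remainder all of whose terms contain the fermion $c$ twice after prolongation, or else a factor $\p c$.

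\emph{Higher terms and assembly.} Because $c$ is a single odd coordinate one has $c^2=0$, hence $\[c\S_1,c\S_1\]=0$ and the gauge series is finite --- for the flat particle of the introduction the flow is already quadratic in the parameter $\tau$, the $\tau^2$-contribution being exactly of the shape $c\,\p c\,\xi^+_a\pi^{ab}\xi^+_b$. The terms of order $n\ge1$, obtained by applying $-\tfrac{1}{n+1}\ad(c\S_1)$ to the order-$(n-1)$ expression, cancel the remainder of the previous step: for instance $-\half\[c\S_1,-u\S_1\] = -u\,c\,\iota\S_1$ cancels the $u$-dependent part of that remainder, and the $\p c$- and $c\,\p c$-terms telescope among themselves. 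Collecting all contributions yields
\begin{equation*}
  \T_u\bull c\S_1 = \S_0 + c(\D + b^+\p b + c^+\p c) + c\iota\S_0 + uc^+ ,
\end{equation*}
which is the assertion. I expect the main obstacle to be this last bookkeeping --- the signs in the $n=0$ term and the exact cancellation of the higher-order terms against the remainder, together with the finiteness of the series. A cleaner route may be to integrate the flow $\tau\mapsto\T_u\bull\tau\,c\S_1$ directly, using that the Hamiltonian $c\S_1$ Poisson-commutes with itself (again because $c^2=0$), so that it acts on the jet coordinates by the tame explicit formulas illustrated by $\Phi_\tau$ in the introduction --- a translation of $c^+$ together with a rotation of the coordinates of $M$.
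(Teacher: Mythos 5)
Your proposal follows essentially the same route as the paper: the first gauge transformation touches only the $(b,c)$-sector and reduces to the previously computed $\X_u\bull\log(b^+)c^+c$, and the second is evaluated by the terminating gauge series for $\tau c\S_1$, with the needed brackets $\[\S_0,\S_1\]$ and $\[\S_1,\S_1\]$ read off from the coefficients of $u$ and $u^2$ in the Maurer--Cartan equation and the leftover $u\S_1$, $uc\iota\S_1$ and $c\p c\S_1$ terms cancelling at $\tau=1$. The paper packages exactly this bookkeeping into its two displayed identities \eqref{c} and \eqref{cc} (whose verification it likewise leaves as a ``lengthy but straightforward'' computation), so your sketch and the paper's proof are at the same level of detail and differ only in presentation.
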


\begin{corollary}
  \label{particleV}
  Let $V\in\CO(M)$ be a function on $M$ of ghost number $0$ and even
  parity, and let $W=cV$. After successively applying the gauge
  transformations generated by $\log(b^+)c^+c$ and $c\S_1$, the
  twisted covariant field theory
  \begin{equation*}
    ( \S_u + \X_u )\bull u^{-1}W    
  \end{equation*}
  is transformed into the covariant field theory
  \begin{equation*}
    ( \S_u + \X_u )\bull u^{-1}W\bull\log(b^+)c^+c\bull c\S_1
    = \S_0 - b^+V + c(\D+b^+\p b+c^+\p c) + c\iota\S_0 + uc^+ .
  \end{equation*}
\end{corollary}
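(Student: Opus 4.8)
The plan is to deduce Corollary~\ref{particleV} from Theorem~\ref{particle} by commuting the twist by $u^{-1}W$ past the two subsequent gauge transformations. Abbreviate by $\S_u^\circ=(\S_u+\X_u)\bull\log(b^+)c^+c\bull c\S_1$ the covariant field theory produced in Theorem~\ref{particle}, so that
\begin{equation*}
  \S_u^\circ = \S_0 + c(\D+b^+\p b+c^+\p c) + c\iota\S_0 + uc^+ .
\end{equation*}
The gauge action is the conjugation of the operator $d_u+\ad(-)$ by the corresponding gauge transformation, and composition of gauge transformations obeys the Baker--Campbell--Hausdorff rule of Proposition~\ref{BCH}. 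Since $\ad(\log(b^+)c^+c)$ and $\ad(c\S_1)$ are inner derivations, conjugating $\ad(W)$ by the composite transformation is again inner; hence performing the twist by $u^{-1}W$ first is the same as performing the two gauge transformations first and then twisting by $u^{-1}W'$:
\begin{equation*}
  (\S_u+\X_u)\bull u^{-1}W\bull\log(b^+)c^+c\bull c\S_1 = \S_u^\circ\bull u^{-1}W' , \qquad W'=e^{-\ad(c\S_1)}e^{-\ad(\log(b^+)c^+c)}W .
\end{equation*}

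Next I would identify $W'=b^+cV$. In the coordinates $e=b^+$, the gauge transformation generated by $\log(b^+)c^+c$ is the one written out in the Introduction: it performs the rescaling $c\mapsto b^+c$ and fixes every coordinate of $M$, so $e^{-\ad(\log(b^+)c^+c)}(cV)=b^+cV$. The gauge transformation generated by $c\S_1$ then acts trivially on $b^+cV$, because $c\S_1$ and $b^+cV$ are both linear in the odd coordinate $c$: in $\[c\S_1,b^+cV\]$ the only surviving pairing multiplies a term carrying $c$ by a term carrying $c$, which vanishes since $c^2=0$. Thus $\ad(c\S_1)(b^+cV)=0$, and $W'=b^+cV$.

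It then remains to evaluate the twist $\S_u^\circ\bull u^{-1}(b^+cV)$. Writing $\d=d_u+\ad(\S_u^\circ)$, one checks first that $d_u(b^+cV)=0$, since $\pr(\xi^+_a\p^a)(b^+cV)=b^+cV$ makes $\iota(b^+cV)=0$. In $\[\S_u^\circ,b^+cV\]$ the summands $\S_0$, $c\D$ and $c\iota\S_0$ contribute nothing (the relevant antifield derivatives are absent, or the surviving term carries $c$ twice), the summands $cb^+\p b$ and $cc^+\p c$ contribute terms which cancel, and the summand $uc^+$ contributes $-u\,b^+V$; hence $\d(b^+cV)=-u\,b^+V$. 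This is divisible by $u$, and $\[\d(b^+cV),b^+cV\]=-u\[b^+V,b^+cV\]=0$; these are exactly the conditions under which the gauge flow $s\mapsto\S_u^\circ\bull su^{-1}(b^+cV)$ of~\eqref{ode} has constant velocity $u^{-1}\d(b^+cV)$. Therefore
\begin{equation*}
  \S_u^\circ\bull u^{-1}(b^+cV) = \S_u^\circ + u^{-1}\d(b^+cV) = \S_u^\circ - b^+V ,
\end{equation*}
and substituting the formula for $\S_u^\circ$ gives the assertion.

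I expect the main obstacle to be the sign bookkeeping in the third step: one must track the parities of $b$, $c$ and their antifields through the sign conventions of the Soloviev bracket~\eqref{Soloviev} and its extension~\eqref{Soloviev2} in order to confirm that the contributions of $cb^+\p b$ and $cc^+\p c$ genuinely cancel rather than reinforce. I would also note that, because $b^+cV$ involves the antifield $b^+$, this final twist is not literally covered by Proposition~\ref{Aksz}; it is the two facts verified by hand---that $\d(b^+cV)$ is divisible by $u$ and that $\[\d(b^+cV),b^+cV\]$ vanishes---that make the twist affine and hence reduce it to adding $u^{-1}\d(b^+cV)=-b^+V$.
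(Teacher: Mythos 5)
Your argument is correct and reaches the right formula, but it inverts the order of operations relative to the paper's proof, and the two routes distribute the work differently. The paper performs the twist \emph{first}: it computes $(\S_u+\X_u)\bull u^{-1}W=\S_u+\X_u-b^+V-\[c\S_1,V\]$ directly from $d_u(u^{-1}W)=cV\eps$, $\[\S_u,u^{-1}W\]=-\[c\S_1,V\]$ and $\[\X_u,u^{-1}W\]=-b^+V-cV\eps$, and then transports the correction term through the two gauge transformations using only the affine property $(x+z)\bull y=x\bull y+e^{-\ad(y)}z$ of the gauge action (immediate from the defining series for $\bull$), the last step being the observation $b^+V+\[c\S_1,b^+V\]=e^{\ad(c\S_1)}b^+V$, so that applying $e^{-\ad(c\S_1)}$ returns exactly $-b^+V$. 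You instead transport the twist to the end and evaluate it on $\S_u^\circ$; this lets you quote Theorem~\ref{particle} as a black box, but it costs you two things. First, the commutation $x\bull u^{-1}W\bull y=x\bull y\bull(e^{-\ad(y)}u^{-1}W)$ needs more than ``conjugating an inner derivation gives an inner derivation'': since the gauge action depends on the gauge parameter through $dy$ and not only through $\ad(y)$, agreement of the conjugated differentials only pins down the result up to a central correction, and what one actually needs is the group-level identity $z\ast y=y\ast(e^{-\ad(y)}z)$, which holds because $\ast$ in Proposition~\ref{BCH} is a universal Lie series. Second, your route forces a direct evaluation of $\[\S_u^\circ,b^+cV\]$, which has more terms than the paper's $\[\S_u+\X_u,cV\]$; the cancellation of the $cb^+\p b$ and $cc^+\p c$ contributions that you flag as unverified does hold, and can be seen without sign-chasing by writing
\begin{equation*}
  \bigl(d_u+\ad(\S_u^\circ)\bigr)(b^+cV)
  = e^{-\ad(c\S_1)}e^{-\ad(\log(b^+)c^+c)}\,\d\bigl(e^{\ad(\log(b^+)c^+c)}e^{\ad(c\S_1)}(b^+cV)\bigr)
  = e^{-\ad(c\S_1)}e^{-\ad(\log(b^+)c^+c)}\,\d(cV) ,
\end{equation*}
with $\d=d_u+\ad(\S_u+\X_u)$, and then evaluating $\d(cV)=u(-b^+V-\[c\S_1,V\])$ --- but at that point you have essentially reproduced the paper's calculation in the opposite order. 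Your remaining checks ($e^{-\ad(\log(b^+)c^+c)}(cV)=b^+cV$, $\ad(c\S_1)(b^+cV)=0$, $d_u(b^+cV)=0$, $\[b^+V,b^+cV\]=0$, and the resulting affineness of the final twist) are all correct.
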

\begin{proof}
  We have $d_u(u^{-1}W)=cV\eps$,
  \begin{equation*}
    \[ \S_u , u^{-1}W \] = \[ \S_1 , cV \] = - \[ c\S_1 , V \]
  \end{equation*}
  and
  \begin{equation*}
    \[ \X_u , u^{-1}W \] = - b^+V - cV\eps .
  \end{equation*}
  It follows that
  \begin{equation*}
    ( \S_u + \X_u )\bull u^{-1}W
    = \S_u + \X_u - b^+V - \[ c\S_1 , V \]
  \end{equation*}
  and that
  \begin{multline*}
    ( \S_u + \X_u )\bull u^{-1}W\bull\log(b^+)c^+c
    = \bigl( \S_u + \X_u - b^+V - \[ c\S_1 , V \]
    \bigr)\bull\log(b^+)c^+c \\
    = \S_u + c(b^+\p b+c^+\p c) + uc^+ - b^+V - \[ c\S_1 , b^+V \] .
 \end{multline*}
 We see that
  \begin{align*}
    ( \S_u + \X_u )\bull u^{-1}W\bull\log(b^+)c^+c\bull c\S_1
    &= ( \S_u + c(b^+\p b+c^+\p c) + uc^+ )\bull c\S_1 \\
    & - e^{-\ad(c\S_1)} ( b^+V + \[ c\S_1 , b^+V \] ) .
  \end{align*}
  Since $b^+V + \[ c\S_1 , b^+C \]=e^{\ad(c\S_1)}b^+V$, the corollary
  follows.
\end{proof}

\begin{remark}
  Theorem~\ref{particle} generalizes to the global case without any
  difficulties: if $\SS_u$ satisfies the hypotheses of
  Theorem~\ref{global}, we have
  \begin{equation*}
    ( \SS_u + \X_u )\bull\log(b^+)c^+c\bull c\SS_1 = \SS_0
    + c(\D+b^+\p b+c^+\p c) + c\iota\SS_0 + uc^+ .
  \end{equation*}
\end{remark}

\begin{remark}
  After coupling to gravity, the covariant field theory $\S_u$, which
  is only defined if the two-form $\om=d\nu$ is symplectic, is seen to
  be equivalent to a covariant field theory which is defined for any
  one-form $\nu$ on $M$, without any condition that $d\nu$ is
  nondegenerate.
\end{remark}

Theorem \ref{particle} may be restated in the following suggestive
way.
\begin{proposition}
  We have
  \begin{equation*}
    ( \S_u + \X_u )\bull\log(b^+)c^+c\bull c\S_1
    = ( \S_u + \X_u )\bull(\log(b^+)c^+c\ast c\S_1) ,
  \end{equation*}
  where
  \begin{equation*}
    \log(b^+)c^+c\ast c\S_1 = \frac{\log(b^+)}{b^+-1} \bigl(
    c(\S_1+\X_1) - c^+c \bigr) .
  \end{equation*}
\end{proposition}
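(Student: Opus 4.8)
The statement comprises two equalities, which I would prove separately. The first, that $(\S_u+\X_u)\bull\log(b^+)c^+c\bull c\S_1$ equals $(\S_u+\X_u)\bull\bigl(\log(b^+)c^+c\ast c\S_1\bigr)$, will be immediate from Proposition~\ref{BCH}: the sum $\S_u+\X_u$ is the covariant field theory for the product $M\times T^*\R[1]$, hence a Maurer--Cartan element of the curved Lie superalgebra $\B[[u]]$, while $y:=\log(b^+)c^+c$ and $z:=c\S_1$ are odd elements of ghost number $-1$ (working in the completion of $\A(M\times T^*\R[1])$ that permits formal functions of $b^+$). All the content therefore lies in computing the Baker--Campbell--Hausdorff product $y\ast z$.

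First I would rewrite the target. Because $c$ is odd we have $c^2=0$, and since $\X_1=b^+c^++c^+c\,\eps$ this gives $c\X_1=b^+c^+c$; hence $c(\S_1+\X_1)-c^+c=c\S_1+(b^+-1)c^+c$, so the asserted value of $y\ast z$ is nothing but $y+\tfrac{\log(b^+)}{b^+-1}\,z$, and it suffices to prove $y\ast z=y+\tfrac{\log(b^+)}{b^+-1}\,z$. The computational core is then a short list of Soloviev brackets, which I would obtain from the derivation property of $\[\,\cdot\,,\,\cdot\,\]$ together with two elementary facts: a function of $b^+$ brackets trivially with anything free of $b$ (in particular with $c\S_1$), and $z$ is odd, so $\[z,z\]=0$. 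These yield, for all $k\ge 0$ and every function $g$ of $b^+$,
\begin{equation*}
  \[ y , (\log b^+)^k\,c\S_1 \] = -(\log b^+)^{k+1}\,c\S_1 , \qquad \[ g z , (\log b^+)^k\,c\S_1 \] = 0 ,
\end{equation*}
the first after stripping the $\log(b^+)$ and the factors $c^+,c$ off $y$ and reducing to the pairing $\[c^+,(\log b^+)^kc\S_1\]=\pm(\log b^+)^k\S_1$, the second because the inner bracket $\[z,(\log b^+)^kz\]$ is a multiple of $\[z,z\]=0$.

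From these identities a routine induction on $n$ gives $\ad(y+hz)^n z=(-\log b^+)^n z$ for every function $h$ of $b^+$, and therefore
\begin{equation*}
  \frac{\ad(y+hz)}{1-e^{-\ad(y+hz)}}\,z = \frac{-\log(b^+)}{1-e^{\log(b^+)}}\,z = \frac{\log(b^+)}{b^+-1}\,z ,
\end{equation*}
independently of $h$. I would then substitute the ansatz $y\ast(tz)=y+t\,\tfrac{\log(b^+)}{b^+-1}\,z$ into the integral formula of Proposition~\ref{BCH} applied with $z$ replaced by $tz$: the partial products $y\ast(stz)$ occurring there are of the form $y+(\text{function of }b^+)\,z$, so by the previous display the integrand collapses to the constant $t\,\tfrac{\log(b^+)}{b^+-1}\,z$, the right-hand side evaluates to $y+t\,\tfrac{\log(b^+)}{b^+-1}\,z$, and uniqueness of the BCH product identifies this with $y\ast(tz)$; taking $t=1$ gives the proposition. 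The step I expect to demand the most care is the Soloviev-bracket computation $\[y,(\log b^+)^kc\S_1\]=-(\log b^+)^{k+1}c\S_1$ --- getting both the overall sign and the independence of $k$ right, and checking via \eqref{Soloviev} and \eqref{Soloviev2} that no higher-jet terms contribute --- together with a clean statement of the completion in which the formal functions of $b^+$ live.
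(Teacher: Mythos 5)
Your proof is correct and follows essentially the same route as the paper: both reduce the claim to Proposition~\ref{BCH} plus the two key bracket facts $\ad(y)z=-\log(b^+)\,z$ and the vanishing of brackets of $z$ against (functions of $b^+$ times) $z$, from which the integrand in the BCH formula collapses to the constant $\tfrac{\log(b^+)}{b^+-1}\,z$, together with the rewriting $c\X_1=b^+c^+c$. The only cosmetic difference is that the paper obtains $\ad(y\ast tz)z=\ad(y)z$ directly from the group identity $e^{-\ad(y\ast tz)}=e^{-t\ad(z)}e^{-\ad(y)}$ rather than verifying an explicit ansatz for $y\ast tz$ and invoking uniqueness of the flow, and it leaves the final rewriting implicit where you spell it out.
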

\begin{proof}
  Suppose that for all $n\ge0$, we have
  \begin{equation}
    \label{zyz}
    \ad(z)\ad(y)^nz = 0 .
  \end{equation}
  It follows that
  \begin{equation*}
    e^{-t\ad(z)}e^{-\ad(y)}z = e^{-\ad(y)}z ,
  \end{equation*}
  and hence that
  \begin{equation*}
    \ad(y*tz)z = \ad(y)z .
  \end{equation*}

  By Proposition \ref{BCH}, we see that
  \begin{align*}
    y\ast z &= y + \int_0^1 \frac{\ad(y*tz)}
              {1-e^{-t\ad(z)}e^{-\ad(y)}} z \, dt \\
            &= y + \int_0^1 \frac{\ad(y)}{1-e^{-\ad(y)}} z \, dt \\
            &= y + \frac{\ad(y)}{1-e^{-\ad(y)}} z .
  \end{align*}
  
  Let $y=\log(b^+)c^+c$ and $z=c\S_1$. We have
  \begin{equation*}
    \ad(\log(b^+)c^+c)^n c\S_1 = (-\log(b^+))^nc\S_1 ,
  \end{equation*}
  and the hypothesis \eqref{zyz} is satisfied. Thus, we have
  \begin{align*}
    \log(b^+)c^+c\ast c\S_1
    &= \log(b^+)c^+c + \frac{\ad(\log(b^+)c^+c)}{e^{\ad(\log(b^+)c^+c)}-1}
      c\S_1 \\
    &= \log(b^+)c^+c + \frac{\log(b^+)c\S_1}{b^+-1} ,
  \end{align*}
  and the lemma follows.
\end{proof}

Using Corollary \ref{particleV}, we can generalize the discussion of
the particle in a flat spacetime discussed in the introduction,
allowing a curved target with a magnetic field. Let $U$ be an open
subset of $\R^n$, with coordinates $\{x^\mu\}_{1\le\mu\le n}$. Let
\begin{equation*}
  g = g_{\mu\nu}(x) dx^\mu \o dx^\nu
\end{equation*}
be a pseudo-Riemannian metric on $U$, and let $A=A_\mu dx^\mu$ be an
electromagnetic potential on $U$. Let $M=T^*U$ be the manifold with
coordinates $\{x^\mu\}$ and $\{p_\mu\}$. We consider the covariant
field theory $\S_u$ associated to the one-form
\begin{equation*}
  \nu = (p_\mu+A_\mu)dx^\mu .
\end{equation*}
The corresponding symplectic form $\om=d\nu$ equals
\begin{equation*}
  \om = dp_\mu \, dx^\mu + \half F_{\mu\nu}(x) dx^\mu \, dx^\nu ,
\end{equation*}
where
\begin{equation*}
  F_{\mu\nu}(x) = \p_\mu A_\nu(x) - \p_\nu A_\mu(x)
\end{equation*}
is the electromagnetic field. The associated Poisson bracket is
\begin{equation*}
  \{ f , g \} = \frac{\p f}{\p x^\mu} \frac{\p g}{\p p_\mu} - \frac{\p
    f}{\p p_\mu} \frac{\p g}{\p x^\mu} + F_{\mu\nu}(x) \frac{\p f}{\p
    p_\mu} \frac{\p g}{\p p_\nu} ,
\end{equation*}
and the covariant field theory, after coupling to gravity, equals
\begin{align*}
  \S_u + \X_u
  &= ( p_\mu + A_\mu ) \p x^\mu + b\p c \\
  &+ u \bigl( x^+_\mu p^{+\mu} + \half F_{\mu\nu} p^{+\mu} p^{+\nu} +
    b^+c^+ \bigr) + u \bigl( (p_\mu+A_\mu)p^{+\mu} + c^+c \bigr) \eps .
\end{align*}
The field theory describing the particle is obtained by twisting this
field theory by $u^{-1}cV$, where
\begin{equation*}
  V = \half g^{\mu\nu} p_\mu p_\nu .
\end{equation*}

The proof of Theorem~\ref{particle} occupies the remainder of the
section. We use the following formulas, whose proofs are similar to
the proof that $\S_u$ is a covariant field theory:
\begin{gather}
  \label{c}
  d_u(c\S_1) + \[ \S_u , c\S_1 \] = c ( \D + \iota\S_u )
  \intertext{and}
  \label{cc}
  \[ d_u(c\S_1) + \[ \S_u , c\S_1 \] , c\S_1 \] = 2c\p c .
\end{gather}

Let us calculate $(\S_u+c(b^+\p b+c^+\p c)+uc^+)\bull\tau c\S_1$. We
have
\begin{multline*}
  ( \S_u + c(b^+\p b+c^+\p c)+uc^+ )\bull\tau c\S_1 \\
  = \S_u + c(b^+\p b + c^+\p c)+uc^+ + \sum_{n=0}^\infty
  \frac{(-\tau)^{n+1}}{(n+1)!} \ad(c\S_1)^n \d(c\S_1) ,
\end{multline*}
where
\begin{equation*}
  \d(c\S_1) = d_u(c\S_1) + \[ \S_u + c(b^+\p b+c^+\p c)+uc^+ , c\S_1 \] .
\end{equation*}
It follows from \eqref{c} that
\begin{equation*}
  \d(c\S_1) = c ( \D + \iota\S_u ) - c\p c\S_1 - u \S_1 .
\end{equation*}
Since $c^2=0$, we see that both $\[c\p c\S_1,c\S_1\]$ and
$\[ c\iota\S_1 , c\S_1 \]$ vanish. By \eqref{cc}, we see that
\begin{align*}
  \[ \d(c\S_1) , c\S_1 \] & = \[ c ( \D + \iota\S_u ) , c\S_1 \] -
    u \[ \S_1 , c\S_1 \] \\
  & = 2c\p c\S_1 - 2uc \iota\S_1 .
\end{align*}
It is clear that
\begin{equation*}
  \[ \[ \d(c\S_1) , c\S_1 \] , c\S_1 \] = 0 .
\end{equation*}
In summary, we have
\begin{align*}
  ( \S_u + c(b^+\p b+c^+\p c)+uc^+ )\bull\tau c\S_1
    &= \S_0 + c(b^+\p b+c^+\p c) + u ( \S_1 + c^+ ) \\
    &+ \tau \bigl( c ( \D + \iota\S_u ) - c\p c\S_1 - u \S_1
    \bigr) + \frac{\tau^2}{2} \bigl( 2c\p c\S_1 - 2uc \iota\S_1 \bigr) \\
    &= \S_0 + c(\tau\D+b^+\p b+c^+\p c) + \tau c\iota\S_0 + uc^+ \\
    &+ (1-\tau) \Bigl( u\S_1 + \tau  uc \iota\S_1 - \tau c\p c\S_1
    \Bigr) .
\end{align*}
Setting $\tau=1$, we obtain Theorem~\ref{particle}.

\section{The spinning particle as a covariant field theory}

In this section, we study a supersymmetric version of the results of
the last section. Supergravity in one dimension has as its fields the
graviton $e$ and a fermionic field $\chi$, the gravitino, which like
$e$ is transforms as a one-form on the world-line. In addition to the
ghost $c$, there is also a superghost $\gamma$, which is a bosonic
field of ghost number $1$ that transforms as a world-line scalar.

Consider the graded manifold
$T^*\Pi\R[1]\cong\Pi\R[1]\times\Pi\R[-1]$, with bosonic coordinates
$\beta$ and $\gamma$, respectively of ghost number $-1$ and $1$.
Consider the covariant field theory $\Xi_u$ associated to the one-form
$\nu=\gamma\,d\beta$:
\begin{equation*}
  \Xi_u = \Xi_0 + u\Xi_1 = \gamma\p\beta +
  u(\beta^+\gamma^++\gamma^+\gamma\eps) .
\end{equation*}

Theorem~\ref{particle}, applied to the graded supermanifold
$M\times T^*\Pi\R[1]$, shows that
\begin{multline*}
  ( \S_u + \Xi_u + \X_u )\bull\log(b^+)c^+c\bull c(\S_1+\Xi_1)
  = \S_0 \\
  + c(\D+\beta^+\p\beta+\gamma^+\p\gamma+b^+\p b+c^+\p c +
  \iota(\S_0+\gamma\p\beta) ) + \gamma\p\beta + uc^+ .
\end{multline*}
In order to obtain the supersymmetric analogue of the particle, called
the spinning particle, we choose a function $Q\in\CO(M)$ of ghost
number $0$ and odd parity. We twist the covariant field theory
$\S_u+\Xi_u+\X_u$ by an element $u^{-1}W$, where $W$ is given by the
formula (cf.\ Cattaneo and Schiavina \cite{CS}*{Section 6.3}, and
\cite{curved}*{Eq.~(5)})
\begin{equation}
  \label{supertwist}
  W = \half c \{ Q , Q \} + \gamma Q - b\gamma^2 .
\end{equation}
The term linear in $b$ is chosen in such a way as to guarantee that
$\{W,W\}=0$.

Applying the twist $u^{-1}W$ to the covariant field theory
$\S_u+\Xi_u+\X_u$ gives the twisted covariant field theory
\begin{align*}
  ( \S_u + \Xi_u + \X_u )\bull u^{-1}W
  &= \S_u + \Xi_u + \X_u \\
  &- \half b^+\{Q,Q\} + \half \[ \{ Q , Q \} , c\S_1 \] - \beta^+ Q +
    \gamma \[\S_1,Q\] \\
  &+ c^+\gamma^2 - 2b\beta^+\gamma + b\gamma^2\eps .
\end{align*}
Gauging by $\log(b^+)c^+c$ followed by $c\Xi_1$, we obtain
\begin{multline*}
  ( \S_u + \Xi_u + \X_u )\bull u^{-1}W\bull\log(b^+)c^+c\bull c\Xi_1 \\
  \begin{aligned}
    &= \S_u + c(\beta^+\p\beta+\gamma^+\p\gamma+b^+\p b+c^+\p c
    -\gamma\p\beta\eps) + uc^+ \\
    &- \half b^+\{Q,Q\} + \half \[ b^+ \{ Q , Q \} , c\S_1 \] -
    \beta^+ Q + ( \gamma + c\beta^+ - c\gamma\eps) \[\S_1,Q\] \\
    &+ (b^+)^{-1} ( c^+ - \Xi_1) \gamma^2 - 2b\beta^+\gamma -
    (b^+)^{-1} ( c^+ - \Xi_1) c \gamma^2\eps .
  \end{aligned}
\end{multline*}
Gauging by $c\S_1$, and observing that
$c\Xi_1\ast c\S_1=c(\S_1+\Xi_1)$, we obtain
\begin{multline*}
  ( \S_u + \Xi_u + \X_u )\bull u^{-1}W\bull\log(b^+)c^+c\bull
  c(\S_1+\Xi_1) \\
  \begin{aligned}
    &= \S_0 + c(\D+\beta^+\p\beta+\gamma^+\p\gamma+b^+\p b+c^+\p c +
    \iota\S_0-\gamma\p\beta\eps) + uc^+ \\
    &- \half b^+\{Q,Q\} - \beta^+ Q + \gamma \[ \S_1 , Q \] \\
    &+ (b^+)^{-1} ( c^+ - \S_1 - \Xi_1) \gamma^2 (1-c\eps) -
    2b\beta^+\gamma .
  \end{aligned}
\end{multline*}
Substituting the graviton field $e$ for $b^+$ and the gravitino field
$\chi$ for $\beta^+$, and projecting to $\F$, we obtain the action of
the spinning particle:
\begin{multline}
  \label{spinningparticle}
  \tint ( S_u + \Xi_u + \X_u )\bull u^{-1}W\bull\log(e)c^+c\bull
  c(S_1+\chi\gamma^+) \\
  \begin{aligned}
    &= \tint \bigl( (-1)^{\pa(\xi^a)} \nu_a\p\xi^a - \half e\{Q,Q\} -
    \chi Q
    \\ &
    + c(\D-\chi\p\chi^++\gamma^+\p\gamma-e\p e^++c^+\p c)  \\
    &+ \gamma \bigl( - \xi^+_a \pi^{ab} \p_b Q + 2e^+\chi )
    + e^{-1} \gamma^2 ( c^+ + \gamma\p\chi^+ - \half \xi^+_a \pi^{ab}
    \xi^+_b ) \bigr) .
  \end{aligned}
\end{multline}

Solutions of the classical master equation \eqref{master} are
classified by their rank, that is, their degree as functions of the
antifields. Gauge theories whose symmetries close off-shell have rank
1, but the action of the spinning particle \eqref{spinningparticle}
has rank~2, owing to the presence of term
\begin{equation*}
  - \half \tint e^{-1} \gamma^2 \xi^+_a \pi^{ab} \xi^+_b .
\end{equation*}
It is possible that by adjoining auxilliary fields, this covariant
field theory may be shown to be equivalent to a covariant field theory
whose action is of rank 1.

We close this section by showing how, as an application of the above
formulas, the spinning particle may be expressed as a covariant field
theory. We discuss only the local case, leaving its globalization,
which uses Theorem~\ref{global}, to the reader.

We work, as in the last section, with an open subset $U$ of $\R^n$,
with coordinates $\{x^\mu\}$, pseudo-Riemannian metric
\begin{equation*}
  g = g_{\mu\nu}(x) dx^\mu \o dx^\nu ,
\end{equation*}
and electromagnetic field $A=A_\mu\,dx^\mu$.

Denote the basis of $V=\R^n$ by $\{e_a\}$, and let
$\eta_{ab}=\eta(e_a,e_b)$ be an inner product on $V$, of the same
signature as the metric $g_{\mu\nu}$ on $U$. Let
\begin{equation*}
  \theta^a = \theta_\mu^a\,dx^\mu \in \Om^1(U,V)
\end{equation*}
be a moving frame, that is, a one-form defining an isometry between
$T_xU$ and $V$ at each point $x\in U$, so that
\begin{equation*}
  g_{\mu\nu} = \eta_{ab} \theta_\mu^a \theta_\nu^b .
\end{equation*}
Denote by $\theta^\mu_a$ the inverse of $\theta_\mu^a$, in the sense
that
\begin{equation*}
  \theta_\mu^a\theta^\mu_b=\delta^a_b .
\end{equation*}

The connection one-form
$\om^a{}_b=\om_\mu{}^a{}_b\,dx^\mu\in\Omega^1(U,\End(V))$ is the
antisymmetric matrix of one-forms on $U$ characterized by two
conditions: it is compatible with the metric $\eta$,
\begin{equation*}
  \om^b{}_a = - \eta_{a\tilde{a}} \, \eta^{b\tilde{b}}
  \om^{\tilde{a}}{}_{\tilde{b}} ,
\end{equation*}
and satisfies the first Cartan structure equation
\begin{equation*}
  d\theta^a + \om^a{}_b \, \theta^b = 0 .
\end{equation*}

Let $M$ be the supermanifold $T^*U\times\Pi V$, with coordinates
$(x^\mu,p_\mu,\psi^a)$, and consider the one-form
\begin{equation*}
  \nu = (p_\mu+A_\mu)dx^\mu + \half \eta_{ab} \psi^a d\psi^b .
\end{equation*}
The symplectic form $d\nu$ on $M$ equals
\begin{equation*}
  d\nu = dp_\mu\,dx^\mu + \half F_{\mu\nu}dx^\mu dx^\nu + \half \eta_{ab}
  d\psi^a d\psi^b .
\end{equation*}
The associated Poisson bracket is
\begin{equation*}
  \{ f , g \} = \frac{\p f}{\p x^\mu} \frac{\p g}{\p p_\mu} - \frac{\p
    f}{\p p_\mu} \frac{\p g}{\p x^\mu} + F_{\mu\nu}(x) \frac{\p f}{\p
    p_\mu} \frac{\p g}{\p p_\nu} + (-1)^{\pa(f)} \eta^{ab}
  \frac{\p f}{\p\psi^a} \frac{\p g}{\p\psi^b} ,
\end{equation*}
and the covariant field theory, after coupling to supergravity, equals
\begin{multline*}
  \S_u + \Xi_u + \X_u = ( p_\mu + A_\mu ) \p x^\mu - \half \eta_{ab}
  \psi^a \p\psi^b + \beta\p\gamma + b\p c \\
  \begin{aligned}
    &+ u \bigl( x^+_\mu p^{+\mu} + \half F_{\mu\nu} p^{+\mu} p^{+\nu}
    + \half \eta^{ab} \psi^+_a\psi^+_b + \beta^+\gamma^+ + b^+c^+ \bigr) \\
    &+ u \bigl( (p_\mu+A_\mu)p^{+\mu} + c^+c \bigr) \eps .
  \end{aligned}
\end{multline*}
Let
\begin{equation*}
  \tilde{p}{}_\mu = p_\mu + \half \om_{\mu a b} \psi^a\psi^b .
\end{equation*}
The spinning particle in a curved background \cite{curved} is the
twisted covariant field theory
\begin{equation*}
  (\S_u+\Xi_u+\X_u)\bull u^{-1}(\half c\{Q,Q\}+\gamma Q+b\gamma^2) ,
\end{equation*}
where $Q = \theta^\mu_a \psi^a \tilde{p}{}_\mu$. Observe that the
quantization of $Q$ is is the Dirac operator on $U$. The proof of
Lichnerowicz's formula for the square of the Dirac operator shows that
\begin{equation*}
  \{ \D , \D \} = \theta^\mu_a(x) \theta^\nu_b(x) \bigl( \eta^{ab}
  \tilde{p}{}_\mu \tilde{p}{}_\nu - \half F_{\mu\nu}(x) \psi^a \psi^b
  \bigr) .
\end{equation*}

  \bigskip
  \paragraph*{Acknowledgements} {\small \setstretch{1.05}
  I am grateful to Chris Hull for introducing me to the first-order
  formalism of the spinning particle, and to Si Li, Pavel Mn\"ev and
  Sean Pohorence for further insights. This research is partially
  supported by EPSRC Programme Grant EP/K034456/1 ``New Geometric
  Structures from String Theory,'' a Fellowship of the Simons
  Foundation, and Collaboration Grants \#243025 and \#524522 of the
  Simons Foundation. Parts of this paper were written while the author
  was visiting the Yau Mathematical Sciences Center at Tsinghua
  University and the Department of Mathematics of Columbia University,
  as a guest of Si Li and Mohammed Abouzaid respectively.

}


\begin{bibdiv}
  \begin{biblist}

    \bib{AKSZ}{article}{
      author={Alexandrov, M.},
      author={Schwarz, A.},
      author={Zaboronsky, O.},
      author={Kontsevich, M.},
      title={The geometry of the master equation and topological quantum field
        theory},
      journal={Internat. J. Modern Phys. A},
      volume={12},
      date={1997},
      number={7},
      pages={1405--1429},
    }

\bib{AS}{article}{
   author={Axelrod, Scott},
   author={Singer, I. M.},
   title={Chern-Simons perturbation theory},
   conference={
      title={Proceedings of the XXth International Conference on
      Differential Geometric Methods in Theoretical Physics, Vol.\ 1, 2},
      address={New York},
      date={1991},
   },
   book={
      publisher={World Sci. Publ., River Edge, NJ},
   },
   date={1992},
   pages={3--45},
}

 \bib{BG}{article}{
   author={Bousfield, A. K.},
   author={Gugenheim, V. K. A. M.},
   title={On ${\rm PL}$ de Rham theory and rational homotopy type},
   journal={Mem. Amer. Math. Soc.},
   volume={8},
   date={1976},
   number={179},
}

\bib{CF1}{article}{
   author={Cattaneo, Alberto S.},
   author={Felder, Giovanni},
   title={A path integral approach to the Kontsevich quantization formula},
   journal={Comm. Math. Phys.},
   volume={212},
   date={2000},
   number={3},
   pages={591--611},
}

\bib{CF2}{article}{
   author={Cattaneo, Alberto S.},
   author={Felder, Giovanni},
   title={On the AKSZ formulation of the Poisson sigma model},
   note={EuroConf\'erence Mosh\'e Flato 2000, Part II (Dijon)},
   journal={Lett. Math. Phys.},
   volume={56},
   date={2001},
   number={2},
   pages={163--179},
 }

\bib{CMR}{article}{
   author={Cattaneo, Alberto S.},
   author={Mnev, Pavel},
   author={Reshetikhin, Nicolai},
   title={Classical BV theories on manifolds with boundary},
   journal={Comm. Math. Phys.},
   volume={332},
   date={2014},
   number={2},
   pages={535--603},
}

\bib{CS}{article}{
   author={Cattaneo, Alberto S.},
   author={Schiavina, Michele},
   title={On time},
   journal={Lett. Math. Phys.},
   volume={107},
   date={2017},
   number={2},
   pages={375--408},
}
 
 \bib{CSS}{article}{
   author={Cattaneo, Alberto S.},
   author={Schiavina, Michele},
   author={Selliah, Iswaryaa},
   title={BV-equivalence between triadic gravity and BF theory in
     three dimensions},
   eprint={arXiv:1707.07764}
 }
 
\bib{Darboux}{article}{
   author={Getzler, Ezra},
   title={A Darboux theorem for Hamiltonian operators in the formal calculus
   of variations},
   journal={Duke Math. J.},
   volume={111},
   date={2002},
   number={3},
   pages={535--560},
}

  \bib{cohomology}{article}{
    author={Getzler, Ezra},
    title={The Batalin--Vilkovisky cohomology of the spinning
      particle},
    journal={J. High Energy Phys.},
    date={2016},
    volume={2016},
    number={6},
    pages={1--17},
  }

  \bib{curved}{article}{
    author={Getzler, Ezra},
    title={The spinning particle with curved target},
    journal={Commun. Math. Phys.},
    date={2017},
    volume={352},
    number={1},
    pages={185--199}
  }

  \bib{superparticle}{article}{
    author={Getzler, Ezra},
    author={Pohorence, Sean},
    title={Covariance of the classical Brink--Schwarz superparticle},
    note={to appear}
  }

  \bib{KMGZ}{article}{
    author={Kruskal, M. D.},
    author={Miura, R. M.},
    author={Gardner, C. S.},
    author={Zablusky, N. J.},
    title={Korteweg-de Vries equation and
      generalizations. V. Uniqueness and nonexistence of polynomial
      conservation laws},
    journal={J. Math. Phys.},
    volume={11},
    date={1970},
    pages={952--960}
    }

\bib{Olver}{book}{
   author={Olver, Peter J.},
   title={Applications of Lie groups to differential equations},
   series={Graduate Texts in Mathematics},
   volume={107},
   publisher={Springer-Verlag, New York},
   date={1986},
}


\bib{Sullivan}{article}{
   author={Sullivan, Dennis},
   title={Infinitesimal computations in topology},
   journal={Inst. Hautes \'Etudes Sci. Publ. Math.},
   number={47},
   date={1977},
   pages={269--331},
}

\bib{Tao}{book}{
  author={Tao, Terence},
  title={Hilbert's fifth problem and related topics},
  series={Graduate Studies in Mathematics},
  volume={153},
  publisher={American Mathematical Society, Providence, RI},
  date={2014},
}

\bib{Whitney}{book}{
   author={Whitney, Hassler},
   title={Geometric integration theory},
   publisher={Princeton University Press, Princeton, N.J.},
   date={1957},
}

\end{biblist}
\end{bibdiv}

\end{document}